\documentclass[11pt,cls,onecolumn]{IEEEtran}

\usepackage{subfigure,cite,graphicx,amsmath,amssymb,eufrak,mathrsfs,epsfig,dblfloatfix}
\usepackage{amsmath,amssymb,mathrsfs,graphicx}
\usepackage{psfrag}

\DeclareMathOperator\C{\sf C} \DeclareMathOperator\E{\sf E}

\newtheorem{definition}{Definition}
\newtheorem{theorem}{Theorem}
\newtheorem{example}{Example}
\newtheorem{remark}{Remark}
\newtheorem{lemma}{Lemma}

\newtheorem{corollary}{Corollary}

\begin{document}
\title{Computation Over Gaussian Networks With Orthogonal Components}

\author{Sang-Woon Jeon\IEEEmembership{, Member, IEEE}, Chien-Yi Wang\IEEEmembership{, Student Member, IEEE}, and\\ Michael Gastpar\IEEEmembership{, Member, IEEE}
\thanks{This work has been supported in part by the European ERC Starting Grant 259530-ComCom. The first author was also funded in part by the MSIP (Ministry of Science, ICT \& Future Planning), Korea in the ICT R \& D Program 2013.}
\thanks{The material in this paper was presented in part at the Information Theory and Applications Workshop (ITA), San Diego, CA, February 2013 and the IEEE International Symposium on Information Theory (ISIT), Turkey, Istanbul, July 2013.}
\thanks{S.-W. Jeon is with the Department of Information and Communication Engineering, Andong National University, South Korea (e-mail: swjeon@anu.ac.kr).}

\thanks{C.-Y. Wang is with the School of Computer and Communication Sciences, Ecole Polytechnique F{\'e}d{\'e}rale de Lausanne (EPFL), Lausanne,
Switzerland (e-mail: chien-yi.wang@epfl.ch).}%
\thanks{M. Gastpar is with the School of Computer and Communication Sciences, Ecole Polytechnique F{\'e}d{\'e}rale de Lausanne (EPFL), Lausanne,
Switzerland and the Department of Electrical Engineering and Computer Sciences, University of California, Berkeley, CA, USA (e-mail: michael.gastpar@epfl.ch).}
}
\maketitle

\IEEEpeerreviewmaketitle
%-----------------------------------------------------------------------------

\begin{abstract}
Function computation of arbitrarily correlated discrete sources over Gaussian networks with orthogonal components is studied.
Two classes of functions are considered: the arithmetic sum function and the type function. 
The arithmetic sum function in this paper is defined as a set of multiple weighted arithmetic sums, which includes averaging of the sources and estimating each of the sources as special cases.
The type or frequency histogram function counts the number of occurrences of each argument, which yields many important statistics such as mean, variance, maximum, minimum, median, and so on.
The proposed computation coding first abstracts Gaussian networks into the corresponding modulo sum multiple-access channels via nested lattice codes and linear network coding and then computes the desired function by using linear Slepian--Wolf source coding.
For orthogonal Gaussian networks (with no broadcast and multiple-access components), the computation capacity is characterized for a class of networks.
For Gaussian networks with multiple-access components (but no broadcast), an approximate computation capacity is characterized for a class of networks.
\end{abstract}

\begin{IEEEkeywords}
Distributed averaging, function computation, joint source--channel coding, lattice codes, linear source coding, network coding, sensor networks.  
\end{IEEEkeywords}

%-----------------------------------------------------------------------------
\section{Introduction} \label{sec:intro}
In wireless sensor networks, the goal of communication is typically for a fusion center to learn a {\em function}
of the sensor observations, rather than the raw observations themselves. Examples include distributed averaging,
alarm detection, environmental monitoring, and so on. The fundamental paradigm of digital communication
suggests that each sensor should independently compress its observations (using sophisticated compression
techniques, taking into account possible correlations in the observations as well as the fact that the
fusion center is only interested in a function of the observations), whereupon these compressed versions
are communicated reliably (at negligible error probability) to the fusion center. For point-to-point communication,
this architecture has been shown to be optimal by Shannon~\cite{Shannon:48}, a result that is sometimes referred to
as the {\em source--channel separation theorem.} For general networked communication, however, it is well known
that this digital communication paradigm leads to suboptimal performance, see e.g.~\cite{Cover:80}.
Furthermore, in terms of the number of nodes in the network, the suboptimality can be dramatic~\cite{Gastpar:08}.
Hence, for the communication problem where a fusion center needs to learn a function of the sensor observations,
it is beneficial to consider {\em joint source--channel coding.}

Communication strategies for the problem of function computation over networks have been actively studied in
the literature, see e.g. \cite{Korner:79,Ahlswede:83,Nazer:07,Nazer:11,Zhan:13} and the reference therein.
For one class of strategies of function computation over wireless networks, the essence is to exploit the superposition
property of wireless channels to more efficiently compute the desired function.
Roughly speaking, previous work in this area can be categorized into three classes: the modulo-$p$ sum computation over (noisy) modulo-$p$ sum networks \cite{Korner:79,Ahlswede:83,Nazer:07,Suh:12,Zhan:13}, the modulo-$p$ sum computation over Gaussian networks assuming an arbitrarily large $p$ \cite{Nazer:11,Zhan:13}, and the sum of Gaussian sources over Gaussian networks under the mean squared error distortion  \cite{Soundararajan:12,Zhan:13}.       
All these works rely on joint source--channel coding in order to exploit the \emph{similarity between sum-type functions and the superposition property of wireless channels}.
We can easily find examples that this joint source--channel coding approach significantly improves an achievable computation rate or decrease an achievable  distortion compared to the source--channel separation approach.

In spite of the previous work, however, it is still unclear how to efficiently compute fundamental sample statistics such as sample mean, variance, maximum, minimum, and so on  over Gaussian networks.
As mentioned before, many sensor applications are interested in the sample mean, for instance, average temperature from several temperature readings.
For alarm detection, a relevant function will be the maximum or minimum value among the measurements.
One naive approach is to estimate each of the measurements separately, which is universal in the sense that any function of the measurements can be deduced accordingly. 
Unfortunately, it turns out that this naive approach is quite suboptimal in terms of computation rate for most functions of interest.
Another extreme approach is to tackle each function case by case, but we may want to avoid this approach too since there exist numerous important functions to be considered. 
Therefore, it would be nice to come up with a  general coding scheme that is able to compute a broad class of functions including the above fundamental functions but at the same time provide a better computation rate than the separation-based computation.

To achieve this goal, we focus on computing \emph{the type or frequency histogram function} in this paper.
For a better understanding, consider the type computation over the Gaussian multiple-access channel (MAC) depicted in Fig. \ref{figs:averaging}.
The  $K$ sensors observe their discrete sources $S_1,S_2,\cdots,S_K\in\{0,1,\cdots, p-1\}$, which can be arbitrarily correlated to each other, and the fusion center wishes to reliably compute its type, represented as 
\begin{equation} \label{eq:type_sum}
\left(\sum_{i=1}^K\mathbf{1}_{S_i=0},\sum_{i=1}^K\mathbf{1}_{S_i=1},\cdots,\sum_{i=1}^K\mathbf{1}_{S_i=p-1}\right),
\end{equation}
where $\mathbf{1}_{(\cdot)}$ denotes the indicator function of an event.
As pointed out in \cite{Giridhar:05}, computing the type function is very powerful since it yields many important statistics such as sample mean, maximum, minimum, variance, median, mode, and so on. 
Basically, \emph{any symmetric function whose function value is invariant with respect to permutations of its arguments is computable from the type function}.
As seen in \eqref{eq:type_sum}, the type function consists of multiple arithmetic sums of indicator functions, which can be regarded as binary sources.
Therefore, a fundamental question for the type computation  is how to exploit the similarity between the arithmetic sum of discrete sources and the superposition property of real-valued transmit signals corrupted by additive noise.

\begin{figure}[t!]
\begin{center}
%\footnotesize
%\psfrag{m}[b]{$M$}
\includegraphics[scale=1.3]{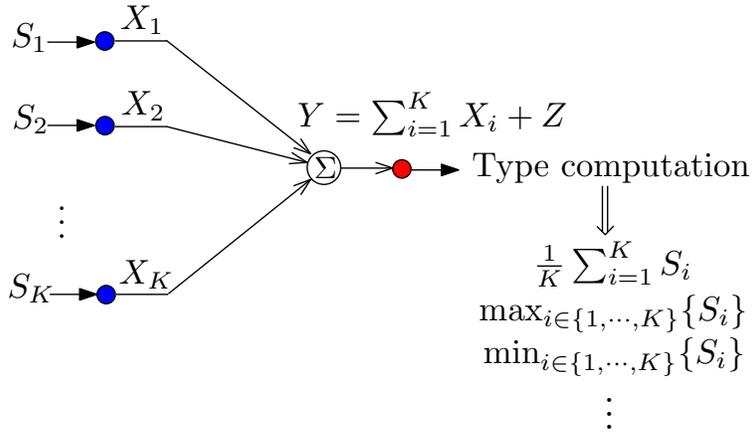}
\end{center}
\vspace{-0.15in}
\caption{Type computation over the Gaussian MAC.}
\label{figs:averaging}
\vspace{-0.1in}
\end{figure}

In this paper, we consider the computation of a more general class of functions over a general Gaussian network assuming some orthogonal components, which includes the problem in Fig. \ref{figs:averaging} as a special case.
Two classes of desired functions are considered: the arithmetic sum function and the type function. 
The former in this paper is defined as a set of multiple weighted arithmetic sums, which includes averaging of the sources and estimating each of the sources as special cases.
The latter is counting the number of occurrences of each argument among the sources, see \eqref{eq:type_sum}.
Regarding the channel model, we consider two types of Gaussian channels.
The first model is orthogonal Gaussian networks in which there is no broadcast and multiple-access component, which is equivalent to bit-pipe wired networks \cite{Ahlswede:00,Li:03,Koetter:03}.
The second model is Gaussian networks with multiple-access components (and no broadcast component), which includes Gaussian MACs, more generally Gaussian tree networks as special cases \cite{Nam:11,Zhan:13}.

\subsection{Contribution}
The main contributions of the paper is as follows.
\begin{itemize}
\item For orthogonal Gaussian single-hop networks, we propose a general computation code which includes both  Slepian--Wolf source coding and  K\"{o}rner--Marton linear source coding for computing. An example is presented to demonstrate the benefit of introducing K\"{o}rner--Marton linear source coding for computing when the sources are correlated. 

\item We extend  K\"{o}rner--Marton linear source coding for computing to general orthogonal Gaussian networks incorporated with linear network coding at each relay node. 
We characterize the computation capacity for a class of networks. 
The result demonstrates that, even without multiple-access component, K\"{o}rner--Marton linear source coding for computing is still beneficial for a broad class of  relay networks.
 
\item For Gaussian MACs, we propose a  computation code that first abstracts the original Gaussian MAC into the corresponding modulo sum channel via lattice codes and then applies K\"{o}rner--Marton linear source coding for computing on top of the transformed channel.
We show that the proposed computation code provides a much better computation rate than the separation-based computation, especially when the number of sources becomes large.

%We characterize an approximate computation capacity for a certain class of channels and also present an example that  the proposed computation code can outperform the separation-based approach.

\item We extend the proposed computation code for Gaussian MACs to general Gaussian networks with multiple-access components. For this, we establish a general transformation method from Gaussian networks with multiple-access components into the corresponding modulo sum channels. On top of this transformed network, we apply the computation code proposed for Gaussian MACs. 
For a class of networks, we characterize an approximate computation capacity that provides a bounded gap from computation capacity, independent of power $P$. 
\end{itemize}

\subsection{Related Work}
In his seminal work \cite{Shannon:48}, Shannon showed that separation of source and channel coding is optimal for discrete memoryless point-to-point  channels.
However, source--channel separation is not optimal for general networks, for instance, the problems of sending correlated sources over MACs \cite{Lapidoth:10,Lim:10} or broadcast channels (BCs) \cite{Lapidoth2:10,Tian:11}. 
That is, joint source--channel coding is essentially required for sending correlated sources over networks.
Furthermore, it has been proved that an uncoded transmission scheme, a simple way of joint source-channel coding, is optimal or near-optimal for estimating  a source from several correlated observations over Gaussian networks \cite{Gastpar:05,Gastpar:08}.   

Function computation has been actively studied in the source coding perspective \cite{Korner:79,Ahlswede:83,Han:87,Orlitsky:01,Cuff:09,Ma:11,Huang:12,Huang2:12}.
In particular, computing the modulo-two sum has been considered in \cite{Korner:79} under the distributed source coding framework, which captures the potential of linear source coding \cite{Csiszar:82} for function computation.
A more general achievability has been proposed for the modulo-two sum computation in \cite{Ahlswede:83}  and for a general discrete function in \cite{Huang:12,Huang2:12}.
In \cite{Orlitsky:01}, computing a general function with the help of side information has been studied. 
Function computation has been also considered in the context of cascade source coding \cite{Cuff:09} and interactive source coding \cite{Ma:11}. 

The modulo sum or more generally linear function computation has been recently extended to relay networks under various channel models such as bit-pipe wired networks \cite{Appuswamy:11,Appuswamy3:11,Kowshik:12}, linear finite field networks \cite{Zhan:13}, Gaussian networks assuming no broadcast component \cite{Zhan:13} by incorporating linear network coding \cite{Ahlswede:00,Li:03,Koetter:03,Ho:06,Avestimehr:11} at each relay node.
Function multicasting has been studied for linear finite field interference channels \cite{Suh:12} and for undirected graphs \cite{Kannan:13,Kannan2:13}.
A more general classes of function computation over bit-pipe wired networks has been considered in \cite{Appuswamy:11,Appuswamy2:11}.

Scaling laws on function computation has been studied based on the collision model \cite{Giridhar:05,Ma:12},  in which  concurrent transmission from multiple senders is assumed to cause a collision and, therefore, is not allowable. In particular, it has been shown that the order of $\frac{1}{\log K}$ scaling law is achievable as the number of sources $K$ increases for the type or frequency histogram computation over collocated collision networks \cite{Giridhar:05,Ma:12}.
Recently, it has been shown that non-vanishing scaling law is achievable for the type-threshold function computation over collocated Gaussian network even as $K$ tends to infinity \cite{Wang:13}. 

The potential of linear source coding has been also captured by Nazer and Gastpar in \cite{Nazer:07}, applying the linear source coding in \cite{Korner:79} for the function computation over MACs.
An efficient way of computing the modulo sum or the sum of Gaussian sources over Gaussian MACs is to apply lattice codes \cite{Nazer:07,Soundararajan:12}, see also \cite{Erez:04,Erez:05,Nam:10} for lattice code construction. 
Lattice-based network computation has been recently extended to multiple receivers called compute-and-forward \cite{Nazer:11} in which each relay computes or decodes linear combination of the sources.
In \cite{Zhan:13}, a similar lattice code construction has been used for computing a linear function over linear finite field networks and the sum of Gaussian sources over Gaussian networks.

%-----------------------------------------------------------------------------
\begin{figure}[t!]
\begin{center}
%\footnotesize
%\psfrag{m}[b]{$M$}
\includegraphics[scale=1.3]{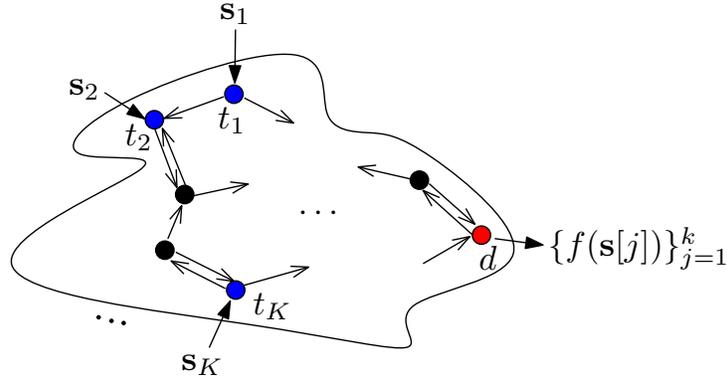}
\end{center}
\vspace{-0.15in}
\caption{Computation over a network in which node $t_i$ observes the length-$k$ source $\mathbf{s}_i=[s_i[1],\cdots,s_i[k]]^T$ and node $d$ wishes to compute the desired function $\{f(\mathbf{s}[j])\}_{j=1}^k$, where $\mathbf{s}[j]=[s_1[j],\cdots,s_K[j]]^T$.}
\label{figs:network}
\vspace{-0.1in}
\end{figure}

\section{Problem Formulation} \label{sec:prob}
Throughout the paper, we denote $[1:n]:=\{1,2,\cdots,n\}$, ${\sf C}(x):=\frac{1}{2}\log (1+x)$, and ${\sf C}^+(x):=\max\left\{\frac{1}{2}\log (x),0\right\}$.
For $x_i\in\mathbb{F}_p$, $\bigoplus_{i=1}^n x_i$ denotes the modulo-$p$ sum of $\{x_i\}_{i\in[1:n]}$, where $p$ is assumed to be a prime number.
Let  $\mathbf{1}_{(\cdot)}$ denote the indicator function of an event. For random variables $A$ and $B$, $H(A)$ denotes the entropy of $A$ and $I(A;B)$ denotes the mutual information between $A$.

\subsection{Network Model}
Consider a network represented by a directed graph $G=(V,E)$ depicted in Fig. \ref{figs:network}.
Denote the set of incoming and outgoing nodes at node $v\in V$ by $\Gamma_{in}(v)=\{u\in V:(u,v)\in E\}$ and $\Gamma_{out}(v)=\{u\in V:(v,u)\in E\}$, respectively.
Denote the $i$th sender, $i\in[1:K]$, by $t_i\in V$ and suppose that it observes a length-$k$ discrete source vector $\mathbf{s}_i=[s_i[1],\cdots,s_i[k]]^T\in[0:p-1]^k$.
Denote the set of $K$ sources at time $j$ by $\mathbf{s}[j]=[s_1[j],\cdots,s_K[j]]^T$.
The receiver $d\in V$ wishes to compute a symbol-by-symbol function of $K$ sources, i.e., $f(\mathbf{s}[j])$ for all $j\in[1:k]$.
We assume that $d\notin \{t_i\}_{i\in[1:K]}$ and $G$ contains a directed path from all nodes in $V$ to the receiver $d$.
Without loss of generality, we assume that the nodes with no incoming edge are included in $\{t_i\}_{i\in[1:K]}$ .

We mainly consider two desired functions: the arithmetic sum function and the type or frequency histogram function, whose formal definitions are given below. 

\begin{definition}[Arithmetic Sum Function]  \label{def:desired_function1}
Let $\mathbf{s}=[s_1,\cdots,s_K]^T\in[0:p-1]^{K}$. For the arithmetic sum computation, the desired function is given by  $f(\mathbf{s})=\{\sum_{i=1}^K a_{li}s_i\}_{l=1}^L$, where $a_{li}\in[0:p-1]$. Hence $f(\mathbf{s})\in [0:(p-1)^2K]^L$ for the arithmetic sum function.
\end{definition}

\begin{definition}[Type Function]  \label{def:desired_function2}
Let $\mathbf{s}=[s_1,\cdots,s_K]^T\in[0:p-1]^{K}$ and $b_l(\mathbf{s})=\sum_{i=1}^K\mathbf{1}_{s_i=l}$ for $l\in[0:p-1]$.
For the type computation, the desired function is given by  $f(\mathbf{s})=\{b_0(\mathbf{s}),\cdots,b_{p-1}(\mathbf{s})\}$. 
Hence $f(\mathbf{s})\in [0:K]^p$ for the type function.
\end{definition}

\begin{remark}[Arithmetic Sum and Type]\label{re:sum_type}
The arithmetic sum function in this paper is defined as multiple weighted arithmetic sums, which includes averaging of the sources and estimating each of the sources as special cases.  
As shown in Definition \ref{def:desired_function2}, the type function can be also represented as multiple arithmetic sums.
Therefore, the essence of the type computation is how to efficiently compute arbitrarily correlated multiple arithmetic sums over Gaussian networks.
\end{remark}

\begin{remark}[Symmetric Function Computation]
As pointed out by \cite{Giridhar:05}, computing the type function is very powerful since it yields many important statistics such as sample mean, maximum, minimum, variance, median, mode, and so on. 
Basically, any symmetric function whose function value is invariant with respect to permutations of its arguments is computable from the type function.
That is, symmetric functions satisfy  
$f(s_1,s_2,\cdots,s_K)=f(s_{\sigma_1},s_{\sigma_2},\cdots,s_{\sigma_K})$
for any permutation set $\{\sigma_i\}_{i\in[1:K]}$ and, therefore, they are deterministic functions of the type function.
\end{remark}

We assume arbitrarily correlated stationary and ergodic sources. 
The following definition formally states the underlying probability distribution and the corresponding random variables regarding the set of $K$ sources. 

\begin{definition}[Sources] \label{def:source}
Let $\mathbf{S}=[S_1,\cdots,S_K]^T\in[0:p-1]^K$ be a random vector associated with a joint probability mass function $p_{\mathbf{S}}(\cdot)$.
At each time $j\in[1:k]$, $\mathbf{s}[j]$ is assumed to be independently drawn from $p_{\mathbf{S}}(\cdot)$.
\end{definition}

As a special case in Definition \ref{def:source}, we will consider the following doubly symmetric binary sources throughout the paper.
\begin{definition}[Doubly Symmetric Binary Sources] \label{def:dsbs}
Assume $K=2$. Denote the doubly symmetric binary sources with the associated probability $\alpha$ by  DSBS($\alpha$). 
Let $\operatorname{Bern}(a)$ be the Bernoulli distribution with the probability $a$.
For DSBS($\alpha$), $S_1$ follows $\operatorname{Bern}(1/2)$ and $S_2=S_1\oplus Z$, where $Z$ follows $\operatorname{Bern}(\alpha)$ and is independent of $S_1$.
\end{definition}

Let $f(\mathbf{S})$ denote the desired function induced by the random source vector $\mathbf{S}$.
The following two definitions define random variables associated with the desired function, which will be used throughout the paper.

\begin{definition}[Arithmetic Sum Function Induced by $\mathbf{S}$]  \label{def:rv1}
Define $U_l=\sum_{i=1}^Ka_{li}S_i$ for $l\in[1:L]$, which are the random variables associated with the arithmetic sum function.
Then $f(\mathbf{S})=(U_1,\cdots,U_L)$ for the arithmetic sum function.
\end{definition}

\begin{definition}[Type Function Induced by $\mathbf{S}$]  \label{def:rv2}
Define $B_l=\sum_{i=1}^K\mathbf{1}_{S_i=l}$ for $l\in[0:p-1]$, which are the random variables associated with the type function.
Then $f(\mathbf{S})=(B_0,\cdots,B_{p-1})$ for the type function. 
\end{definition}

\begin{remark}[Worst Case Sources]\label{remark:worst_sources}
Note that $H(f(\mathbf{S}))$ is upper bounded by $\min\{K\log p, L\log (p^2 K)\}$ for the arithmetic sum function and $\min\{K\log p, p\log (K+1)\}$ for the type function.
For both cases, $H(f(\mathbf{S}))$ scales as the order of $\log K$ as the number of sources $K$ increases.
\end{remark}

Associated with $G=(V,E)$, we consider two classes of Gaussian channels, which are formally stated in the following two definitions.

\begin{definition}[Orthogonal Gaussian Networks] \label{def:channel1}
For this case, we assume Gaussian point-to-point channels  with no broadcast and no multiple-access for each $(u,v)\in E$.
That is, the length-$n$ time-extended input--output is given by
\begin{equation} \label{eq:in_out_ptp}
\mathbf{y}_{u,v}=h_{u,v}\mathbf{x}_{u,v}+\mathbf{z}_{u,v},
\end{equation} 
where the elements of $\mathbf{z}_{u,v}$ are independently drawn from $\mathcal{N}(0,1)$.
Each transmit signal should satisfy $\frac{1}{n}\|\mathbf{x}_{u,v}\|^2\leq P$ for all $(u,v)\in E$.
For notational simplicity, we will use the subscript $(\cdot)_{\sf{ ptp}}$ to denote orthogonal Gaussian networks.
\end{definition}

\begin{definition}[Gaussian Networks With Multiple-Access] \label{def:channel2}
For this case, we assume Gaussian multiple-access channels with no broadcast from $u\in \Gamma_{in}(v)$ to each $v\in V$.
That is, the length-$n$ time-extended input--output is given by
\begin{equation} \label{eq:in_out}
\mathbf{y}_{v}=\sum_{u\in \Gamma_{in}(v)}h_{u,v}\mathbf{x}_{u,v}+\mathbf{z}_v,
\end{equation}
where the elements of $\mathbf{z}_v$ are independently drawn from $\mathcal{N}(0,1)$.
Each transmit signal should satisfy $\frac{1}{n}\|\mathbf{x}_{u,v}\|^2\leq P$ for all $(u,v)\in E$.
For notational simplicity, we will use the subscript $(\cdot)_{\sf{mac}}$ to denote Gaussian networks with multiple-access.
\end{definition}

\begin{remark}[Bit-Pipe Wired Networks] \label{re:wired}
The considered orthogonal Gaussian network is almost equivalent to a bit-pipe wired network in the sense that it can be easily converted into a bit-pipe wired network by using capacity-achieving point-to-point channel codes. Nevertheless, we will state this paper based on Gaussian networks assuming orthogonal components defined in Definitions \ref{def:channel1} and \ref{def:channel2}. 
\end{remark}

\begin{remark}[Single-Hop Networks] For notational simplicity, we will use the following simplified notation for the single-hop case.
For  orthogonal Gaussian single-hop networks, we rewrite the length-$n$ time-extended input--output as
\begin{equation} \label{eq:gaussian_ptp}
\mathbf{y}_i= h_i \mathbf{x}_i+\mathbf{z}_i,
\end{equation}
where $i\in[1:K]$.
For  Gaussian single-hop networks with multiple-access or  Gaussian MACs, we rewrite the length-$n$ time-extended input--output as
\begin{equation} \label{eq:gaussian_mac}
\mathbf{y}= \sum_{i=1}^K h_i \mathbf{x}_i+\mathbf{z}.
\end{equation}

\end{remark}

\subsection{Computation Capacity}
Based on the above network model, the length-$n$ block code for orthogonal Gaussian networks is defined as follows, where $\mathbf{y}^{a}$ denotes $y[1],\cdots,y[a]$
\begin{itemize}
\item (Sender Encoding) The $i$th sender $t_i$ transmits $x^{(t)}_{t_i,w}=\psi^{(t)}_{t_i,w}\left(\mathbf{s}_i, \{\mathbf{y}_{u,t_i}^{t-1}\}_{u\in\Gamma_{in}(t_i)}\right)$ for $t\in[1:n]$ to node $w\in \Gamma_{out}(t_i)$.

\item (Relay Encoding) Node $v\notin \{t_i\}_{i\in[1:K]}\bigcup$\{d\} transmits $x^{(t)}_{v,w}=\psi^{(t)}_{v,w}\left( \{\mathbf{y}_{u,v}^{t-1}\}_{u\in\Gamma_{in}(v)}\right)$ for $t\in[1:n]$ to node $w\in \Gamma_{out}(v)$.

\item (Decoding) The receiver $d$ estimates $\hat{f}(\mathbf{s}[j])=\varphi^{(j)}\left(\{\mathbf{y}_{u,d}\}_{u\in\Gamma_{in}(d)}\right)$ for $j\in[1:k]$. 
\end{itemize}

Similarly, we can define the length-$n$ block code for Gaussian networks with multiple-access. Specifically,  $x^{(t)}_{t_i,w}=\phi^{(t)}_{t_i,w}\left(\mathbf{s}_i, \mathbf{y}_{t_i}^{t-1}\right)$, $x^{(t)}_{v,w}=\phi^{(t)}_{v,w}\left( \mathbf{y}_v^{t-1}\right)$ for $v\notin \{t_i\}_{i\in[1:K]}\bigcup\{d\}$, and $\hat{f}(\mathbf{s}[j])=\varphi^{(j)}\left(\mathbf{y}_{d}\right)$.

The probability of error is defined by $P^{(n)}_e=\Pr\left[\bigcup_{j=1}^k\hat{f}(\mathbf{s}[j])\neq f(\mathbf{s}[j])\right]$.
We then define the computation capacity as the follow.

\begin{definition}[Computation Capacity] \label{def:computation_rate}
The computation rate $R:=\frac{k}{n}$ is said to be achievable if there exists a sequence of length-$n$ block codes such that $P^{(n)}_{e}$ converges to zero as $n$ increases.
The computation capacity is the maximum over all achievable computation rates.
\end{definition}

From Definition \ref{def:computation_rate}, the computation rate is the number of reliably computable functions per channel use.

%-----------------------------------------------------------------------------
\section{Preliminaries}

Before stating our main results, we first introduce previous work that is closely related to our work in Sections \ref{subsec:dist_source_coding} to \ref{subsec:com_mac}. 
For comparison, we introduce a cut-set upper bound in Section \ref{subset:cut_set} and a separation-based lower bound in Section \ref{subsec:sep_com}.

\begin{figure}[t!]
\begin{center}
%\footnotesize
%\psfrag{m}[b]{$M$}
\includegraphics[scale=1.3]{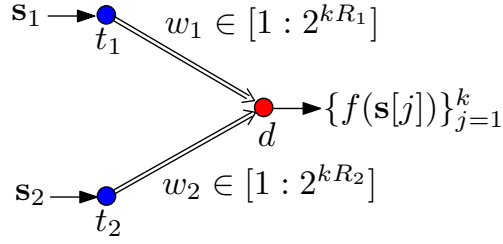}
\end{center}
\vspace{-0.15in}
\caption{Two-user distributed source coding for function computation.}
\label{figs:dist_source_coding}
\vspace{-0.1in}
\end{figure}

\subsection{Distributed Source Coding} \label{subsec:dist_source_coding}
Figure \ref{figs:dist_source_coding} illustrates the two-user distributed source coding for function computation.
Two senders respectively observe the length-$k$ sources $\mathbf{s}_1$ and $\mathbf{s}_2$ and deliver some information for function computing via messages $w_1\in[1:2^{kR_1}]$ and $w_2\in[1:2^{kR_2}]$. The receiver wishes to compute the desired function $\{f(\mathbf{s}[j]\}_{j=1}^k$ based on  $(w_1,w_2)$. 

The optimal distributed lossless source coding for $f(\mathbf{s}[j])=\mathbf{s}[j]$ in Fig. \ref{figs:dist_source_coding} has been solved by Slepian and Wolf \cite{Slepain:73}. For the two-user case, the Slepian--Wolf rate region is the set of all rate pairs $(R_1,R_2)$ satisfying
\begin{align} \label{eq:slepain_wolf}
R_1&\geq H(S_1|S_2),\nonumber\\
R_2&\geq H(S_2|S_1),\nonumber\\
R_1+R_2&\geq H(S_1,S_2).
\end{align}
Obviously, the above rate region is also an achievable rate region for any desired function.
It was proved by Csisz\'{a}r in \cite{Csiszar:82} that the same Slepian--Wolf rate region is achievable by linear source coding. 

The potential of the linear source coding has been first captured by K\"{o}rner and Marton in \cite{Korner:79} in the context of the modulo-two sum computation. Consider binary field sources $\mathbf{s}_1\in\mathbb{F}_2^k$ and $\mathbf{s}_2\in\mathbb{F}_2^k$ and $\{f(\mathbf{s}[j])=s_1[j]\oplus s_2[j]\}_{j=1}^k$. It was proved in \cite{Korner:79} that the set of all rate pairs $(R_1,R_2)$ satisfying 
\begin{align} \label{eq:linear_binning}
R_1&\ge H(S_1\oplus S_2),\nonumber\\
R_2&\ge H(S_1\oplus S_2)
\end{align}
is achievable by linear source coding.
A simple outer bound shows a necessary condition on an achievable $(R_1,R_2)$ as
\begin{align} \label{eq:outer}
R_1&\ge H(S_1|S_2),\nonumber\\
R_2&\ge H(S_1|S_2),\nonumber\\
R_1+R_2&\ge H(S_1\oplus S_2).
\end{align}

\begin{example}[Modulo-Two Sum of DSBS($\alpha$)]
For DSBS($\alpha$), the K\"{o}rner--Marton rate region \eqref{eq:linear_binning} shows that any rate pair satisfying $R_1\geq H_2(\alpha)$ and $R_2\geq H_2(\alpha)$ is achievable, which coincides with the outer bound in \eqref{eq:outer}.
\end{example}

Unfortunately, the optimal rate region for the modulo-two sum computation of arbitrarily correlated binary sources is unknown. 
The outer bound in \eqref{eq:outer} does not coincide with the convex hull of the union of the Slepian--Wolf rate region \eqref{eq:slepain_wolf} and the K\"{o}rner--Marton rate region \eqref{eq:linear_binning}.
%Furthermore, we can easily find examples that \eqref{eq:slepain_wolf} outperforms \eqref{eq:linear_binning} and vise versa.
A more general achievability containing both the Slepian--Wolf rate region \eqref{eq:slepain_wolf} and  the K\"{o}rner--Marton rate region \eqref{eq:linear_binning} has been proposed by Ahlswede and Han in \cite[Section VI]{Ahlswede:83}.
Let $W_1$ and $W_2$ be auxiliary random variables that form a Markov chain $W_1- S_1- S_2-W_2$. 
Ahlswede and Han showed that the set of all rate pairs $(R_1,R_2)$ satisfying
\begin{align}
R_1&\geq I(W_1;S_1|W_2)+H(S_1\oplus S_2|W_1,W_2),\nonumber\\
R_2&\geq I(W_2;S_2|W_1)+H(S_1\oplus S_2|W_1,W_2),\nonumber\\
R_1+R_2&\geq I(W_1,W_2;S_1,S_2)+2H(S_1\oplus S_2|W_1,W_2)
\end{align}
is achievable. An example that this new rate region strictly enlarges the convex hull of the union of the Slepian--Wolf rate region \eqref{eq:slepain_wolf} and the K\"{o}rner--Marton rate region \eqref{eq:linear_binning} was also provided in \cite[Example 4]{Ahlswede:83}.

\begin{figure}[t!]
\begin{center}
%\footnotesize
%\psfrag{m}[b]{$M$}
\includegraphics[scale=1.3]{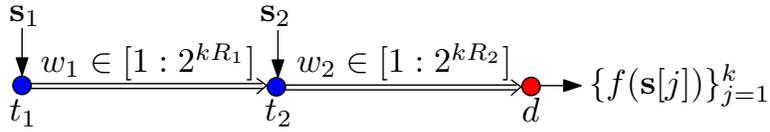}
\end{center}
\vspace{-0.15in}
\caption{Two-user cascade source coding for function computation.}
\label{figs:cascade_source_coding}
\vspace{-0.1in}
\end{figure}

\subsection{Cascade Source Coding}
Figure \ref{figs:cascade_source_coding} illustrates the two-user cascade source coding for function computation.
The first sender observes the length-$k$ source $\mathbf{s}_1$ and delivers some information for function computing via message $w_1\in[1:2^{kR_1}]$. The second sender observes $(\mathbf{s}_2,w_1)$ and again delivers some information of $(\mathbf{s}_2,w_1)$ for function computing via message $w_2\in[1:2^{kR_2}]$.
The receiver wishes to compute the desired function $\{f(\mathbf{s}[j])\}_{j=1}^k$ based on   $w_2$. 

The cascade source coding for function computation has been studied in \cite{Cuff:09} in the context of lossy computation.
For the lossless case depicted in Fig. \ref{figs:cascade_source_coding}, the computation capacity for a general function has been shown to be represented by the conditional graph entropy \cite{Orlitsky:01}.

Again, consider arbitrarily correlated binary field sources $\mathbf{s}_1\in\mathbb{F}_2^k$ and $\mathbf{s}_2\in\mathbb{F}_2^k$ and the modulo-two sum computation of these two sources. 
In this case, the optimal rate region is given by the set of all $(R_1,R_2)$ satisfying 
\begin{align}
R_1&\ge H(S_1|S_2),\nonumber\\
R_2&\ge H(S_1\oplus S_2),
\end{align}
which can be attained from a simple application of Slepian--Wolf  source coding.
%Unlike the previous distributed source coding, the Slepian--Wolf source coding always outperforms the K\"{o}rner--Marton linear source coding in the case of cascade source coding.

\begin{figure}[t!]
\begin{center}
%\footnotesize
%\psfrag{m}[b]{$M$}
\includegraphics[scale=1.3]{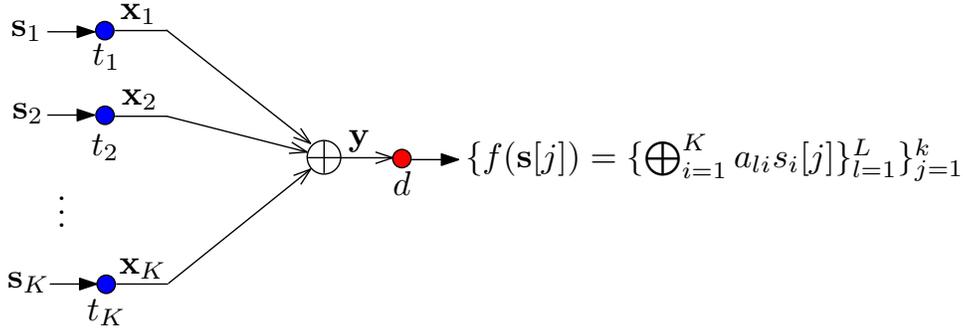}
\end{center}
\vspace{-0.15in}
\caption{Modulo-$p$ sum computation over the modulo-$p$ sum channel.}
\label{figs:mod_p_sum}
\vspace{-0.1in}
\end{figure}

\subsection{Modulo-$p$ Sum Computation Over MAC} \label{subsec:com_mac}
The potential of linear source coding has been also captured by Nazer and Gastpar in the context of the computation over MAC \cite{Nazer:07}.
Figure \ref{figs:mod_p_sum} illustrates the modulo-$p$ sum computation over the deterministic modulo-$p$ sum channel and a more general case can be found in \cite[Theorem 1]{Nazer:07}.
The $i$th sender observes $\mathbf{s}_i\in\mathbb{F}_p^k$ and the receiver wishes to compute the modulo-$p$ sum function, i.e., $f(\mathbf{s}[j])=\{\bigoplus_{i=1}^K a_{li}s_i[j]\}_{l=1}^L$, $a_{li}\in\mathbb{F}_p$. The length-$n$ time-extended input--output of the modulo-$p$ sum channel is given by $\mathbf{y}=\bigoplus_{i=1}^K\mathbf{x}_i$, where $\mathbf{x}_i\in \mathbb{F}^n_p$  and $p$ is assumed to be a prime number.

To compress multiple modulo-$p$ sum functions, which are in general correlated to each other, Nazer and Gastpar applied linear Slepian--Wolf source coding as introduced in the following lemma.
  
\begin{lemma}[Csisz\'{a}r \cite{Csiszar:82}] \label{thm:csiszar}
Let $(\mathbf{v}_1,\cdots,\mathbf{v}_L)$ be the set of length-$k$ sources, independently drawn from some joint probability mass function $p_{V_1,\cdots,V_L}(\cdot)$. 
For any point in the Slepian--Wolf rate region, there exist matrices $\mathbf{H}_1,\cdots,\mathbf{H}_L$ of size $n_l\times k$, respectively, taking values over a finite field with associated decoding function that can be used to compress the sources in a distributed fashion with $\Pr[(\hat{\mathbf{v}}_1,\cdots,\hat{\mathbf{v}}_L)\neq (\mathbf{v}_1,\cdots,\mathbf{v}_L)]\to 0$ as $k$ increases.
\end{lemma}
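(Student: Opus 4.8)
The plan is to establish the lemma by a random linear binning (syndrome) argument, which is the linear analogue of the classical Slepian--Wolf random binning proof. I would generate each compression matrix $\mathbf{H}_l \in \mathbb{F}_p^{n_l \times k}$ by drawing its entries independently and uniformly over $\mathbb{F}_p$, and set the $l$th encoder to output the syndrome $\mathbf{w}_l = \mathbf{H}_l \mathbf{v}_l$. The decoder, upon receiving $(\mathbf{w}_1,\dots,\mathbf{w}_L)$, would search for the unique jointly (strongly) typical tuple $(\hat{\mathbf{v}}_1,\dots,\hat{\mathbf{v}}_L)$ consistent with every syndrome, i.e.\ satisfying $\mathbf{H}_l \hat{\mathbf{v}}_l = \mathbf{w}_l$ for all $l$, declaring an error if no such tuple or more than one exists. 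Identifying the rate of the $l$th encoder with $R_l = \frac{n_l \log p}{k}$, the goal reduces to showing that the ensemble-averaged probability of error vanishes as $k \to \infty$ whenever $(R_1,\dots,R_L)$ lies in the interior of the Slepian--Wolf region.

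The key probabilistic tool is the uniform-hashing property of random linear maps over $\mathbb{F}_p$: for any fixed nonzero vector $\mathbf{u} \in \mathbb{F}_p^k$, each coordinate of $\mathbf{H}_l \mathbf{u}$ is the inner product of an independent uniform row with $\mathbf{u}$ and is therefore uniform over $\mathbb{F}_p$ (since $p$ is prime, some $u_j \neq 0$ already makes that coordinate uniform), so $\mathbf{H}_l \mathbf{u}$ is uniform over $\mathbb{F}_p^{n_l}$ and $\Pr[\mathbf{H}_l \mathbf{u} = \mathbf{0}] = p^{-n_l}$. This plays exactly the role that independent random bin assignment plays in the nonlinear proof: two distinct source vectors collide under $\mathbf{H}_l$ with probability $p^{-n_l}$, independently of the source statistics.

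With this in hand I would carry out the error analysis by partitioning the confusion events according to the set $\mathcal{S} \subseteq [1:L]$ of coordinates in which a competing typical tuple $(\tilde{\mathbf{v}}_1,\dots,\tilde{\mathbf{v}}_L)$ differs from the true one, so that $\tilde{\mathbf{v}}_l = \mathbf{v}_l$ for $l \notin \mathcal{S}$ and $\tilde{\mathbf{v}}_l \neq \mathbf{v}_l$ for $l \in \mathcal{S}$. For a fixed $\mathcal{S}$, the number of typical competitors is bounded by roughly $2^{k H(V_{\mathcal{S}} \mid V_{\mathcal{S}^c})}$, and by the hashing property each competitor matches all syndromes indexed by $\mathcal{S}$ with probability $\prod_{l \in \mathcal{S}} p^{-n_l}$, while the syndromes in $\mathcal{S}^c$ match automatically. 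A union bound then shows that the expected number of confusable tuples of difference pattern $\mathcal{S}$ is at most $2^{k[H(V_{\mathcal{S}} \mid V_{\mathcal{S}^c}) - \sum_{l\in\mathcal{S}} R_l]}$ up to the usual typicality slack, which tends to zero provided $\sum_{l \in \mathcal{S}} R_l > H(V_{\mathcal{S}} \mid V_{\mathcal{S}^c})$. Summing over the finitely many subsets $\mathcal{S}$ and adding the vanishing probability that the true tuple is atypical yields ensemble-average error going to zero precisely on the interior of the Slepian--Wolf region; a standard expurgation step then fixes a single deterministic collection $\mathbf{H}_1,\dots,\mathbf{H}_L$ achieving it, and boundary points follow by taking limits.

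I expect the main obstacle to be verifying that the linear hashing collisions retain enough independence for the union bound to reproduce the same exponents as ordinary random binning: concretely, one must check that for each fixed difference pattern the pairwise collision probability is exactly $p^{-\sum_{l\in\mathcal{S}} n_l}$ and is uniform over all nonzero difference vectors, so that counting conditional-typical competitors against these collision probabilities recovers every one of the $2^L - 1$ Slepian--Wolf subset constraints. The remaining ingredients, namely the strong-typicality bounds on the sizes of conditional type classes and the derandomization, are routine.
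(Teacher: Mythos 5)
Your proposal is correct, and the random linear binning (syndrome) argument you give is a complete and standard proof of the lemma. One thing to be aware of: the paper itself offers no proof of this statement --- it is quoted directly from Csisz\'{a}r \cite{Csiszar:82} --- so the relevant comparison is with Csisz\'{a}r's original argument, which differs from yours in a meaningful way. Csisz\'{a}r proves a stronger, universal result: a single sequence of linear encoders attains the Slepian--Wolf region for \emph{every} source distribution simultaneously, with exponentially decaying error, using a minimum conditional entropy decoder rather than joint typicality decoding. Your typicality-based decoder produces matrices tailored to the known pmf $p_{V_1,\cdots,V_L}(\cdot)$, which is weaker but exactly what the lemma (and its use in Theorem~\ref{thm_comp_mac}) requires, and it is the more elementary route. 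On the substance of your argument, the ``main obstacle'' you flag at the end is not actually an obstacle: the union bound needs only the \emph{expected} number of colliding competitors, so linearity of expectation together with the exact pairwise collision probability $\prod_{l\in\mathcal{S}}p^{-n_l}$ --- which, as you verified, is uniform over nonzero difference vectors because each $\mathbf{H}_l\mathbf{u}$ is uniform on $\mathbb{F}_p^{n_l}$ and the $\mathbf{H}_l$ are mutually independent --- already reproduces all $2^L-1$ subset constraints; no joint independence among collision events is required. Two housekeeping points: sources over a general finite alphabet must first be embedded into $\mathbb{F}_p$ with $p$ at least the alphabet size (the paper does precisely this elsewhere via its bijection $g(\cdot)$), and the boundary case needs no separate limit argument since the Slepian--Wolf region is closed, so achievability on the interior plus the closure in the definition of the region (or a rate padding $n_l\mapsto n_l+o(k)$) settles it.
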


\begin{theorem}[Nazer--Gastpar \cite{Nazer:07}] \label{thm_comp_mac}
Consider the modulo-$p$ sum computation over the modulo-$p$ sum channel depicted in Fig. \ref{figs:mod_p_sum}. 
Let $V_l=\bigoplus_{i=1}^K a_{li}S_i$.
Then the computation capacity is given by
\begin{equation}
R=\frac{\log p}{H(V_1,\cdots,V_L)}.
\end{equation}
\end{theorem}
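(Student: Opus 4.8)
The plan is to prove the two matching bounds separately, treating the converse quickly and reserving the main effort for achievability. For the converse I would argue directly from the deterministic structure of the channel. Because the sources are i.i.d.\ across time (Definition~\ref{def:source}), the quantity the receiver must reproduce has entropy $H\bigl(f(\mathbf{S}[1]),\ldots,f(\mathbf{S}[k])\bigr)=k\,H(V_1,\ldots,V_L)$. The receiver forms its estimate from the single output vector $\mathbf{y}\in\mathbb{F}_p^n$, and since $\mathbf{y}$ consists of $n$ symbols over $\mathbb{F}_p$ we have $I(f^k;\mathbf{y})\le H(\mathbf{y})\le n\log p$. Combining the data-processing inequality with Fano's inequality (using $P^{(n)}_e\to 0$ and that the function alphabet has size at most $p^L$) yields $k\,H(V_1,\ldots,V_L)\le n\log p + n\varepsilon_n$ for some $\varepsilon_n\to 0$, hence $R\le \log p/H(V_1,\ldots,V_L)$. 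No feedback complicates this step, since in the single-hop setting the senders have no incoming edges.

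For achievability, the crucial observation is that linear source coding commutes with the modulo-$p$ sum channel. Writing $\mathbf{v}_l=\bigoplus_{i=1}^K a_{li}\mathbf{s}_i\in\mathbb{F}_p^k$ for the length-$k$ realization of the $l$th desired linear combination, recovering the function is equivalent to recovering $(\mathbf{v}_1,\ldots,\mathbf{v}_L)$. By Lemma~\ref{thm:csiszar}, for any point in the Slepian--Wolf region of the sources $(V_1,\ldots,V_L)$ there exist linear compression matrices $\mathbf{H}_l$ of size $n_l\times k$ over $\mathbb{F}_p$, together with a decoder recovering $(\mathbf{v}_1,\ldots,\mathbf{v}_L)$ from the syndromes $(\mathbf{H}_1\mathbf{v}_1,\ldots,\mathbf{H}_L\mathbf{v}_L)$ with vanishing error. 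The key identity is that, by linearity over $\mathbb{F}_p$,
\begin{equation}
\mathbf{H}_l\mathbf{v}_l=\mathbf{H}_l\Bigl(\bigoplus_{i=1}^K a_{li}\mathbf{s}_i\Bigr)=\bigoplus_{i=1}^K a_{li}\mathbf{H}_l\mathbf{s}_i,
\end{equation}
so each syndrome is itself a modulo-$p$ sum of the locally computable quantities $a_{li}\mathbf{H}_l\mathbf{s}_i$.

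The scheme therefore transmits the $L$ syndromes in sequence: during the $n_l$ channel uses allotted to function $l$, sender $i$ inputs $a_{li}\mathbf{H}_l\mathbf{s}_i$, whereupon the channel $\mathbf{y}=\bigoplus_i\mathbf{x}_i$ delivers precisely $\mathbf{H}_l\mathbf{v}_l$ to the receiver. Using $n=\sum_{l=1}^L n_l$ channel uses and applying the Csisz\'{a}r decoder, the receiver reconstructs all $\mathbf{v}_l$ with vanishing error. The rate is $R=k/n=k/\sum_l n_l$, and selecting the operating point on the sum-rate face of the Slepian--Wolf polytope gives $\sum_l (n_l\log p)/k=H(V_1,\ldots,V_L)$, whence $\sum_l n_l=k\,H(V_1,\ldots,V_L)/\log p$ and $R=\log p/H(V_1,\ldots,V_L)$, matching the converse.

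I expect the main obstacle to be making precise the \emph{distributed} implementation of the linear encoders. Standard Slepian--Wolf coding assumes each source $\mathbf{v}_l$ sits at a single encoder, whereas here $\mathbf{v}_l$ is spread across all $K$ senders; the entire argument hinges on linearity, since only a linear map $\mathbf{H}_l$ lets the syndrome $\mathbf{H}_l\mathbf{v}_l$ be assembled by the channel from locally computable contributions. Care is thus needed to confirm that the Csisz\'{a}r construction supplies field-linear matrices compatible with this superposition, and that the minimum-sum-rate corner of the Slepian--Wolf region is simultaneously achievable for all $L$ descriptions.
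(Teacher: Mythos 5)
Your proof is correct and follows essentially the same route as the paper: the achievability is exactly the paper's scheme---time-share the locally computable syndromes $a_{li}\mathbf{H}_l\mathbf{s}_i$ over $n_l$ channel uses so that the channel itself assembles $\mathbf{H}_l\bigl(\bigoplus_{i=1}^K a_{li}\mathbf{s}_i\bigr)$, then invoke Lemma~\ref{thm:csiszar} at a point with sum rate $H(V_1,\cdots,V_L)$---and your Fano/cardinality converse is the standard cut-set argument the paper defers to \cite{Nazer:07} and Section~\ref{subset:cut_set}. The compatibility concern you raise at the end is already resolved by Lemma~\ref{thm:csiszar} as stated, since it guarantees field-linear matrices $\mathbf{H}_1,\cdots,\mathbf{H}_L$ with an associated joint decoder at any point of the Slepian--Wolf region, in particular on its dominant (sum-rate) face.
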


For a better understanding, we briefly explain the achievability here.
From lemma \ref{thm:csiszar}, we set $\mathbf{H}_1,\cdots,\mathbf{H}_L$ of size $n_l\times k$, respectively, which corresponds to some point in the Slepian--Wolf rate region with sum rate $H(V_1,\cdots,V_L)$.
 
The $i$th sender transmits 
\begin{equation} \label{eq:trasmit_signal}
\mathbf{x}_i=[a_{1i}\mathbf{H}_1\mathbf{s}_i,\cdots,a_{Li}\mathbf{H}_L\mathbf{s}_i]^T
\end{equation}
for $i\in[1:K]$, where we set $n=\sum_{l=1}^Ln_l$.
%We will specify how to choose this $k$ later.
%This can be done since $\sum_{l=1}^Lm_l\in[0,\infty)$ is a function of $k$.
Then \eqref{eq:trasmit_signal} yields
\begin{equation}
\mathbf{y}=\bigoplus_{i=1}^K\mathbf{x}_i=\left[\mathbf{H}_1\left(\bigoplus_{i=1}^Ka_{1i}\mathbf{s}_i\right),\cdots,\mathbf{H}_L\left(\bigoplus_{i=1}^Ka_{Li}\mathbf{s}_i\right)\right]^T.
\end{equation}
Hence, from Lemma \ref{thm:csiszar}, the receiver can recover $\{\bigoplus_{i=1}^Ka_{li}\mathbf{s}_i\}_{l\in[1:L]}$ with an arbitrarily small probability of error as $n$ increases if
\begin{equation} \label{eq:condition_k}
\left(\sum_{l=1}^L n_l\right)\log p=n\log p\geq kH(V_1,\cdots,V_L).
\end{equation}
Therefore, setting $k=\frac{n \log p}{H(V_1,\cdots,V_L)}$, which satisfies \eqref{eq:condition_k}, provides that
\begin{equation}
R=\frac{\log p}{H(V_1,\cdots,V_L)}
\end{equation}
is achievable.

\begin{figure}[t!]
\begin{center}
%\footnotesize
%\psfrag{m}[b]{$M$}
\includegraphics[scale=1.3]{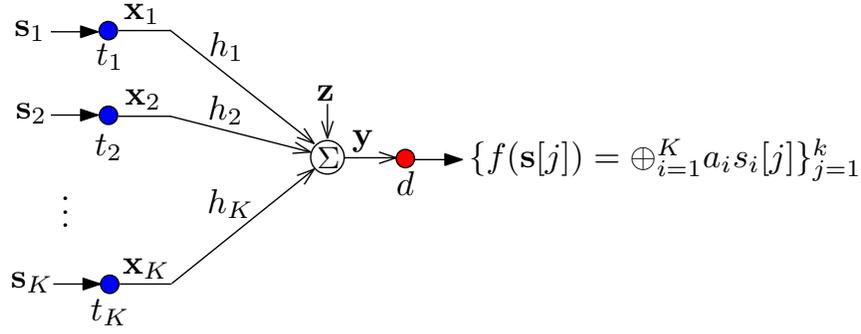}
\end{center}
\vspace{-0.15in}
\caption{Modulo-$p$ sum computation over  the Gaussian MAC.}
\label{figs:compute_forward}
\vspace{-0.1in}
\end{figure}

Nazer and Gastpar recently proposed compute-and-forward \cite{Nazer:11}, which provides a general framework for computing modulo-$p$ sum functions at multiple receivers over Gaussian channels.
We briefly describe compute-and-forward here with respect to a single receiver depicted in Fig. \ref{figs:compute_forward}. 
The $i$th sender observes $\mathbf{s}_i\in\mathbb{F}_p^k$ and the receiver wishes to compute the modulo-$p$ sum function $f(\mathbf{s}[j])=\bigoplus_{i=1}^K a_{i}s_i[j]$, $a_{i}\in\mathbb{F}_p$. 
Here $p$ is assumed to be a prime number.
The length-$n$ time-extended input--output is given by $\mathbf{y}=\sum_{i=1}^Kh_i\mathbf{x}_i+\mathbf{z}$, where the elements of $\mathbf{z}$ are independently drawn from $\mathcal{N}(0,1)$ and $\frac{1}{n}\|\mathbf{x}_i\|^2\leq P$ for all $i\in[1:K]$.

\begin{theorem}[Nazer--Gastpar\cite{Nazer:11}] \label{thm:compute-forward}
Consider the modulo-$p$ sum computation over the Gaussian MAC depicted in Fig. \ref{figs:compute_forward}.
Let $\mathbf{a}=[a_1,\cdots,a_K]^T$ and $\mathbf{h}=[h_1,\cdots,h_K]^T$.
Then the receiver can decode $\{f(\mathbf{s}[j])\}_{j=1}^k=\bigoplus_{i=1}^K a_{i}\mathbf{s}_i$ reliably for $n$ sufficiently large if 
\begin{equation}
R=\frac{k}{n}\leq {\sf C}^+\left(\left(\|\mathbf{a}\|^2-\frac{P(\mathbf{h}^T\mathbf{a})^2}{1+P\|\mathbf{h}\|^2}\right)^{-1}\right)(\log p)^{-1}
\end{equation}
and $p$ is an increasing function of $n$ such that $\frac{n}{p}\to 0$ as $n\to \infty$.  
\end{theorem}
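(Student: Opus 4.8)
The plan is to prove achievability through \emph{nested lattice codes}, following the compute-and-forward architecture. First I would fix a pair of nested lattices $\Lambda \subseteq \Lambda_f$ in $\Real^n$, where the coarse lattice $\Lambda$ is used for shaping and is chosen to be Rogers-good with second moment equal to $P$ (so that transmitted codewords meet the power constraint), while the fine lattice $\Lambda_f$ is used for coding and is chosen to be Poltyrev-good for the AWGN channel. Both lattices are built by Construction A over $\mathbb{F}_p$, and the key structural feature is that there is an $\mathbb{F}_p$-linear, one-to-one labeling between the coset representatives in $\Lambda_f \cap \mathcal{V}(\Lambda)$ and the messages in $\mathbb{F}_p^k$, so that \emph{lattice addition of codewords corresponds to the modulo-$p$ sum of their labels}. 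This is precisely where the growing field size enters: the condition $\frac{n}{p}\to 0$ guarantees that the Construction A ensemble contains nested pairs that are simultaneously good for quantization and for channel coding.

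Next comes the encoding and decoding. Sender $i$ maps its block $\mathbf{s}_i$ to a fine-lattice point $\mathbf{t}_i$ via the linear labeling, adds an independent dither $\mathbf{d}_i$ uniform over $\mathcal{V}(\Lambda)$ and known at the receiver, and transmits $\mathbf{x}_i = (\mathbf{t}_i + \mathbf{d}_i)\bmod\Lambda$. By the Crypto Lemma each $\mathbf{x}_i$ is uniform over $\mathcal{V}(\Lambda)$, independent of $\mathbf{t}_i$, and satisfies $\frac{1}{n}\|\mathbf{x}_i\|^2 \le P$. The receiver scales the channel output by a coefficient $\alpha$, subtracts the weighted dithers, and reduces modulo $\Lambda$:
\begin{equation}
\mathbf{y}' = \Big(\alpha\mathbf{y} - \sum_{i=1}^K a_i\mathbf{d}_i\Big)\bmod\Lambda = \Big(\sum_{i=1}^K a_i\mathbf{t}_i + \mathbf{z}_{\sf eff}\Big)\bmod\Lambda,
\end{equation}
where the effective noise is $\mathbf{z}_{\sf eff} = \sum_{i=1}^K(\alpha h_i - a_i)\mathbf{x}_i + \alpha\mathbf{z}$. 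Quantizing $\mathbf{y}'$ to $\Lambda_f$ and reducing modulo $\Lambda$ recovers $(\bigoplus_{i=1}^K a_i\mathbf{t}_i)\bmod\Lambda$, whose label is exactly $\bigoplus_{i=1}^K a_i\mathbf{s}_i$.

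It then remains to bound the probability that the quantization step errs. Treating the per-dimension variance of $\mathbf{z}_{\sf eff}$ as $\sigma_{\sf eff}^2 = P\sum_{i=1}^K(\alpha h_i - a_i)^2 + \alpha^2$ and minimizing over $\alpha$ yields the MMSE choice $\alpha^\star = \frac{P\,\mathbf{h}^T\mathbf{a}}{1+P\|\mathbf{h}\|^2}$, for which a short computation gives $\sigma_{\sf eff}^2 = P\big(\|\mathbf{a}\|^2 - \frac{P(\mathbf{h}^T\mathbf{a})^2}{1+P\|\mathbf{h}\|^2}\big)$. Since $\Lambda_f$ is Poltyrev-good, the decoding error vanishes as $n\to\infty$ provided the code rate obeys $R_{\sf bits} \le \frac{1}{2}\log\big(P/\sigma_{\sf eff}^2\big)$; each codeword carries $k$ symbols of $\mathbb{F}_p$, i.e. $k\log p$ bits, so $R_{\sf bits} = \frac{k\log p}{n}$, and dividing through by $\log p$ produces the stated bound $R = \frac{k}{n} \le {\sf C}^+\big((\|\mathbf{a}\|^2 - \frac{P(\mathbf{h}^T\mathbf{a})^2}{1+P\|\mathbf{h}\|^2})^{-1}\big)(\log p)^{-1}$, using that $P/\sigma_{\sf eff}^2$ equals the reciprocal appearing inside ${\sf C}^+$.

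The delicate part, and the step I expect to be the main obstacle, is justifying that the Poltyrev-goodness of $\Lambda_f$, which is stated for genuine AWGN, still applies to $\mathbf{z}_{\sf eff}$, whose self-noise component $\sum_{i=1}^K(\alpha h_i - a_i)\mathbf{x}_i$ is a sum of dither-smoothed lattice terms rather than a Gaussian vector. I would resolve this through the standard lattice smoothing argument: because $\Lambda$ is Rogers-good, the dithered codewords are close in divergence to i.i.d. Gaussians of variance $P$, and a Poltyrev-good lattice remains reliable under any noise whose variance does not exceed $\sigma_{\sf eff}^2$. Coupling this with the requirement $\frac{n}{p}\to 0$, which secures the simultaneously good nested pairs needed above, completes the argument.
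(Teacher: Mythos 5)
This theorem is quoted in the paper directly from Nazer--Gastpar \cite{Nazer:11} and carries no proof in the paper itself; your proposal correctly reconstructs the standard compute-and-forward achievability argument from that reference, including the nested Construction-A lattice pair, dithering, the MMSE choice $\alpha^\star=\frac{P\,\mathbf{h}^T\mathbf{a}}{1+P\|\mathbf{h}\|^2}$ yielding $\sigma_{\sf eff}^2=P\bigl(\|\mathbf{a}\|^2-\frac{P(\mathbf{h}^T\mathbf{a})^2}{1+P\|\mathbf{h}\|^2}\bigr)$, the rate conversion via $k\log p$ bits per block, and the Erez--Zamir smoothing treatment of the non-Gaussian self-noise. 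Since this is essentially the same route as the cited original proof, no further comparison is needed.
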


\begin{example}[Gaussian MAC When $h_i=a_i=1$] \label{ex:compute_forward_equal_ch}
Suppose that $h_i=a_i=1$ for all $i\in[1:K]$.
For this case, Theorem \ref{thm:compute-forward} yields that the receiver can decode $\bigoplus_{i=1}^K \mathbf{s}_i$ reliably for $n$ sufficiently large if $R\leq {\sf C}^{+}\left(\frac{1}{K}+P\right)(\log p)^{-1}$ and $p$ is an increasing function of $n$ satisfying that $p\to\infty$ as $n\to \infty$.
\end{example}

\subsection{Cut-Set Upper Bound} \label{subset:cut_set}
To describe a cut-set upper bound on the computation capacity, we first introduce the notation.
For a subset $\Sigma\subseteq[1:K]$, define $G(\Sigma)=(V(\Sigma),E(\Sigma))$ as the subgraph of $G$ consisting of the nodes having a direct path from at least one of the senders in $\{t_i\}_{i\in\Sigma}$. 
Let $\Lambda(\Sigma)$ denote the set of all cuts dividing all of the senders in $\{t_i\}_{i\in\Sigma}$ from the receiver $d$ on $G(\Sigma)$. 
Then the minimum-cut value for general discrete memoryless networks over the cuts in $\Lambda(\Sigma)$ on $G(\Sigma)$ is given by 
\begin{equation} \label{eq:min_cut}
\max_{p(\{x_v\}_{v\in V(\Sigma)})}\min_{\Omega\in \Lambda(\Sigma)} I(X_{\Omega};Y_{\Omega^c}|X_{\Omega^c}).
\end{equation}
If there exists an input cost constraint, $p(\{x_v\}_{v\in V(\Sigma)})$ should be set to satisfy the corresponding input cost constraint.

For orthogonal Gaussian networks, the minimum-cut value is given by
\begin{equation} \label{eq:min_cut_ptp}
\bar{C}_{\sf ptp}(\Sigma):=\min_{\Omega\in \Lambda(\Sigma) }\sum_{(u,v)\in E(\Sigma), u\in \Omega, v\in\Omega^c}{\sf C}\left(h^2_{u,v} P\right).
\end{equation}
Similarly, for Gaussian networks with multiple-access, the minimum-cut value is upper bounded by
\begin{equation} \label{eq:min_cut_mac}
\bar{C}_{\sf mac}(\Sigma):=\min_{\Omega\in \Lambda(\Sigma) }\sum_{v\in \Omega^c}{\sf C}\left(\Big(\sum_{u\in \Gamma_{in}(v), u\in \Omega}|h_{u,v}|\Big)^2 P\right).
\end{equation}

Since the desired function is locally computable \cite{Giridhar:05},  for any $\Sigma\subseteq[1:K]$, $f(\mathbf{s}[j])$ can be represented as $f'(\{s_i[j]\}_{i\in\Sigma})+f''(\{s_i[j]\}_{i\in [1:K]\setminus \Sigma})$.
Hence, by assuming that a genie provides $\{\mathbf{s}_i\}_{i\in [1:K]\setminus \Sigma}$ to the receiver, computing $\{f'(\{s_i[j]\}_{i\in\Sigma})\}_{j=1}^k$ at the receiver is enough to recover the desired function $\{f(\mathbf{s}[j])\}_{j=1}^k$. Therefore assuming full cooperation between the nodes in $\Omega$ and between the nodes in $\Omega^c$, from the source--channel separation theorem \cite{Shannon:48}, 
\begin{equation}  \label{eq:cutset_ptp}
R_{\sf ptp}\leq \min_{\Sigma\subseteq [1:K]}\frac{\bar{C}_{\sf ptp}(\Sigma)}{H(f(\mathbf{S})|\{S_i\}_{i\in[1:K]\setminus \Sigma})}
\end{equation}
for orthogonal Gaussian networks.
In the same manner, we have 
\begin{equation} \label{eq:cutset_int}
R_{\sf mac}\leq \min_{\Sigma\subseteq [1:K]}\frac{\bar{C}_{\sf mac}(\Sigma)}{H(f(\mathbf{S})|\{S_i\}_{i\in[1:K]\setminus \Sigma})}
\end{equation}
for Gaussian networks with multiple-access.

\subsection{Separation-Based Computation} \label{subsec:sep_com}
We generalize the notion of the separation-based computation introduced in \cite{Nazer:07}.
We refer to \cite[Section III]{Nazer:07} for the formal definition of the separation-based computation.
Roughly speaking, the separation-based computation means that a communication network is first transformed into an end-to-end bit-pipe channel by channel coding and then separately applied source coding for computing the desired function over the transformed end-to-end bit-pipe channel.  

Let $\mathbf{R}_{\sf{f}}$ denote the distributed compression rate region for computing $\{f(\mathbf{s}[j])\}_{j=1}^k$ (see \cite[Definition 8]{Nazer:07}) and $\mathbf{C}_{\sf{ptp}}$ denote the capacity region for orthogonal Gaussian networks, which can be represented as the set of all rate tuples such that
\begin{equation}
\sum_{i\in \Sigma} R_i\leq\bar{C}_{\sf ptp}(\Sigma) \mbox{ for all }\Sigma\subseteq[1:K].
\end{equation}
Then a computation rate $R^{(\sf sep)}_{\sf ptp}$ is achievable by separation if 
\begin{equation} \label{eq:sep_general}
\mathbf{R}_{\sf{f}}\cap\mathbf{C}'_{\sf{ptp}}\neq\emptyset,
\end{equation}
where $\mathbf{C}'_{\sf{ptp}}=\left\{\left(\frac{R_1}{R^{(\sf sep)}_{\sf ptp}},\cdots,\frac{R_K}{R^{(\sf sep)}_{\sf ptp}}\right):(R_1,\cdots,R_K)\in\mathbf{C}_{\sf{ptp}}\right\}$.

Similarly, we define the achievable computation rate $R_{\sf mac}$ for Gaussian networks with multiple-access by separation. Specifically, $R^{(\sf sep)}_{\sf mac}$ is achievable by separation if 
\begin{equation}
\mathbf{R}_{\sf{f}}\cap\mathbf{C}'_{\sf{mac}}\neq\emptyset,
\end{equation}
where $\mathbf{C}'_{\sf{mac}}=\left\{\left(\frac{R_1}{R^{(\sf sep)}_{\sf mac}},\cdots,\frac{R_K}{R^{(\sf sep)}_{\sf mac}}\right):(R_1,\cdots,R_K)\in\mathbf{C}_{\sf{mac}}\right\}$ and $\mathbf{C}_{\sf{mac}}$ denotes the capacity region for Gaussian networks with multiple-access, which is upper bounded by 
\begin{equation}
\sum_{i\in \Sigma} R_i\leq\bar{C}_{\sf mac}(\Sigma) \mbox{ for all }\Sigma\subseteq[1:K].
\end{equation}

\begin{example}[I.I.D. Sources and Symmetric MAC]
Suppose that the sources are i.i.d., i.e., $p_{\mathbf{S}}(\cdot)=\prod_{i=1}^Kp_{S_i}(\cdot)$ and $p_{S_i}(\cdot)=p_{S}(\cdot)$. 
Then from \cite[Lemma 1]{Nazer:07} (also see \cite[Example 1]{Nazer:07}), $\mathbf{R}_{\sf f}$ is given by all rate tuples such that $R_i\geq H(S)$ for both the  arithmetic sum function and the type function computation.
Therefore, for symmetric MAC, see the definition in \cite[Definition 11]{Nazer:07}, an achievable computation rate by separation is upper bounded by
\begin{equation} \label{eq:upper_separation_ptp}
R^{(\sf sep)}_{\sf ptp}\leq \frac{\bar{C}_{\sf ptp}([1:K])}{KH(S)}
\end{equation}
for orthogonal Gaussian single-hop networks and 
\begin{equation} \label{eq:upper_separation_mac}
R^{(\sf sep)}_{\sf mac}\leq \frac{\bar{C}_{\sf mac}([1:K])}{KH(S)}
\end{equation}
for Gaussian MACs.
\end{example}

\section{Main Results}
In this section, we state our main results.
For a better understanding, we first provide a high level description of the proposed approach based on the single arithmetic sum computation over the Gaussian MAC in Section \ref{subsec:outline}. We then state our results for orthogonal Gaussian networks and Gaussian networks with multiple-access in Sections \ref{subsec:result1} and \ref{subsec:result2}, respectively.

\subsection{Main Idea} \label{subsec:outline}
We begin this section by explaining the essence of how to compute a single arithmetic sum, i.e., $\{f(\mathbf{s}[j])=\sum_{i=1}^K s_i[j]\}_{j=1}^k$, over the Gaussian MAC with equal channel gains.
For notational simplicity, we rewrite the length-$n$ time-extended input--output as $\mathbf{y}=\sum_{i=1}^K \mathbf{x}_i+\mathbf{z}$.
We first apply compute-and-forward in \cite{Nazer:11} to transform the length-$n$ Gaussian MAC into the following length-$m$ modulo-$q$ sum channel:
\begin{align} \label{eq:modulo_sum_ch11}
\mathbf{y}'=\bigoplus_{i=1}^K \mathbf{x}'_i, 
\end{align}
where $\mathbf{x}'_i\in{\mathbb{F}_q}^m$. Here $q$ is  set to be the largest prime number among $[1:n\log n]$ and 
\begin{equation} \label{eq:m_value11}
m=n{\sf C}^{+}\left(\frac{1}{K}+P\right)(\log q)^{-1}.
\end{equation}
Specifically, from Theorem \ref{thm:compute-forward} (also see Example \ref{ex:compute_forward_equal_ch}), by treating $\mathbf{y}'$ in \eqref{eq:modulo_sum_ch11} as the desired function, we can construct the above modulo-$q$ sum channel.

Now consider the computation over the transformed modulo-$q$ sum channel.
The key observation is that utilizing a small portion of input finite field elements and then computing the corresponding modulo-$q$ sum can attain the desired arithmetic sum. Furthermore, linear source coding in Lemma \ref{thm:csiszar} (for this case, $L=1$) can compensate the inefficiency of utilizing only a small portion of input finite field elements by compressing the corresponding modulo-$q$ sum in a distributed manner.

Let $g(\cdot)$ denote the mapping from a subset of integers $[0:q-1]$ to the corresponding finite field $\mathbb{F}_q$.
Define $s'_i[j]=g(s_i[j])$ and $U'=\bigoplus_{i=1}^Kg(S_i)$.
Suppose that the $i$th sender observes $\mathbf{s}'_i=[s'_i[1],\cdots s'_i[k]]^T$, which can be obtained from $\mathbf{s}_i$, and the receiver wishes to compute $\mathbf{u}'=\bigoplus_{i=1}^K \mathbf{s}_i'$.
Then, from Theorem \ref{thm_comp_mac}, the receiver can compute $\mathbf{u}'$ reliably for $m$ sufficiently large (equivalently, for $n$ sufficiently large) if 
\begin{equation}
\frac{k}{m}\leq \frac{\log q}{H(U')}.
\end{equation}
Hence by setting $k=\frac{m\log q}{H(U')}$, the computation rate 
\begin{align} \label{eq:compuration_rate_ex}
R&=\frac{k}{n}\nonumber\\
&=\frac{m\log q}{nH(U')}\nonumber\\
&=\frac{{\sf C}^{+}\left(\frac{1}{K}+P\right)}{H(U')}
\end{align}
is achievable for the desired function $\mathbf{u}'$, where the last equality follows from \eqref{eq:m_value11}.
Since there exists $n_0\geq 0$ such that  $q>(p-1)^2K$ for all $n\geq n_0$ ($q$ is  the largest prime number among $[1:n\log n]$), we have
\begin{align}
\mathbf{u}'&=\left[g\left(\sum_{i=1}^Ks_i[1]\right),\cdots,g\left(\sum_{i=1}^Ks_i[k]\right)\right]^T,\nonumber\\
U'&=g\left(\sum_{i=1}^KS_i\right)
\end{align}
for $n$ sufficiently large.
Since $g(\cdot)$ has one-to-one correspondence, the receiver can compute the arithmetic sum $\sum_{i=1}^K\mathbf{s}_i$ from $\mathbf{u}'$.
Finally, from the fact that $H(g(\sum_{i=1}^KS_i))=H(\sum_{i=1}^KS_i)$, the achievable computation rate for the desired function $\sum_{i=1}^K\mathbf{s}_i$  is given by
\begin{equation} \label{eq:com_mac_ex}
R=\frac{{\sf C}^{+}\left(\frac{1}{K}+P\right)}{H(\sum_{i=1}^KS_i)}.
\end{equation}

%The following example demonstrates the significant gain on the computation rate 

\begin{example}[Arithmetic Sum of I.I.D. Binary Sources] \label{ex:sum_binary}
Suppose that $S_i$'s are independently and uniformly drawn from $\{0,1\}$ and the receiver wishes to compute $\{f(\mathbf{s}[j])=\sum_{i=1}^K s_i[j]\}_{j=1}^k$.
Let $U=\sum_{i=1}^KS_i$. 
Then, from \eqref{eq:com_mac_ex}, $R=\frac{{\sf C}^{+}\left(\frac{1}{K}+P\right)}{H(U)}$ is achievable, where $p_{U}(x)={K \choose x} 2^{-K}$.
On the other hand, an achievable computation rate by separation is upper bounded by $\frac{{\sf C}(K^2 P)}{K}$ from \eqref{eq:upper_separation_mac}. Lastly, the cut-set upper bound in \eqref{eq:cutset_int} shows that an achievable computation rate is upper bounded by $\frac{{\sf C}(K^2 P)}{H(U)}$. Figure \ref{figs:iid_binary_exam} plots these three rates with respect to $K$.
Since $H(U)$ scales as the order of $\log K$ as $K$ increases \cite[Lemma 2.1]{Chang:79}, our computation-based rate decreases as the order of $\log K$, while the separation-based rate decreases almost linearly with an increasing $K$. Therefore, the rate gap between the computation-based and separation-based  approaches becomes significant as $K$ increases.
\end{example}

\begin{figure}[t!]
\begin{center}
%\footnotesize
%\psfrag{m}[b]{$M$}
\includegraphics[scale=1]{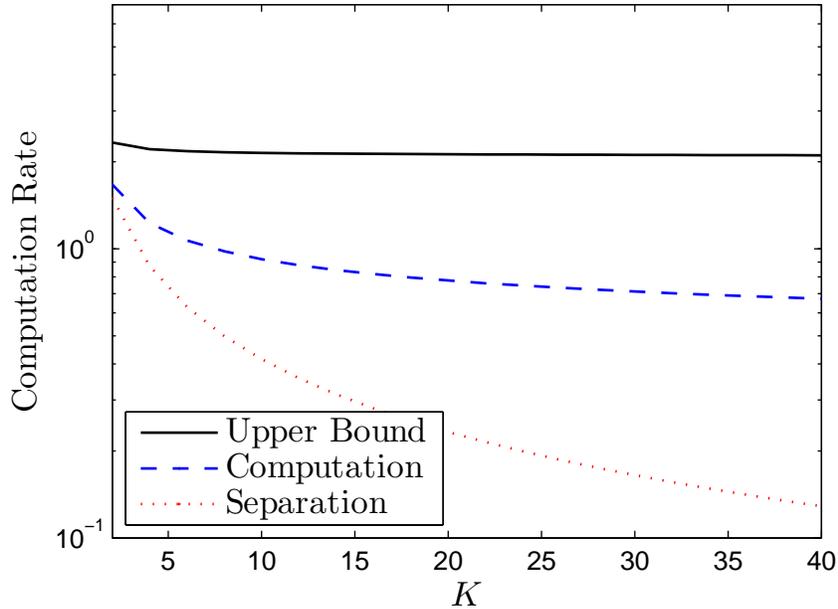}
\end{center}
\vspace{-0.15in}
\caption{Computation of $\sum_{i=1}^K S_i$ for the $K$-user Gaussian MAC with equal channel gains  when $P=15$ dB.}
\label{figs:iid_binary_exam}
\vspace{-0.1in}
\end{figure}

\subsection{Orthogonal Gaussian Networks} \label{subsec:result1}

We first state our main result for orthogonal Gaussian single-hop networks and then extend it to general orthogonal Gaussian networks. We also demonstrate a class of networks that achieves the computation capacity. 

%\subsection{Single-Hop Networks} \label{subsec:single_hop_result}
%In this subsection, we state our results for single-hop networks with simple examples showing significant improvement on computation rates over the conventional separation-based approach especially when the number of sources $K$ is large.

\subsubsection{Single-hop networks}
Consider orthogonal Gaussian single-hop networks in which the length-$n$ time-extended input--output is given by \eqref{eq:gaussian_ptp}.
For orthogonal Gaussian networks, we abstract each Gaussian channel into the corresponding error-free bit-pipe channel using point-to-point capacity-achieving codes.
Then the problem is equivalent to the distributed source coding problem.
In \cite[Section VI]{Ahlswede:83}, a general achievability containing the Slepian--Wolf rate region and the K\"{o}rner--Marton rate region has been proposed for the modulo-two sum computation of binary sources, which has been introduced in Section \ref{subsec:dist_source_coding}.
%An example that this new rate region can strictly enlarge the convex hull of both Slepian--Wolf source coding and K\"{o}rner--Marton linear binning regions has also been provided.
The coding scheme in \cite[Section VI]{Ahlswede:83} can be straightforwardly generalized to more than two users and general finite field sources, and more importantly, to the arithmetic sum and type computation. 
%More importantly, the same argument in Lemma \ref{lemma:sum_modulo} can be applicable to compute the arithmetic sum function.

\begin{theorem}[Orthogonal Gaussian Single-Hop Networks] \label{thm:com_rate_orthogonal_unequal}
Consider the orthogonal Gaussian single-hop network.
Let $W_i$, $i\in[1:K]$, be an auxiliary random variable that forms a Markov chain $W_i-S_i- \{S_j,W_j\}_{j\in[1:K]\setminus \{i\}}$. 
Then any computation rate satisfying 
\begin{align} \label{eq:com_rate_orthogonal_unequal}
R\leq\frac{\sum_{i\in \Sigma}{\sf C}(h_i^2 P)}{I(\{W_i\}_{i\in\Sigma};\mathbf{S}|\{W_i\}_{i\in[1:K]\setminus\Sigma})+|\Sigma|H(f(\mathbf{S})|W_1,\cdots, W_K)}
%R&\leq \frac{{\sf C}(h_1^2)}{I(V_1;S_1|V_2)+H(U_1,\cdots,U_L|V_1,V_2)},\nonumber\\
%R&\leq \frac{{\sf C}(h_2^2)}{I(V_2;S_2|V_1)+H(U_1,\cdots,U_L|V_1,V_2)},\nonumber\\
%R&\leq \frac{{\sf C}(h_1^2)+{\sf C}(h_2^2)}{I(V_1,V_2;S_1,S_2)+2H(U_1,\cdots,U_L|V_1,V_2)}
\end{align}
for all $\Sigma\subseteq [1:K]$ is achievable, where $f(\mathbf{S})=(U_1,\cdots,U_L)$ for the arithmetic sum function and $f(\mathbf{S})=(B_0,\cdots,B_{p-1})$ for the type function.
\end{theorem}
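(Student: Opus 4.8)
The plan is to reduce the channel problem to a pure distributed source coding problem and then overlay two coding layers, generalizing the two-user scheme of \cite[Section VI]{Ahlswede:83} to $K$ users, general finite-field sources, and the arithmetic sum / type functions. First I would invoke Remark \ref{re:wired}: since the links $\mathbf{y}_i = h_i \mathbf{x}_i + \mathbf{z}_i$ are orthogonal point-to-point Gaussian channels, each can be turned into an essentially error-free bit pipe of rate arbitrarily close to ${\sf C}(h_i^2 P)$ bits per channel use by a capacity-achieving point-to-point code. Over a block of $n$ channel uses, sender $i$ thus reliably conveys about $n\,{\sf C}(h_i^2 P)$ noiseless bits, and the task collapses to: using these bit budgets, let the receiver compute $\{f(\mathbf{s}[j])\}_{j=1}^k$. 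It then suffices to exhibit a distributed source code whose per-symbol sum-rate over every cut $\Sigma$ matches the denominator of \eqref{eq:com_rate_orthogonal_unequal}.

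For the source code I would split each sender's budget into two layers. \emph{Description layer:} generate $W_i^k$ i.i.d.\ through the test channel $p(w_i\mid s_i)$; because the stated Markov structure $W_i - S_i - \{S_j,W_j\}_{j\ne i}$ makes each $W_i$ depend only on its own source, a Berger--Tung-type covering-plus-binning argument lets the receiver reconstruct the full tuple $\{W_i^k\}_{i=1}^K$ reliably provided the per-symbol rates satisfy $\sum_{i\in\Sigma} R_i^{(d)} \ge I(\{W_i\}_{i\in\Sigma};\mathbf{S}\mid\{W_i\}_{i\notin\Sigma})$ for every $\Sigma$. \emph{Computation layer:} viewing the arithmetic sum and, via Definition \ref{def:rv2}, the type as a vector of integer sums, embed $[0:p-1]$ into $\mathbb{F}_q$ via an injection $g$ with $q$ a prime exceeding the largest attainable function value (so the modulo-$q$ sum equals the integer sum), and apply the conditional, vector version of Csisz\'{a}r's linear Slepian--Wolf coding of Lemma \ref{thm:csiszar}. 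Each sender transmits the \emph{same} linear hash $\mathbf{H}\,g(\mathbf{s}_i)$, so the receiver obtains $\mathbf{H}\bigoplus_i g(\mathbf{s}_i)$, a hash of $f(\mathbf{S})$; using the already-recovered $\{W_i^k\}$ as decoder side information, it recovers $f$ whenever the common per-sender rate obeys $R_i^{(c)} \ge H(f(\mathbf{S})\mid W_1,\dots,W_K)$. Since this rate is paid by every sender, it contributes $|\Sigma|\,H(f(\mathbf{S})\mid W_1,\dots,W_K)$ to the cut-$\Sigma$ sum.

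Adding the layers, sender $i$'s required per-symbol rate is $R_i^{(d)}+R_i^{(c)}$, while the bit pipe supplies $\tfrac{n}{k}{\sf C}(h_i^2P) = {\sf C}(h_i^2P)/R$ bits per source symbol with $R=k/n$. Feasibility over the cut $\Sigma$ then reads $\sum_{i\in\Sigma}{\sf C}(h_i^2 P)/R \ge I(\{W_i\}_{i\in\Sigma};\mathbf{S}\mid\{W_i\}_{i\notin\Sigma}) + |\Sigma|\,H(f(\mathbf{S})\mid W_1,\dots,W_K)$; requiring it for all $\Sigma$ and solving for $R$ yields exactly \eqref{eq:com_rate_orthogonal_unequal}, and a union bound over the channel-code, description-layer, and computation-layer error events drives $P^{(n)}_e\to 0$ as $n$ grows.

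The main obstacle I anticipate is cleanly gluing the two layers: I must verify that conditioning the linear computation layer on the recovered side information $\{W_i^k\}$ genuinely lowers its per-sender rate to $H(f\mid W_1,\dots,W_K)$, which needs a \emph{conditional} and \emph{vector} extension of Lemma \ref{thm:csiszar} (linear maps good simultaneously for several correlated sums in the presence of side information) together with a careful sequencing of the two decoders so the residual linear-decoding error stays negligible once the descriptions are in hand. Confirming that the Berger--Tung sum-rate specializes to the mutual-information term precisely under the stated Markov chain, and that the finite-field embedding preserves the arithmetic sum (and hence the type) uniformly over all cuts, are the remaining technical points.
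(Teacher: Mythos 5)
Your proposal is correct and follows essentially the same route as the paper's Appendix~I: abstract each orthogonal Gaussian link into a bit pipe of rate ${\sf C}(h_i^2P)$, embed the sources into $\mathbb{F}_q$ with a prime $q>(p-1)^2K$ so that modulo-$q$ sums recover the arithmetic/type function (the argument of Lemma~\ref{lemma:sum_modulo}), and generalize the Ahlswede--Han hybrid scheme of \cite[Section VI]{Ahlswede:83} to $K$ users --- covering-plus-binning for the descriptions $\{W_i\}$ together with a conditional K\"{o}rner--Marton/Csisz\'{a}r linear layer at rate $H(f(\mathbf{S})|W_1,\cdots,W_K)$ per sender --- before matching the resulting cut constraints against the tuple $({\sf C}(h_i^2P)/R)_{i\in[1:K]}$. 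The technical points you flag (the conditional vector extension of Lemma~\ref{thm:csiszar} and the rate constraints for decoding the auxiliary sequences) are exactly the ones the paper delegates to \cite[Section VI]{Ahlswede:83} and \cite[Theorem III.2]{Huang:12}, so your layered split $R_i=R_i^{(d)}+R_i^{(c)}$ is just an explicit rendering of the same region \eqref{eq:ahlswede}.
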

\begin{proof}
We refer to Appendix I for the proof.
\end{proof}

Notice that if we set $W_i=\emptyset$ for all $i\in[1:K]$, then Theorem \ref{thm:com_rate_orthogonal_unequal} provides  
\begin{align}
R\leq\frac{\sum_{i\in \Sigma}{\sf C}(h_i^2 P)}{|\Sigma|H(f(\mathbf{S}))}
\end{align}
for all $\Sigma\subseteq [1:K]$, which corresponds to the achievable computation rate from linear source coding for computing.
On the other hand, if we set $W_i=S_i$ for all $i\in[1:K]$, then Theorem \ref{thm:com_rate_orthogonal_unequal} provides  
\begin{align} \label{eq:slepian_wolf_comp}
R\leq\frac{\sum_{i\in \Sigma}{\sf C}(h_i^2 P)}{H(\{S_i\}_{i\in\Sigma}|\{S_i\}_{i\in[1:K]\setminus\Sigma})}
\end{align}
for all $\Sigma\subseteq [1:K]$, which corresponds to the achievable computation rate by separation.

Unlike the computation over MAC, for example, see Section \ref{subsec:outline}, the following example shows that without multiple-access component linear source coding cannot improve the computation rate achievable by separation if the sources are independent. 

\begin{example}[Independent Sources]
Suppose that the sources are independent to each other, i.e., $p_{\mathbf{S}}(\cdot)=\prod_{i=1}^Kp_{S_i}(\cdot)$. 
Then \eqref{eq:slepian_wolf_comp} yields $R\leq\frac{\sum_{i\in \Sigma}{\sf C}(h_i^2 P)}{\sum_{i\in\Sigma}H(S_i)}$.
For the arithmetic sum and type functions, $H(f(\mathbf{S}))\geq H(S_i)$ for all $i\in[1:K]$ when the sources are independent.
Hence, the separation-based computation always outperforms the computation based on linear source coding for independent sources.
\end{example}

When the sources are correlated, however, linear source coding is  useful even without multiple-access component. For DSBS($\alpha$), for example, the computation based on linear source coding  outperforms the separation-based computation when $\alpha$ is small, which can be verified in the following example.

\begin{example}[Arithmetic Sum of DSBS($\alpha$)]
Suppose that $K=2$, $h_1=h_2=1$, and the sources follows DSBS($\alpha$).
The receiver wishes to compute $\{f(\mathbf{s}[j])=s_1[j]+s_2[j]\}_{j=1}^k$.
From Theorem \ref{thm:com_rate_orthogonal_unequal}, let $W_1=S_1\oplus Z_1$ and $W_2=S_2\oplus Z_2$, where $Z_1$ and $Z_2$ are independent and follow $\operatorname{Bern}(\beta)$.
Then, due to symmetry, any computation rate satisfying
\begin{align} \label{eq:hybrid}
R&\leq\frac{{\sf C}(P)}{I(W_1;S_1,S_2|W_2)+H(S_1+S_2|W_1, W_2)},\nonumber\\
R&\leq\frac{2{\sf C}(P)}{I(W_1,W_2;S_1,S_2)+2H(S_1+S_2|W_1, W_2)}
\end{align}
is achievable. Figure \ref{figs:dsbs_orthogonal} plots the computation rate in \eqref{eq:hybrid}.
For comparison, we also plot the computation rates achievable by computation (setting $W_1=S_1$ and $W_2=S_2$ in \eqref{eq:hybrid}) and separation (setting $W_1=W_2=\emptyset$), respectively.
The cut-set upper bound is given by $R\leq\min\{{\sf C}(P)/H(Z),2{\sf C}(P)/(H(S_1+S_2))\}$ \cite{Korner:79}.
As shown in the figure, the computation based on linear source coding is indeed helpful even if there is no multiple-access component.
For this symmetric source, the hybrid approach in Theorem \ref{thm:com_rate_orthogonal_unequal} provides the maximum of the two computation rates achievable by the computation and separation schemes.
\end{example}

\begin{figure}[t!]
\begin{center}
%\footnotesize
%\psfrag{m}[b]{$M$}
\includegraphics[scale=1]{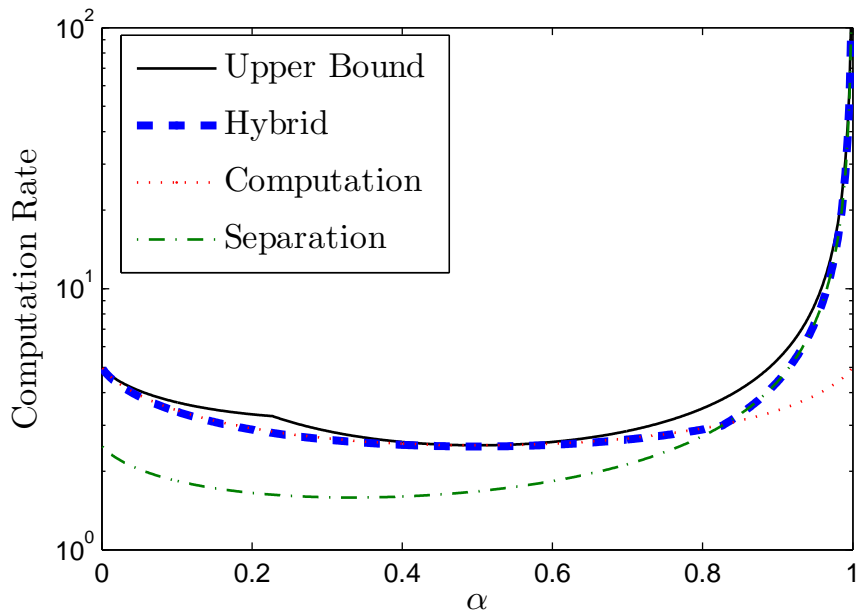}
\end{center}
\vspace{-0.15in}
\caption{Computation of $S_1+S_2$ for the two-user orthogonal Gaussian single-hop network with equal channel gains when $P=15$ dB.}
\label{figs:dsbs_orthogonal}
\vspace{-0.1in}
\end{figure}

\subsubsection{General networks} \label{sec:com_networks}
Now consider general orthogonal Gaussian networks. 
For function computation, the previous work in \cite{Nazer:07,Nazer:11} has exploited the similarity between the channel and the desired function.
Specifically, modulo-$p$ sum computation over (noisy) modulo-$p$ sum channel or Gaussian MAC has been considered.
It seems that channel's multiple-access or superposition property is essentially required to compute sum-type functions more efficiently than the separation-based computation.
We demonstrate that, even for orthogonal channels with no multiple-access component, relaying based on linear network coding provides an efficient end-to-end interface for function computation, which yields the following theorem.

\begin{theorem}[Orthogonal Gaussian Networks] \label{thm:orthogonal_net}
Consider the orthogonal Gaussian network. 
Then the computation rate 
\begin{equation} \label{eq:com_rate_ptp}
R_{\sf ptp}=\frac{\min_{i\in [1:K]}\bar{C}_{\sf ptp}(\{i\})}{H(f(\mathbf{S}))}
\end{equation} 
is achievable, where $f(\mathbf{S})=(U_1,\cdots,U_L)$ for the arithmetic sum function and $f(\mathbf{S})=(B_0,\cdots,B_{p-1})$ for the type function.
\end{theorem}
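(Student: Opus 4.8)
The plan is to reduce the problem to a single high-dimensional modulo-$q$ sum computation over a noiseless wired network and then realize that computation by linear network coding at the min-cut rate. First I would abstract every Gaussian link $(u,v)$ into an error-free bit-pipe of capacity ${\sf C}(h_{u,v}^2 P)$ using a capacity-achieving point-to-point channel code on each edge; by the source--channel separation theorem this is asymptotically free and turns $G$ into a wired network with edge capacities ${\sf C}(h_{u,v}^2 P)$. I would then lift the sources to a large prime field: fix $q$ to be an increasing (in $n$) prime with $q>(p-1)^2K$, let $g(\cdot)$ be the one-to-one embedding $[0:q-1]\to\mathbb{F}_q$, and set $\mathbf{s}_i'=g(\mathbf{s}_i)$. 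Since $q$ exceeds the largest attainable value of any component of $f$, each desired output is recovered bijectively from the $\mathbb{F}_q$-linear combinations $V_l=\bigoplus_{i=1}^K a_{li}g(S_i)$ (for the type function the $a_{li}$ are the indicator weights), so that $H(V_1,\dots,V_L)=H(f(\mathbf{S}))$.

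Next I would invoke the Nazer--Gastpar/Csisz\'{a}r construction exactly as in Theorem~\ref{thm_comp_mac}: pick matrices $\mathbf{H}_1,\dots,\mathbf{H}_L$ of sizes $n_l\times k$ over $\mathbb{F}_q$ corresponding to a Slepian--Wolf point with sum rate $H(V_1,\dots,V_L)$, and have sender $i$ form the length-$M$ vector $\mathbf{t}_i=[a_{1i}\mathbf{H}_1\mathbf{s}_i',\dots,a_{Li}\mathbf{H}_L\mathbf{s}_i']^T$ with $M=\sum_{l=1}^L n_l$. The whole task then collapses to delivering the single componentwise modulo-$q$ sum $\bigoplus_{i=1}^K\mathbf{t}_i=[\mathbf{H}_1\mathbf{v}_1,\dots,\mathbf{H}_L\mathbf{v}_L]^T$ to $d$, where $\mathbf{v}_l=\bigoplus_i a_{li}\mathbf{s}_i'$; from this $d$ recovers all sequences $\mathbf{v}_l$ by Lemma~\ref{thm:csiszar} and hence $\{f(\mathbf{s}[j])\}_{j=1}^k$.

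The main obstacle is the network step: showing that linear network coding delivers the modulo-$q$ sum $\bigoplus_i\mathbf{t}_i$ to the single sink $d$ at rate $\min_i\bar{C}_{\sf ptp}(\{i\})$. I would establish this by the reverse-multicast duality. Passing to a time-expanded acyclic version of the network (so that over $n$ channel uses each edge $(u,v)$ provides $n{\sf C}(h_{u,v}^2 P)/\log q$ parallel $\mathbb{F}_q$-symbols), reverse all edges and regard $d$ as a single source multicasting to the sinks $\{t_i\}$. By edge reversal the min-cut from $d$ to $t_i$ in the reversed graph equals the $t_i$--$d$ min-cut $\bar{C}_{\sf ptp}(\{i\})$ in $G$, so the multicast theorem for linear network codes over a sufficiently large field yields a code multicasting $M$ symbols to every $t_i$ provided $M\log q\le n\min_i\bar{C}_{\sf ptp}(\{i\})$. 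The transpose of this code, run in the original network, computes $\bigoplus_i\mathbf{t}_i$ at $d$; this is precisely the single-sink case of communicating a sum of sources, for which the individual-min-cut rate is known to be achievable. Letting $q$ grow with $n$ simultaneously handles the ``field large enough'' requirement and keeps $g$ one-to-one.

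Finally, combining $M\log q\le n\min_i\bar{C}_{\sf ptp}(\{i\})$ with the Csisz\'{a}r requirement $M\log q=\big(\sum_l n_l\big)\log q\ge kH(V_1,\dots,V_L)=kH(f(\mathbf{S}))$ gives $k/n\le \min_i\bar{C}_{\sf ptp}(\{i\})/H(f(\mathbf{S}))$, so this rate is achievable, as claimed. I expect the genuinely delicate points to be the network-coding step (the reverse-multicast/transpose argument together with the passage to an acyclic time-expanded graph), and the bookkeeping that converts the per-edge bit-pipe capacities into integer numbers of $\mathbb{F}_q$-symbols with vanishing rounding loss as $n\to\infty$.
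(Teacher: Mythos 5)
Your proposal is correct, and it reaches the theorem through the same three-stage architecture as the paper: abstract each Gaussian link into a bit-pipe of capacity ${\sf C}(h_{u,v}^2P)$, embed the problem into $\mathbb{F}_q$ with $q>(p-1)^2K$ so that modulo-$q$ sums coincide with the integer sums defining $f$, and apply the Csisz\'{a}r/Nazer--Gastpar linear Slepian--Wolf computation code so that only the single modulo-$q$ sum $\bigoplus_i\mathbf{t}_i$ must reach $d$. Where you genuinely diverge from the paper is in the proof of the network step. The paper's Lemma~\ref{thm:transform} (Appendix II) uses \emph{random} linear network coding at every relay: Theorem~\ref{thm:ADT} (Avestimehr--Diggavi--Tse) guarantees that each single-source end-to-end transfer matrix $\mathbf{H}_{t_i}$ is full rank with probability tending to one whenever $\tau\le\min_{\Omega\in\Lambda(\{i\})}\operatorname{rank}\left(H_\Omega(\{i\})\right)$, each sender then precodes with $\mathbf{H}_{t_i}^{-1}$ to produce the clean sum channel, and cyclic topologies are handled by time-unfolding with a vanishing $(T-|V|)/T$ loss. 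You instead invoke the reverse-multicast/transposition duality: multicast $M$ symbols from $d$ to all $\{t_i\}$ in the edge-reversed, time-expanded graph (possible at rate $\min_i\bar{C}_{\sf ptp}(\{i\})$ since edge reversal preserves cut values), then transpose the linear code so that in the forward network $d$ receives $\bigoplus_i G_i^T\mathbf{x}_i$ with each $G_i$ invertible, and precode with $(G_i^T)^{-1}$. This is precisely the duality of \cite{Zhan:13}, which the paper itself credits as the inspiration for its transformation, so your route is sound. It buys a deterministic construction (field size at least $K$ suffices for the multicast code, and $q>(p-1)^2K$ already exceeds this) at the price of importing the multicast theorem and the transposition identity; the paper's random-coding argument is self-contained given \cite{Avestimehr:11} and, importantly, reuses verbatim in the proof of Theorem~\ref{thm:net_multiple_acess}, where the bit-pipe abstraction is replaced by compute-and-forward.

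Two small points to tighten. First, letting $q$ grow with $n$ is unnecessary here and mildly risky: unlike the multiple-access case, no lattice decoding is involved, so a \emph{fixed} prime $q>(p-1)^2K$ suffices (the paper fixes the field in this proof), whereas Lemma~\ref{thm:csiszar} is stated for a fixed field as $k\to\infty$, so a growing $q$ would require a uniformity argument you have not supplied. Second, the type function is not of the form $\bigoplus_i a_{li}\,g(S_i)$ for any choice of ``indicator weights'' $a_{li}$, since it is not linear in the sources; as in Lemma~\ref{lemma:sum_modulo}, each sender must first locally compute the indicator sequences $\mathbf{1}_{s_i[j]=l}$ for $l\in[0:p-1]$ and feed those into the same linear scheme. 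This is a one-line fix, but your formula for $\mathbf{t}_i$ as written covers only the arithmetic sum case.
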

\begin{proof}
We refer to Section \ref{subsec:ortho_com_networks} for the proof.
\end{proof}

For convenience, denote the cut-set upper bound in \eqref{eq:cutset_ptp} for $\Sigma=[1:K]$ as 
\begin{equation} \label{eq:r_u_ptp}
R^{(u)}_{\sf ptp}:=\frac{\bar{C}_{\sf ptp}([1:K])}{H(f(\mathbf{S}))}
\end{equation}
and the achievable computation rate in Theorem \ref{thm:orthogonal_net} as 
\begin{equation} \label{eq:r_l_ptp}
R^{(l)}_{\sf ptp}:=\frac{\min_{i\in [1:K]}\bar{C}_{\sf ptp}(\{i\})}{H(f(\mathbf{S}))}.
\end{equation}
%Then 
%\begin{equation} \label{eq:gap_ptp}
%R^{(u)}_{\sf ptp}-R^{(l)}_{\sf ptp}=\frac{\bar{C}_{\sf ptp}([1:K])-\min_{i\in [1:K]}\bar{C}_{\sf ptp}(\{i\})}{H(f(\mathbf{S}))}.
%\end{equation}

\begin{figure}[t!]
\begin{center}
%\footnotesize
%\psfrag{m}[b]{$M$}
\includegraphics[scale=1.3]{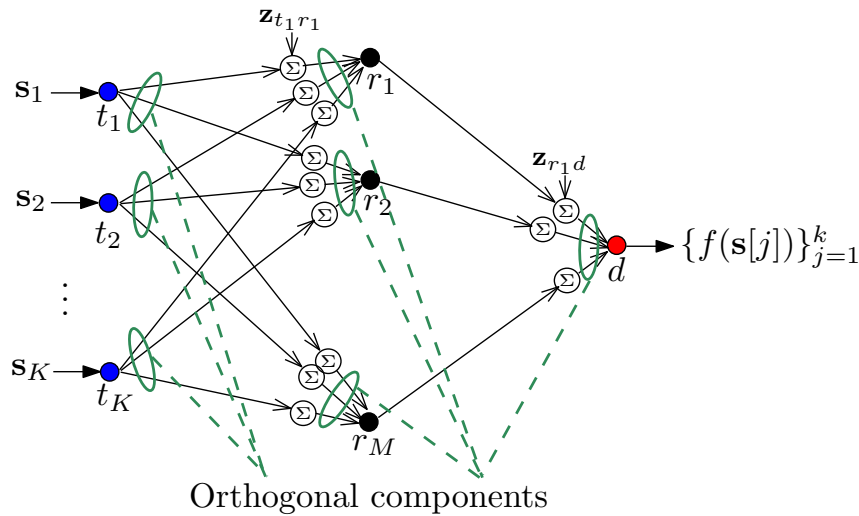}
\end{center}
\vspace{-0.15in}
\caption{An example of orthogonal Gaussian networks that achieves the computation capacity, where the channel coefficients are equal to one on all links.}
\label{figs:example_network1}
\vspace{-0.1in}
\end{figure}

\begin{remark}[Computation Capacity] \label{re:com_cap}
From \eqref{eq:r_u_ptp} and \eqref{eq:r_l_ptp}, the gap between $R^{(u)}_{\sf ptp}$ and $R^{(l)}_{\sf ptp}$ is zero if the condition $\bar{C}_{\sf ptp}([1:K])=\min_{i\in [1:K]}\bar{C}_{\sf ptp}(\{i\})$ is satisfied, which characterizes the computation capacity. Figure \ref{figs:example_network1} is an example of this class of networks.
Basically, any layered network with equal channel gains can be an example.
\end{remark}

\begin{remark}[Bit-Pipe Wired Networks]
As mentioned in Remark \ref{re:wired}, we can easily interpret the results in Theorems \ref{thm:com_rate_orthogonal_unequal} and \ref{thm:orthogonal_net} for bit-pipe wired networks. Hence, Remark \ref{re:com_cap} also provides the computation capacity for a certain class of bit-pipe wired networks, which closes the gap between the lower and upper bounds in \cite{Appuswamy2:11,Appuswamy:11} in the case of the arithmetic sum function computation.
\end{remark}

\subsection{Gaussian Networks With Multiple-Access} \label{subsec:result2}
We extend the idea presented in Section \ref{subsec:outline} to the computation of multiple weighted arithmetic sums, which contains the type function, and also to a general network topology in the following theorem.
For the achievability, we first abstract each multiple-access component by compute-and-forward and then apply linear network coding at each relay node to convert the original Gaussian network with multiple-access into the end-to-end modulo-$q$ sum channel.

\begin{theorem}[Gaussian Networks With Multiple-Access] \label{thm:net_multiple_acess}
Consider the Gaussian network with multiple-access. 
Then the computation rate 
\begin{equation} \label{eq:com_rate_int}
R_{\sf mac}=\frac{\min_{i\in [1:K]}C^+_{\sf mac}(\{i\})}{H(f(\mathbf{S}))}
\end{equation} 
is achievable, where
\begin{equation}
C^+_{\sf mac}(\{i\}) = \min _{\Omega\in \Lambda(\{i\})}\sum_{v\in\Omega^c}\mathbf{1}_{\Gamma_{in}(v)\cap \Omega\neq\emptyset}{\sf C}^{+}\left(\frac{1}{|\Gamma_{in}(v)|}+\min_{u\in\Gamma_{in}(v)}h_{u,v}^2P\right).
\end{equation}
Here $f(\mathbf{S})=(U_1,\cdots,U_L)$ for the arithmetic sum function and $f(\mathbf{S})=(B_0,\cdots,B_{p-1})$ for the type function.
\end{theorem}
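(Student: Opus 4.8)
The plan is to prove the achievability by composing two transformations already developed in the paper. First I would abstract every Gaussian multiple-access component into a modulo-$q$ sum channel via compute-and-forward, reducing the whole network to an end-to-end modulo-$q$ sum network; then I would run the linear-network-coding-plus-linear-source-coding scheme underlying Theorem~\ref{thm:orthogonal_net} on this transformed network. Throughout I would let $q$ be the largest prime in $[1:n\log n]$, so that $q\to\infty$ (satisfying the vanishing-error requirement of Theorem~\ref{thm:compute-forward}) and eventually $q>(p-1)^2K$, which guarantees that the finite-field sums coincide bijectively with the integer-valued arithmetic-sum or type entries through the one-to-one map $g(\cdot)$ introduced in Section~\ref{subsec:outline}.

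For the node-by-node abstraction, consider any $v\in V$ with in-neighbors $\Gamma_{in}(v)$. I would have every $u\in\Gamma_{in}(v)$ transmit a lattice codeword carrying a finite-field message $\mathbf{x}'_{u,v}\in\mathbb{F}_q^{m_v}$, and I would rescale the transmit powers so that the effective received power of each in-link is equalized to that of the weakest one, $\min_{u\in\Gamma_{in}(v)}h_{u,v}^2 P$ (a stronger sender simply backs off its power, still meeting the constraint $\frac1n\|\mathbf{x}_{u,v}\|^2\leq P$). Applying Theorem~\ref{thm:compute-forward} with the all-ones coefficient vector then lets node $v$ reliably decode the modulo-$q$ sum $\bigoplus_{u\in\Gamma_{in}(v)}\mathbf{x}'_{u,v}$ at rate ${\sf C}^{+}\!\left(\frac{1}{|\Gamma_{in}(v)|}+\min_{u\in\Gamma_{in}(v)}h_{u,v}^2 P\right)(\log q)^{-1}$ per channel use, exactly as computed in Example~\ref{ex:compute_forward_equal_ch}. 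The indicator $\mathbf{1}_{\Gamma_{in}(v)\cap\Omega\neq\emptyset}$ in the statement records that a node contributes to a cut $\Omega\in\Lambda(\{i\})$ only when it actually receives a superposition crossing that cut.

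After the abstraction, the network is an end-to-end modulo-$q$ sum network in which each relay forwards an $\mathbb{F}_q$-linear combination of the sums it has decoded. As in the proof of Theorem~\ref{thm:orthogonal_net}, structured (linear) network coding over $\mathbb{F}_q$ lets the receiver reconstruct the desired modulo-$q$ sums $\bigoplus_{i}\mathbf{H}_l\mathbf{s}'_i$, and the achievable number of finite-field symbols per channel use is governed by the single-source min-cuts, so the bottleneck is $\min_{i\in[1:K]}C^+_{\sf mac}(\{i\})$. I would then invoke Lemma~\ref{thm:csiszar} to compress the $L$ correlated sums (arithmetic sum) or the $p$ correlated counts (type) by linear Slepian--Wolf coding: by Definitions~\ref{def:rv1} and~\ref{def:rv2} together with the bijectivity of $g(\cdot)$, the total source-coding rate equals $H(f(\mathbf{S}))$. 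Dividing the end-to-end symbol rate by $H(f(\mathbf{S}))$ and letting $n\to\infty$ yields $R_{\sf mac}=\frac{\min_{i}C^+_{\sf mac}(\{i\})}{H(f(\mathbf{S}))}$.

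The delicate part is the interface between the per-node compute-and-forward layer and the network-wide linear coding layer. Concretely, the block lengths $m_v$ decoded at different nodes are generally unequal, so I would need to synchronize them across layers (for instance by operating on a common super-block and zero-padding shorter components) so that the locally computed modulo-$q$ sums can be consistently combined into the single global linear function required at $d$. Equally, establishing that the worst-case single-source cut $\min_i C^+_{\sf mac}(\{i\})$ is \emph{simultaneously} achievable, rather than merely separately for each $i$, requires the linear-network-coding multicast argument of the orthogonal case, ensuring every source's contribution is routed concurrently without rank deficiency over $\mathbb{F}_q$. This, together with verifying that the compute-and-forward error probability at every node vanishes uniformly as $q\to\infty$, is where the bulk of the technical care lies.
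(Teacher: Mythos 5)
Your proposal is correct and takes essentially the same route as the paper's own proof: per-node compute-and-forward with power back-off to the weakest incoming gain $\min_{u\in\Gamma_{in}(v)}h_{u,v}^2P$ converts each multiple-access component into a length-$m_v$ modulo-$q$ sum channel (with $q$ the largest prime in $[1:n\log n]$, so $q>(p-1)^2K$ eventually), after which the paper applies the network transformation of Lemma \ref{thm:transform} via multi-letter ``super-symbol'' coding---whose heterogeneous-dimension finite field model absorbs the unequal $m_v$ you worry about---followed by the linear source coding of Lemma \ref{lemma:sum_modulo}. The simultaneous achievability across all $K$ sources that you correctly flag as the delicate point is exactly what Corollary \ref{co:rank_condition} establishes, by a union bound showing all $K$ random end-to-end matrices are full rank with probability approaching one.
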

\begin{proof}
We refer to Section \ref{subsec:net_multiple_acess} for the proof.
\end{proof}

Similarly, denote the cut-set upper bound in \eqref{eq:cutset_int} for $\Sigma=[1:K]$ as 
\begin{equation} 
R^{(u)}_{\sf mac}:=\frac{\bar{C}_{\sf mac}([1:K])}{H(f(\mathbf{S}))}
\end{equation}
and the achievable computation rate in Theorem \ref{thm:net_multiple_acess} as 
\begin{equation} 
R^{(l)}_{\sf mac}:=\frac{\min_{i\in [1:K]}C^+_{\sf mac}(\{i\})}{H(f(\mathbf{S}))}.
\end{equation}
Then the following corollary holds.

\begin{corollary}[Approximate Computation Capacity] \label{co:diff_gap_int}
If $h_{u,v}$ are the same for all $u\in \Gamma_{in}(v)$ and $\bar{C}_{\sf mac}([1:K])-\min_{i\in[1:K]}\bar{C}_{\sf mac}(\{i\})\leq c_1 |V|\log |V|$ for some constant $c_1>0$, then 
\begin{equation} \label{eq:diff_gap_int}
R^{(u)}_{\sf mac}-R^{(l)}_{\sf mac}\leq \frac{c_2|V| \log |V|}{{H(f(\mathbf{S}))}}
\end{equation}
for any power $P$, where $c_2>0$ is some constant.
\end{corollary}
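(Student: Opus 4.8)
The plan is to bound the numerator of $R^{(u)}_{\sf mac}-R^{(l)}_{\sf mac}$, namely $\bar{C}_{\sf mac}([1:K])-\min_{i\in[1:K]}C^+_{\sf mac}(\{i\})$, by $c_2|V|\log|V|$ and then divide by $H(f(\mathbf{S}))$. First I would split this numerator as
\begin{align}
\bar{C}_{\sf mac}([1:K]) - \min_{i}C^+_{\sf mac}(\{i\}) &= \Big(\bar{C}_{\sf mac}([1:K]) - \min_{i}\bar{C}_{\sf mac}(\{i\})\Big) \nonumber\\
&\quad + \Big(\min_{i}\bar{C}_{\sf mac}(\{i\}) - \min_{i}C^+_{\sf mac}(\{i\})\Big),
\end{align}
where the first bracket is at most $c_1|V|\log|V|$ by hypothesis, so only the second bracket — the gap between the single-source cut-set value and the single-source achievable value — requires work.

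For the second bracket, I would let $i^\star$ be a minimizer of $C^+_{\sf mac}(\{i\})$, so that $\min_{i}\bar{C}_{\sf mac}(\{i\}) \leq \bar{C}_{\sf mac}(\{i^\star\})$ and the task reduces to bounding $\bar{C}_{\sf mac}(\{i^\star\}) - C^+_{\sf mac}(\{i^\star\})$. The complication is that these two quantities are minima over cuts of \emph{different} objectives, so I would sidestep the cut mismatch by letting $\Omega$ attain the minimum in $C^+_{\sf mac}(\{i^\star\})$ and evaluating $\bar{C}_{\sf mac}(\{i^\star\})$ at the \emph{same} cut $\Omega$, which can only increase it since $\bar{C}_{\sf mac}$ is itself a minimum. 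Using the equal-gain hypothesis $h_{u,v}=h_v$, at this common cut each $v\in\Omega^c$ contributes ${\sf C}(k_v^2 h_v^2 P)$ to the cut-set value and $\mathbf{1}_{\Gamma_{in}(v)\cap\Omega\neq\emptyset}\,{\sf C}^{+}(\frac{1}{d_v}+h_v^2 P)$ to the achievable value, where $d_v=|\Gamma_{in}(v)|$ and $k_v=|\Gamma_{in}(v)\cap\Omega|$. Nodes with $\Gamma_{in}(v)\cap\Omega=\emptyset$ contribute zero to both (since ${\sf C}(0)=0$), so the difference becomes a sum over at most $|V|$ active nodes of per-node gaps.

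The main obstacle is the per-node bound, which must hold \emph{uniformly in $P$} even though both ${\sf C}(k_v^2 h_v^2 P)$ and ${\sf C}^{+}(\frac{1}{d_v}+h_v^2 P)$ individually grow like $\frac12\log P$. The key observation I would exploit is that ${\sf C}^{+}(x)=\max\{\frac12\log x,0\}\geq \frac12\log x$, which makes the two logarithmic growths cancel:
\begin{align}
{\sf C}(k_v^2 h_v^2 P) - {\sf C}^{+}\Big(\tfrac{1}{d_v}+h_v^2 P\Big) &\leq \frac12\log\frac{1+k_v^2 h_v^2 P}{\frac{1}{d_v}+h_v^2 P}.
\end{align}
A short monotonicity check in $P$ (the derivative of the ratio has the sign of $k_v^2/d_v-1$) shows that the right-hand ratio interpolates between $d_v$ at $P=0$ and $k_v^2$ as $P\to\infty$, hence is at most $\max\{d_v,k_v^2\}\leq d_v^2\leq |V|^2$ for \emph{every} $P$, giving a per-node gap of at most $\log|V|$. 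Summing over the at most $|V|$ active nodes yields $\bar{C}_{\sf mac}(\{i^\star\}) - C^+_{\sf mac}(\{i^\star\})\leq |V|\log|V|$; combining this with the hypothesis gives the numerator bound with $c_2=c_1+1$, and dividing by $H(f(\mathbf{S}))$ establishes \eqref{eq:diff_gap_int}.
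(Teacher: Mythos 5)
Your proposal is correct and takes essentially the same route as the paper's proof: the same decomposition via $i^\star=\arg\min_i C^+_{\sf mac}(\{i\})$ with the gap $\bar{C}_{\sf mac}(\{i^\star\})-C^+_{\sf mac}(\{i^\star\})$ evaluated at the common cut $\Omega^*$ minimizing $C^+_{\sf mac}(\{i^\star\})$, followed by a per-node, $P$-uniform gap of order $\log|V|$ summed over at most $|V|$ nodes. The only (cosmetic) difference is the final per-node estimate, where the paper uses ${\sf C}(|V|^2h_{u,v}^2P)-{\sf C}^+(h_{u,v}^2P)\le \log(|V|^2)+1$ via ${\sf C}(x)-{\sf C}^+(x)\le 1$, while you analyze the ratio $\bigl(1+k_v^2h_v^2P\bigr)/\bigl(\tfrac{1}{d_v}+h_v^2P\bigr)$ directly to obtain the slightly sharper per-node bound $\log|V|$.
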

\begin{proof}
We refer to \ref{subsec:net_multiple_acess} for the proof.
\end{proof}

\begin{figure}[t!]
\begin{center}
%\footnotesize
%\psfrag{m}[b]{$M$}
\includegraphics[scale=1.3]{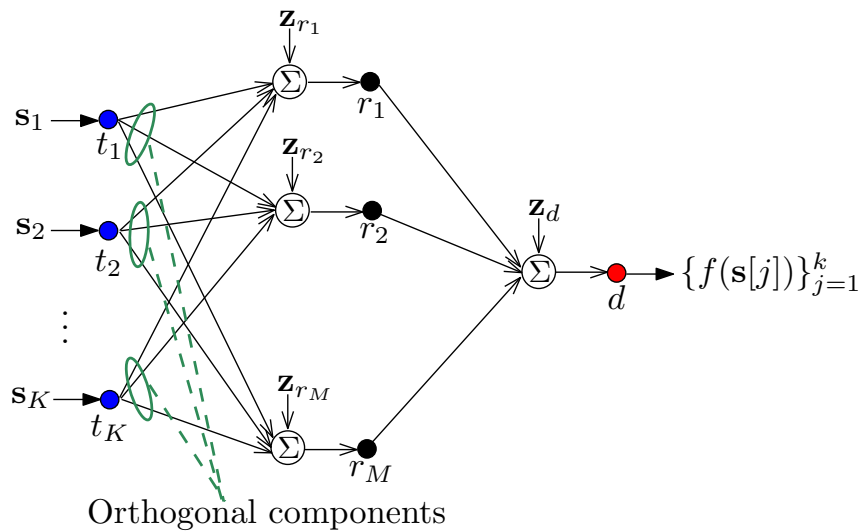}
\end{center}
\vspace{-0.15in}
\caption{An example of Gaussian networks with multiple-access that satisfies the condition in Corollary \ref{co:diff_gap_int}, where the channel coefficients are equal to one on all links.}
\label{figs:example_network2}
\vspace{-0.1in}
\end{figure}

\begin{remark}[Bounds on the Capacity Gap]
First of all, the gap in Corollary \ref{co:diff_gap_int} does not depend on $P$ and therefore provides a universal performance guarantee for any $P$.
Also, since $H(f(\mathbf{S})))$ is an increasing function of $K\leq |V|$, the gap in Corollary \ref{co:diff_gap_int} increases at most as $|V| \log |V|$.
Figure \ref{figs:example_network2} is an example of the class of networks satisfying the condition in Corollary \ref{co:diff_gap_int}.
Basically, any layered network with equal channel gains can be an example.
\end{remark}

\begin{remark}[Tighter Bound for the Gaussian MAC] \label{re:tighter_bound}
For the single-hop case, i.e., the Gaussian MAC, we can easily tighten the gap in Corollary \ref{co:diff_gap_int}. Specifically, if ${\sf C}\left((\sum_{i=1}^K h_i)^2P\right)-{\sf C}\left(\frac{1}{K}+\min_{i\in[1:K]}h_i^2P\right)\leq c_3 \log K$ for some constant $c_3>0$, then 
\begin{equation}  \label{eq:gap_mac}
R^{(u)}_{\sf mac}-R^{(l)}_{\sf mac}\leq \frac{c_4 \log K}{H(f(\mathbf{S}))}
\end{equation}
for any power $P$, where $c_4>0$ is some constant.
\end{remark}

\section{Computation Over Linear Finite Field Networks} \label{sec:finite_field}
In this section, we introduce the following linear finite field network and show how to compute the desired function over the considered network.
Specifically, we first explain the computation over the modulo-$q$ sum channel in Section \ref{subsec:com_mod_sum}.
Then we introduce the network transformation method that converts a general linear finite field network into  the modulo-$q$ sum channel in Section \ref{subsec:network_trans}.
The computation coding and transformation method presented in this section will be used for proving the results for general Gaussian networks with orthogonal components in Section \ref{sec:com_net}.
%We then briefly discuss about the computation over the linear finite field network in Section \ref{subset:com_finite_field}.

Again, we assume a network represented by a directed graph $G=(V,E)$ and the same source and desired function in Definitions \ref{def:desired_function1} to \ref{def:rv2}.
For the considered linear finite field network model, the input--output at time $t$ is given by
\begin{equation} \label{eq:input_output_field}
y^{(t)}_v=\bigoplus_{u\in\Gamma_{in}(v)}h_{u,v}x^{(t)}_u,
\end{equation}
where $x^{(t)}_u\in\mathbb{F}^{\alpha_u}_q$, $y^{(t)}_v\in\mathbb{F}^{\beta_v}_q$, and $h_{u,v}\in\mathbb{F}^{\beta_v\times \alpha_u}_q$. 
Here $q$ is assumed to be a prime number.
Then the length-$n$ time-extended input--output is represented as
\begin{equation} \label{eq:input_output_field2}
\mathbf{y}_v=\bigoplus_{u\in\Gamma_{in}(v)}\mathbf{H}_{u,v}\mathbf{x}_u,
\end{equation}
where $\mathbf{x}_u=\Big[{x^{(1)}_u}^T,\cdots,{x^{(n)}_u}^T\Big]^T\in\mathbb{F}_q^{n\alpha_u}$, $\mathbf{y}_v=\Big[{y^{(1)}_u}^T,\cdots,{y^{(n)}_u}^T\Big]^T\in\mathbb{F}_q^{n\beta_v}$, and $\mathbf{H}_{u,v}\in\mathbb{F}_q^{n\beta_v\times n\alpha_u}$ denotes the block diagonal matrix consisting of $h_{u,v}$ at each block diagonal element.

\begin{remark}
Without loss of generality, we can assume that $\alpha_v=\beta_v$ is the same for all $v\in V$ since the effect of different values of $\alpha_v$ and $\beta_v$ can be equivalently reflected by the channel matrix $h_{u,v}$.
However, we allow different values of $\alpha_v$ and $\beta_v$ in this section for easy explanation of the transformation from the Gaussian network model in Section \ref{sec:com_net}.
\end{remark}
\begin{remark}
The considered linear finite field network includes the bit-pipe wired channel model and the linear finite field deterministic model proposed in \cite{Avestimehr:11}.
\end{remark}

\subsection{Computation Over the Modulo-$q$ Sum Channel} \label{subsec:com_mod_sum}
As a special case of the considered linear finite field network, first consider the modulo-$q$ sum channel.
The following lemma shows how to compute the arithmetic sum or type function over the modulo-$q$ sum channel when the field size $q$ is large enough.
This lemma is of crucially importance to prove the main theorems in the paper. 
The key observation is that utilizing a small portion of finite field elements and then computing the corresponding modulo-$q$ sum can attain the desired function. Furthermore, linear source coding in Lemma \ref{thm:csiszar} can compensate the inefficiency of utilizing only a small portion of finite field elements and, as a result, achieves the optimal computation rate when the field size $q$ is large enough.

\begin{lemma}[Computation Over the Modulo-$q$ Sum Channel]\label{lemma:sum_modulo}
Consider the computation over the modulo-$q$ sum channel in which the length-$n$ modulo-$q$ sum channel is given as
\begin{align} \label{eq:modulo_sum_ch}
\mathbf{y}'=\bigoplus_{i=1}^K \mathbf{x}'_i, 
\end{align}
where $\mathbf{x}'_i\in{\mathbb{F}_q}^n$ and $q$ is a prime number. 
If $q>(p-1)^2K$, then the computation capacity is given by 
\begin{equation} \label{eq:sum_modulo_sum_ch}
R=\frac{\log q}{H(f(\mathbf{S}))},
\end{equation}
where  $f(\mathbf{S})=(U_1,\cdots,U_L)$ for the arithmetic sum function and $f(\mathbf{S})=(B_0,\cdots,B_{p-1})$ for the type function.

\end{lemma}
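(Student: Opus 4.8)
The plan is to prove both achievability of $R=\frac{\log q}{H(f(\mathbf{S}))}$ and a matching converse. The achievability reduces to the Nazer--Gastpar scheme of Theorem~\ref{thm_comp_mac} after \emph{lifting} the integer-valued sources into the field $\mathbb{F}_q$, so that the desired integer arithmetic sums become honest modulo-$q$ sums; the hypothesis $q>(p-1)^2K$ is exactly what rules out wrap-around in this lifting. The converse follows from Fano's inequality together with the fact that the output $\mathbf{y}'$ lives in $\mathbb{F}_q^n$ and hence carries at most $n\log q$ bits.

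For the achievability I would fix the bijection $g$ embedding $[0:q-1]$ into $\mathbb{F}_q$, set $\tilde{\mathbf{s}}_i=g(\mathbf{s}_i)$ coordinatewise, and use the elementary fact that $g(a)\oplus g(b)=g(a+b)$ and $g(a)g(b)=g(ab)$ in $\mathbb{F}_q$ whenever $a+b<q$ and $ab<q$ respectively. For the arithmetic sum function $U_l=\sum_{i=1}^K a_{li}S_i\in[0:(p-1)^2K]$ and each product satisfies $a_{li}S_i\le(p-1)^2$, so $q>(p-1)^2K$ gives
\begin{equation}
V_l:=\bigoplus_{i=1}^K g(a_{li})\,g(S_i)=g(U_l),\qquad l\in[1:L].
\end{equation}
For the type function I would have each sender form $\mathbf{1}_{S_i=l}\in\{0,1\}$ and note $B_l=\sum_{i=1}^K\mathbf{1}_{S_i=l}\le K<q$, so $V_l:=\bigoplus_{i=1}^K g(\mathbf{1}_{S_i=l})=g(B_l)$ for $l\in[0:p-1]$. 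In either case $g$ is a bijection, so the lifted tuple is in one-to-one correspondence with $f(\mathbf{S})$ and $H(V_1,\dots)=H(f(\mathbf{S}))$.

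With this reduction, I would run the linear Slepian--Wolf construction of Lemma~\ref{thm:csiszar} on the lifted sources exactly as in the proof of Theorem~\ref{thm_comp_mac}: pick matrices $\mathbf{H}_l$ of size $n_l\times k$ over $\mathbb{F}_q$ at a Slepian--Wolf point with sum rate $H(V_1,\dots)$, let sender $i$ transmit the vector whose $l$th block is $g(a_{li})\mathbf{H}_l\tilde{\mathbf{s}}_i$ (respectively $\mathbf{H}_l\,g(\mathbf{1}_{\mathbf{s}_i=l})$ for the type), and set $n=\sum_l n_l$. The channel superposes these as $\mathbf{y}'=[\mathbf{H}_1\mathbf{v}_1,\dots]^T$, where $\mathbf{v}_l$ is the length-$k$ vector of $V_l$-values, so the receiver recovers $(\mathbf{v}_1,\dots)$ with vanishing error as $k\to\infty$ provided $n\log q\ge k\,H(V_1,\dots)$, and inverting $g$ returns $f(\mathbf{S})$. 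Taking $k=\frac{n\log q}{H(f(\mathbf{S}))}$ makes $R=\frac{\log q}{H(f(\mathbf{S}))}$ achievable. For the converse, let $F^k=(f(\mathbf{s}[1]),\dots,f(\mathbf{s}[k]))$; the $\mathbf{s}[j]$ are i.i.d. so $H(F^k)=k\,H(f(\mathbf{S}))$. The estimate $\hat{f}$ is a deterministic function of $\mathbf{y}'$, whence Fano gives $H(F^k\mid\mathbf{Y}')\le k\epsilon_n$ with $\epsilon_n\to0$ as $P^{(n)}_e\to0$, and
\begin{equation}
k\,H(f(\mathbf{S}))=H(F^k)\le I(F^k;\mathbf{Y}')+k\epsilon_n\le H(\mathbf{Y}')+k\epsilon_n\le n\log q+k\epsilon_n,
\end{equation}
since $\mathbf{Y}'$ ranges over a set of size $q^n$. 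Rearranging gives $R\le\frac{\log q}{H(f(\mathbf{S}))-\epsilon_n}\to\frac{\log q}{H(f(\mathbf{S}))}$.

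I expect the only delicate point to be the wrap-around bookkeeping behind $V_l=g(U_l)$ and $V_l=g(B_l)$: one must verify that neither the individual products $a_{li}S_i$ nor the accumulated sums ever reach $q$, which is precisely where $q>(p-1)^2K$ is used and why a single field size simultaneously handles the arithmetic sum and the type functions. Once these identities are in place, the linear source coding, the entropy preservation under $g$, and the Fano converse are all routine.
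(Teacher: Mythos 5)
Your proposal is correct and follows essentially the same route as the paper: the achievability lifts the integer sources into $\mathbb{F}_q$ via the bijection $g$, uses $q>(p-1)^2K$ to rule out wrap-around so that $g(U_l)=\bigoplus_i g(a_{li})g(S_i)$ (and $g(B_l)=\bigoplus_i g(\mathbf{1}_{S_i=l})$), invokes the Nazer--Gastpar linear Slepian--Wolf construction of Theorem~\ref{thm_comp_mac}, and exploits entropy preservation under $g$, exactly as in the paper's proof. Your explicit Fano argument for the converse is just the paper's cut-set bound (for the cut $\Sigma=[1:K]$, where the cut value is $\log q$ per channel use) written out in full, so it is the same argument in substance.
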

\begin{proof}
Let $g(\cdot)$ denote the mapping from a subset of integers $[0:q-1]$ to the corresponding finite field $\mathbb{F}_q$.

First consider the arithmetic sum computation.
Define $s'_i[j]=g(s_i[j])$, $a'_{li}=g(a_{li})$, and $U_l'=\bigoplus_{i=1}^Ka'_{li}g(S_i)$ for $i\in[1:K]$ and $l\in[1:L]$.
Suppose that the $i$th sender observes $\mathbf{s}'_i=[s'_i[1],\cdots s'_i[k]]^T$, which can be obtained from $\mathbf{s}_i$, and the receiver wishes to compute $(\mathbf{u}_1',\cdots,\mathbf{u}_L')=\left(\bigoplus_{i=1}^K a'_{1i}\mathbf{s}_i',\cdots,\bigoplus_{i=1}^K a'_{Li}\mathbf{s}_i'\right)$.
From Theorem \ref{thm_comp_mac}, the receiver can compute $(\mathbf{u}_1',\cdots,\mathbf{u}_L')$ reliably for $n$ sufficiently large if
\begin{equation}
\frac{k}{n}\leq \frac{\log q}{H(U'_1,\cdots,U'_L)}.
\end{equation}
Hence by setting $k=\frac{n\log q}{H(U'_1,\cdots,U'_L)}$, the computation rate 
\begin{align} \label{eq:compuration_rate_ex}
R&=\frac{k}{n}\nonumber\\
&=\frac{\log q}{H(U'_1,\cdots,U'_L)}
\end{align}
is achievable for the desired function $(\mathbf{u}_1',\cdots,\mathbf{u}_L')$.
From the condition $q>(p-1)^2K$, we have
\begin{align}
\mathbf{u}'_l&=\left[g\left(\sum_{i=1}^Ka_{li}s_i[1]\right),\cdots,g\left(\sum_{i=1}^Ka_{li}s_i[k]\right)\right]^T,\label{eq:condition11}\\
U_l'&=g\left(\sum_{i=1}^Ka_{li}S_i\right)=g(U_l).\label{eq:condition22}
\end{align}
Since $g(\cdot)$ has one-to-one correspondence, from \eqref{eq:condition11}, the receiver can compute $\left(\sum_{i=1}^Ka_{1i}\mathbf{s}_i,\cdots,\sum_{i=1}^Ka_{Li}\mathbf{s}_i\right)$ from $(\mathbf{u}_1',\cdots,\mathbf{u}_L')$,  
Finally, from the fact that $H((U'_1,\cdots,U'_L)=H(U_1,\cdots,U_L)$, which can be verified from \eqref{eq:condition22},  the achievable computation rate for $\left(\sum_{i=1}^Ka_{1i}\mathbf{s}_i,\cdots,\sum_{i=1}^Ka_{Li}\mathbf{s}_i\right)$  is given by \eqref{eq:sum_modulo_sum_ch}. 

Now consider the type computation.
Define $b_{li}[j]=\mathbf{1}_{s_i[j]=l}$.
Let $b'_{li}[j]=g(b_{li}[j])$ and $\mathbf{b}'_{li}=[b'_{li}[1],\cdots,b'_{li}[k]]^T$.
Suppose that the $i$th sender observes $\mathbf{b}'_{li}$, which can be obtained from $\mathbf{s}_i$, and the receiver wishes to compute $\{\mathbf{u}''_l=\bigoplus_{i=1}^K \mathbf{b}'_{li}\}_{l=0}^{p-1}$.
From Theorem \ref{thm_comp_mac}, the receiver can compute $\{\mathbf{u}_l''\}_{l=1}^L$ reliably for $n$ sufficiently large if 
\begin{equation}
\frac{k}{n}\leq \frac{\log q}{H(B'_0,\cdots,B'_{p-1})},
\end{equation}
where  $B_l'=\bigoplus_{i=1}^K B'_{li}$ and $B'_{li}=g(\mathbf{1}_{S_i=l})$.
Hence by setting $k=\frac{n\log q}{H(B'_0,\cdots,B'_{p-1})}$, the computation rate 
\begin{align}
R&=\frac{k}{n}\nonumber\\
&=\frac{\log q}{H(B'_0,\cdots,B'_{p-1})}
\end{align}
is achievable for the desired function $\{\mathbf{u}_l''\}_{l=0}^{p-1}$.
From the condition $q>(p-1)^2K$, we have
\begin{align}
\mathbf{u}''_l&=\left[g\left(\sum_{i=1}^Kb_{li}[1]\right),\cdots,g\left(\sum_{i=1}^Kb_{li}[k]\right)\right]^T,\nonumber\\
B'_l&=g\left(\sum_{i=1}^K \mathbf{1}_{S_i=l}\right)=g(B_l).
\end{align}
Since $g(\cdot)$ has one-to-one correspondence, the receiver can compute $\{b_0(\mathbf{s}[j]),\cdots,b_{p-1}(\mathbf{s}[j])\}_{j=1}^k$ from $\{\mathbf{u}_l''\}_{l=0}^{p-1}$.
Finally, from the fact that $H(B'_0,\cdots,B'_{p-1})=H(B_0,\cdots,B_{p-1})$, the achievable computation rate for the type function is given by \eqref{eq:sum_modulo_sum_ch}. 

The converse for both cases can be easily shown from the same cut-set argument in Section \ref{subset:cut_set}, which completes the proof.
\end{proof}

\begin{remark}[Computation of Multiple Arithmetic Sums]
Notice that the channel model in \eqref{eq:modulo_sum_ch} is exactly the same as in \eqref{eq:modulo_sum_ch11} except that $q$ is fixed in this section.
Hence Lemma \ref{lemma:sum_modulo} extends the computation of a single arithmetic sum in Section \ref{subsec:outline} to the computation of multiple weighted arithmetic sums (the type function can be represented as multiple arithmetic sums, see Definition \ref{def:desired_function2} and Remark \ref{re:sum_type}).
\end{remark}

\subsection{Computation Via Network Transformation} \label{subsec:network_trans}
We now describe how to convert a general linear finite field network into the modulo-$q$ sum channel in \eqref{eq:modulo_sum_ch} in the following lemma.  
The basic principle is similar to those proposed in \cite{Zhan:13} in the sense that linear network coding is applied at each relay node to construct an end-to-end linear finite field channel and then precoder at each sender to convert the end-to-end channel  into the linear finite field channel in \eqref{eq:modulo_sum_ch}. 

To describe the following lemma, for $\Sigma\subseteq [1:K]$, let $H_{\Omega}(\Sigma)$ denote the transfer matrix associated with the cut $\Omega\in \Lambda(\Sigma)$ on $G(\Sigma)$.

\begin{lemma}[Transform Into the Modulo-$q$ Sum Channel] \label{thm:transform}
Consider the linear finite field network in which the length-$n$ time extended input--output is given by \eqref{eq:input_output_field2} . For $n$ sufficiently large, the network can be transformed into the following modulo-$q$ sum channel: 
\begin{equation}
\mathbf{y}'_d=\bigoplus_{i=1}^K \mathbf{x}'_{t_i},
\end{equation}
where $\mathbf{x}'_{t_i}\in\mathbb{F}^{n\min_{i\in [1:K]}\min _{\Omega\in \Lambda(\{i\})}\operatorname{rank}\left(H_{\Omega}(\{i\})\right)}_q$ for all $i\in[1:K]$.
\end{lemma}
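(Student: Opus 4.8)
The plan is to realize the transformation in the two linear stages announced just before the statement: first install a fixed linear network code at every relay so that the end-to-end map from the senders to $d$ is $\mathbb{F}_q$-linear, and then install a precoder at each sender together with a single linear post-processor at $d$ to turn that map into the modulo-$q$ sum. Since $G$ may be assumed acyclic (cycles are unfolded over the $n$ time extensions) and every operation in \eqref{eq:input_output_field2} is $\mathbb{F}_q$-linear, letting each relay $v$ apply a fixed linear map $\mathbf{F}_v$ to its incoming block $\mathbf{y}_v$ makes the received block at the destination
\begin{equation}
\mathbf{y}_d=\bigoplus_{i=1}^K \mathbf{T}_i\,\mathbf{x}_{t_i},
\end{equation}
where $\mathbf{T}_i\in\mathbb{F}_q^{n\beta_d\times n\alpha_{t_i}}$ is the end-to-end transfer matrix from $t_i$ to $d$, a fixed polynomial in the channel matrices $\{h_{u,v}\}$ and the local coefficients $\{\mathbf{F}_v\}$.

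Write $r:=\min_{i\in[1:K]}\min_{\Omega\in\Lambda(\{i\})}\operatorname{rank}(H_\Omega(\{i\}))$ for the target per-symbol dimension. The first step is to invoke the algebraic min-cut/max-flow theorem for linear network coding \cite{Koetter:03,Ho:06,Avestimehr:11}: for $n$ large enough to compensate the fixed field size $q$ (the time extension plays the role of a large alphabet), the local coefficients can be chosen so that $\operatorname{rank}(\mathbf{T}_i)\ge nr$ for every $i$ simultaneously, since each rank-deficiency event is a proper algebraic condition on the coefficients and only $K$ of them must be avoided. The second and more delicate step is alignment: I want the images $\operatorname{Im}(\mathbf{T}_i)$ to share a common $nr$-dimensional subspace $W\subseteq\mathbb{F}_q^{n\beta_d}$, so that there is a matrix $\mathbf{G}$ with $\operatorname{Im}(\mathbf{G})=W$ admitting, for each $i$, a precoder $\mathbf{P}_i\in\mathbb{F}_q^{n\alpha_{t_i}\times nr}$ solving $\mathbf{T}_i\mathbf{P}_i=\mathbf{G}$. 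Setting $\mathbf{x}_{t_i}=\mathbf{P}_i\mathbf{x}'_{t_i}$ and applying a left inverse $\mathbf{D}$ of $\mathbf{G}$ at the destination then yields
\begin{equation}
\mathbf{y}'_d=\mathbf{D}\,\mathbf{y}_d=\bigoplus_{i=1}^K \mathbf{D}\mathbf{T}_i\mathbf{P}_i\,\mathbf{x}'_{t_i}=\bigoplus_{i=1}^K \mathbf{x}'_{t_i},
\end{equation}
with $\mathbf{x}'_{t_i}\in\mathbb{F}_q^{nr}$, which is exactly the claimed channel.

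The hard part is securing the common subspace $W$ with a single network code, rather than merely controlling each $\operatorname{rank}(\mathbf{T}_i)$ in isolation. A choice of $\mathbf{D}$ (equivalently of $W$) in general position fails: for the bottleneck senders one has $\operatorname{rank}(\mathbf{T}_i)=nr$ exactly, so requiring $\operatorname{rank}(\mathbf{D}\mathbf{T}_i)=nr$ forces $\ker\mathbf{D}$ to be transversal to $\operatorname{Im}(\mathbf{T}_i)$, a condition that holds only with constant probability per sender over a fixed field and hence cannot be met for all $K$ senders by a union bound. I would resolve this exactly as in \cite{Zhan:13}, by designing the code through the reversed (transpose) network: there $d$ becomes a single source that multicasts $nr$ symbols to all the $t_i$, which is feasible because $r$ is at most the reverse $d$-to-$t_i$ min-cut $\min_{\Omega\in\Lambda(\{i\})}\operatorname{rank}(H_\Omega(\{i\}))$, and a single random linear network code achieves this multicast for $n$ large \cite{Koetter:03,Ho:06}. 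Transposing that code back produces forward transfer matrices whose images are simultaneously driven onto a common $nr$-dimensional subspace, which is precisely the $W$ needed above, and it secures the ranks and the alignment at once. I expect the bookkeeping of this transpose argument, and the verification that the multicast code survives transposition with the ranks intact, to be the main technical obstacle; the remaining linear algebra (solving $\mathbf{T}_i\mathbf{P}_i=\mathbf{G}$ and inverting $\mathbf{G}$) is routine once $W$ is in hand.
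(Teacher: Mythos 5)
Your overall architecture---random linear maps at the relays, end-to-end transfer matrices, per-sender full rank, zero-forcing precoders at the senders plus a linear post-processor at $d$---matches the paper's, but your key step is genuinely different, and your stated reason for taking it misreads what the direct approach requires. The paper never fixes the relays' code and then asks for a destination map $\mathbf{D}$ in general position: it makes the destination compression $\mathbf{F}_d$ \emph{and} the sender precoders part of the jointly random code, forms the square $n\tau\times n\tau$ composite matrices $\mathbf{H}_{t_i}$, and invokes the Avestimehr--Diggavi--Tse distinguishability result (Theorem \ref{thm:ADT}) to conclude $\operatorname{rank}(\mathbf{H}_{t_i})=n\tau$ with probability tending to one as $n$ grows (Corollary \ref{co:rank_condition}); a union bound over the $K$ senders and the substitution $\mathbf{x}''_{t_i}=\mathbf{H}_{t_i}^{-1}\mathbf{x}'_{t_i}$ then finish the layered case. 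So your impossibility argument (constant per-sender success probability, union bound fails) targets a construction the paper does not use: within the ADT analysis the per-sender failure probability vanishes in $n$, at worst after an $\epsilon$ back-off from the exact min-cut rank, which is immaterial for the asymptotic rate statements the lemma feeds. Your alternative---design a multicast code from $d$ to $\{t_i\}$ in the reversed network and transpose it, per the duality in \cite{Zhan:13}---is nonetheless a sound and known route to the same conclusion, and can even be made deterministic rather than probabilistic, at the price of importing the duality theorem. Separately, your ``common subspace $W$'' requirement is stronger than needed: since the destination map is linear, it suffices that $\operatorname{rank}(\mathbf{D}\mathbf{T}_i)=nr$ separately for each $i$, with $\mathbf{P}_i$ a right inverse of $\mathbf{D}\mathbf{T}_i$; the matrices $\mathbf{T}_i\mathbf{P}_i$ need not coincide, because $\mathbf{D}$ acts on the modulo-$q$ sum term by term.

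The one genuine gap is the non-layered case. Your parenthetical that cycles ``are unfolded over the $n$ time extensions'' conflates the block-length extension with layering: the $n$-fold extension in \eqref{eq:input_output_field2} leaves the graph structure intact, and no choice of $n$ removes cycles or non-layered paths. The paper's Appendix II.B handles this with a separate parameter $T$: it builds a $(T+2)$-stage unfolded network $G_{\sf TU}$ with memory links of rate $T\bar{C}_{\min}$, lower-bounds each unfolded per-sender min-cut by $(T-|V|)\min_{\Omega\in\Lambda(\{i\})}\operatorname{rank}\left(H_{\Omega}(\{i\})\right)\log q$, and recovers the dimension claimed in the lemma only in the limit $T\to\infty$ through the factor $\tau/T$. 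Your proof needs this step (or an equivalent acyclicity reduction applied to your reversed network, where you would additionally have to check that unfolding commutes with the transposition) before it covers the lemma as stated.
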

\begin{proof}
We refer to Appendix II for the proof.
\end{proof}

Based on Lemmas \ref{lemma:sum_modulo} and \ref{thm:transform}, we characterize the computation capacity when $q>(p-1)^2K$ in the following theorem.

\begin{theorem}[Linear Finite Field Networks] \label{thm:com_finite_field_net}
Consider the linear finite field network in which the length-$n$ time-extended input--output is given as in \eqref{eq:input_output_field2}. If $q>(p-1)^2K$ and
\begin{equation} \label{eq:cap_condition}
\min _{\Omega\in \Lambda([1:K])}\operatorname{rank}\left(H_{\Omega}([1:K])\right)=\min_{i\in [1:K]}\min _{\Omega\in \Lambda(\{i\})}\operatorname{rank}\left(H_{\Omega}(\{i\})\right),
\end{equation}
then the computation capacity is given by
\begin{equation} \label{eq:achievable_finite}
R=\frac{\min_{i\in [1:K]}\min _{\Omega\in \Lambda(\{i\})}\operatorname{rank}\left(H_{\Omega}(\{i\})\right)}{H(f(\mathbf{S}))},
\end{equation}
where  $f(\mathbf{S})=(U_1,\cdots,U_L)$ for the arithmetic sum function and $f(\mathbf{S})=(B_0,\cdots,B_{p-1})$ for the type function.
\end{theorem}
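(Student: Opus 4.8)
The plan is to establish \eqref{eq:achievable_finite} by matching an achievability bound against a converse bound, with the hypothesis \eqref{eq:cap_condition} being exactly the condition that forces the two to coincide. For achievability I would simply compose the two lemmas already in hand. First invoke Lemma \ref{thm:transform}: for $n$ sufficiently large the linear finite field network in \eqref{eq:input_output_field2} collapses into an end-to-end modulo-$q$ sum channel $\mathbf{y}'_d=\bigoplus_{i=1}^K\mathbf{x}'_{t_i}$ whose block length is $n\min_{i\in[1:K]}\min_{\Omega\in\Lambda(\{i\})}\operatorname{rank}(H_\Omega(\{i\}))$. Since $q$ is prime and $q>(p-1)^2K$ by hypothesis, I can then feed this transformed channel into Lemma \ref{lemma:sum_modulo}, which computes the arithmetic sum (resp.\ type) function over a modulo-$q$ sum channel at the optimal rate $\log q/H(f(\mathbf{S}))$. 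Tracking the block lengths through the two lemmas, the number of source symbols computed per network use equals $\min_{i}\min_{\Omega}\operatorname{rank}(H_\Omega(\{i\}))$ times the modulo-$q$ computation rate, which is precisely the rate in \eqref{eq:achievable_finite}.

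For the converse I would specialize the cut-set argument of Section \ref{subset:cut_set} to the linear finite field channel. Across any cut $\Omega$ the channel is a deterministic $\mathbb{F}_q$-linear map with transfer matrix $H_\Omega(\Sigma)$, so the maximization in \eqref{eq:min_cut} is attained by uniform inputs and the cut capacity evaluates to $\operatorname{rank}(H_\Omega(\Sigma))$ per $q$-ary symbol. Taking $\Sigma=[1:K]$ in the cut-set bound, so that no genie side information is supplied and the denominator is the full entropy $H(f(\mathbf{S}))$, gives
\begin{equation}
R\le\frac{\min_{\Omega\in\Lambda([1:K])}\operatorname{rank}(H_\Omega([1:K]))}{H(f(\mathbf{S}))}.
\end{equation}
Hypothesis \eqref{eq:cap_condition} replaces the numerator by $\min_{i}\min_{\Omega}\operatorname{rank}(H_\Omega(\{i\}))$, which is exactly the achievable numerator. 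Since achievability is a lower bound on capacity and this single $\Sigma=[1:K]$ cut term is a valid upper bound, the two are squeezed together and \eqref{eq:achievable_finite} is the computation capacity; note that I never need to argue that $\Sigma=[1:K]$ yields the \emph{tightest} cut, only that it yields \emph{a} matching upper bound.

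The bulk of the technical work is already discharged inside Lemmas \ref{lemma:sum_modulo} and \ref{thm:transform}, so the remaining subtlety, and the main point to get right, is the role of \eqref{eq:cap_condition}. In general the transform only guarantees a usable end-to-end dimension equal to the smallest single-source bottleneck $\min_{i}\min_{\Omega}\operatorname{rank}(H_\Omega(\{i\}))$, whereas the converse is governed by the aggregate bottleneck $\min_{\Omega}\operatorname{rank}(H_\Omega([1:K]))$, and these need not agree: two parallel unit-rank links from two senders to the receiver give aggregate rank $2$ but single-source rank $1$. Condition \eqref{eq:cap_condition} is precisely the balancedness that eliminates this gap, and any layered network with equal transfer ranks furnishes an instance. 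I would therefore concentrate the care on verifying that, under \eqref{eq:cap_condition}, the transform-based scheme attains the $\Sigma=[1:K]$ cut value and that the block-length and $q>(p-1)^2K$ requirements of the two lemmas are jointly satisfiable as $n\to\infty$.
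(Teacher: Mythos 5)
Your proposal is correct and follows essentially the same route as the paper's own proof: achievability by composing the network transformation of Lemma \ref{thm:transform} with the modulo-$q$ computation code of Lemma \ref{lemma:sum_modulo}, and a converse from the cut-set bound $R\le \min_{\Omega\in\Lambda([1:K])}\operatorname{rank}\left(H_{\Omega}([1:K])\right)/H(f(\mathbf{S}))$ evaluated at $\Sigma=[1:K]$, closed by hypothesis \eqref{eq:cap_condition}. Your added justifications---that the deterministic linear cut is maximized by uniform inputs, and the two-parallel-links example showing why \eqref{eq:cap_condition} is genuinely needed---are correct elaborations of details the paper leaves implicit.
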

\begin{proof}
The achievability follows by the network transformation using Lemma \ref{thm:transform} and then the computation over the transformed modulo-$q$ sum channel using Lemma \ref{lemma:sum_modulo}.
The converse follows by the cut-set upper bound showing that $R\leq\frac{\min _{\Omega\in \Lambda([1:K])}\operatorname{rank}\left(H_{\Omega}([1:K])\right)}{H(f(\mathbf{S}))}$ and the condition in \eqref{eq:cap_condition}, which completes the proof.
\end{proof}

\begin{remark}[Type Computation]
For the type computation, the condition $q>K$ is enough for proving Lemma \ref{lemma:sum_modulo} and Theorem \ref{thm:com_finite_field_net}. 
\end{remark}

\begin{remark}[Computation Over Gaussian Networks]
We will apply the same computation coding and network transformation in Lemmas \ref{lemma:sum_modulo} and \ref{thm:transform} after converting Gaussian networks into linear finite field networks in Section \ref{sec:com_net}.
For this case, we set $q$ arbitrarily large and, as a result, the condition $q>(p-1)^2K$ in Lemma \ref{lemma:sum_modulo} and Theorem \ref{thm:com_finite_field_net} disappears for Gaussian networks.
\end{remark}

\section{Computation Over Gaussain Networks With Orthogonal Components} \label{sec:com_net}
We are now ready to prove Theorems \ref{thm:orthogonal_net} and \ref{thm:net_multiple_acess} and Corollary \ref{co:diff_gap_int}, which are about the computation over general Gaussian networks with orthogonal components.

\subsection{Orthogonal Gaussian Networks} \label{subsec:ortho_com_networks}
Consider orthogonal Gaussian networks in which the length-$n$ time-extended input--output is given as in \eqref{eq:in_out_ptp}.
In the following, we prove the achievable computation rate in Theorem \ref{thm:orthogonal_net}.
The achievability follows the network abstraction based on capacity-achieving point-to-point channel codes and then transformation into the modulo-$q$ sum channel via linear network coding at each relay (the same transformation used in Lemma \ref{thm:transform}).
Finally, we apply linear source coding for computation over the transformed modulo-$q$ sum channel as the same manner used in Lemma \ref{lemma:sum_modulo}.

\subsubsection{Proof of Theorem \ref{thm:orthogonal_net}}
By applying capacity-achieving point-to-point channel codes, for $n$ sufficiently large, the length-$n$ time extended orthogonal Gaussian channel in \eqref{eq:in_out_ptp} can be transformed into
\begin{equation} \label{eq:transform_ptp}
\mathbf{y}'_{u,v}=\mathbf{x}'_{u,v}
\end{equation}
for all $(u,v)\in E$, where $\mathbf{x}_{u,v}\in\mathbb{F}^{m_{u,v}}_q$.
Here $m_{u,v}\log q\le n{\sf C}(h^2_{u,v}P)$ should be satisfied.
Hence we set $q=K+1$ and $m_{u,v}=n{\sf C}(h^2_{u,v}P)(\log q)^{-1}$.

In order to use Theorem \ref{thm:transform}, we define $\mathbf{x}'_u=\{\mathbf{x}_{u,v}\}_{v\in\Gamma_{out}(u)}$ and $\mathbf{y}'_v=\{\mathbf{y}_{u,v}\}_{u\in\Gamma_{in}(v)}$ and represent the input--output as
\begin{equation} \label{eq:transform_ptp2}
\mathbf{y}'_v=\bigoplus_{u\in \Gamma_{in}(v)}\mathbf{H}'_{u,v}\mathbf{x}'_u.
\end{equation}
Here $\mathbf{H}'_{u,v}$ is determined by the original channel condition in \eqref{eq:transform_ptp}.
Notice that the input--output in \eqref{eq:transform_ptp2} is the same linear finite field model considered in Section \ref{sec:finite_field}, see \eqref{eq:input_output_field}.
Then we can treat $\mathbf{y}'_v$ and $\mathbf{x}'_u$ as `super-symbol' and apply the multi-letter coding over these length-$\eta$ super-symbols, that is, length-$\eta n$ symbols.
Hence, from Lemma \ref{thm:transform}, we can transform the linear finite field network with the input--output \eqref{eq:transform_ptp2} into the following modulo-$q$ sum channel: 
\begin{equation}
\mathbf{y}'_d=\bigoplus_{i=1}^K \mathbf{x}'_{t_i},
\end{equation}
where $\mathbf{x}'_{t_i}\in\mathbb{F}^{\eta\min_{i\in [1:K]}\min _{\Omega\in \Lambda(\{i\})}\operatorname{rank}\left(H_{\Omega}(\{i\})\right)}_q$ for all $i\in[1:K]$.

Now consider the computation over the above modulo-$q$ sum channel. 
Since $q=K+1>K$, from Lemma \ref{lemma:sum_modulo}, the receiver can compute the desired function reliably for $\eta$ sufficiently large if 
\begin{equation}
\frac{k}{\eta\min_{i\in [1:K]}\min _{\Omega\in \Lambda(\{i\})}\operatorname{rank}\left(H_{\Omega}(\{i\})\right)}\leq \frac{\log q}{H(f(\mathbf{S}))}.
\end{equation}
Hence by setting $k=\frac{\eta\min_{i\in [1:K]}\min _{\Omega\in \Lambda(\{i\})}\operatorname{rank}\left(H_{\Omega}(\{i\})\right)\log q}{H(f(\mathbf{S}))}$, the computation rate 
\begin{align} \label{eq:compuration_rate}
R&=\frac{k}{\eta n}\nonumber\\
&=\frac{\min_{i\in [1:K]}\min _{\Omega\in \Lambda(\{i\})}\operatorname{rank}\left(H_{\Omega}(\{i\})\right)\log q}{nH(f(\mathbf{S}))}\nonumber\\
&\overset{(a)}{=}\frac{\min_{i\in [1:K]}\min _{\Omega\in \Lambda(\{i\})}\sum_{(u,v)\in E(\{i\}),u\in \Omega,v\in\Omega^c}m_{u,v}\log q}{nH(f(\mathbf{S}))}\nonumber\\
&\overset{(b)}{=}\frac{\min_{i\in [1:K]}\min _{\Omega\in \Lambda(\{i\})}\sum_{(u,v)\in E(\{i\}),u\in \Omega,v\in\Omega^c}{\sf C}(h^2_{u,v}P)}{H(f(\mathbf{S}))}\nonumber\\
&\overset{(c)}{=}\frac{\min_{i\in [1:K]}\bar{C}_{\sf ptp}(\{i\})}{H(f(\mathbf{S}))}
\end{align}
is achievable, where $(a)$ follows since $\operatorname{rank}\left(H_{\Omega}(\{i\})\right)$ assuming the channel \eqref{eq:transform_ptp2} is the same as that assuming the channel \eqref{eq:transform_ptp} and $\operatorname{rank}\left(H_{\Omega}(\{i\})\right)=\sum_{(u,v)\in E(\{i\}),u\in \Omega,v\in\Omega^c}m_{u,v}$, $(b)$ follows since $m_{u,v}=n{\sf C}(h^2_{u,v}P)(\log q)^{-1}$, and $(c)$ follows from the definition \eqref{eq:min_cut_ptp}. 
In conclusion, Theorem \ref{thm:orthogonal_net} holds.

\subsection{Gaussian Networks With Multiple-Access} \label{subsec:net_multiple_acess}
Consider Gaussian networks with multiple-access in which the length-$n$ time-extended input--output is given as in \eqref{eq:in_out}.
In the following, we prove the achievable computation rate in Theorem \ref{thm:net_multiple_acess} and Corollary \ref{co:diff_gap_int}.
The achievability follows the network abstraction based on compute-and-forward in Theorem \ref{thm:compute-forward} and the rest of the procedure is similar to that in Section \ref{subsec:ortho_com_networks}.

\subsubsection{Proof of Theorem \ref{thm:net_multiple_acess}}
For each node $v\in V$, suppose that node $u\in \Gamma_{in}(v)$ observes $\mathbf{x}'_{u,v}\in\mathbb{F}^{m_v}_q$ and node $v$ wishes to decode $\bigoplus_{u\in \Gamma_{in}(v)}\mathbf{x}'_{u,v}$.
Let $\mathbf{x}^{\sf lattice}_{u,v}(\mathbf{x}'_{u,v})$ denote a dither-added transmit lattice point from node $u$ to node $v$ for the compute-and-forward framework in \cite{Nazer:11}, which satisfies the power constraint $P$.
Then node $u$ transmits $\mathbf{x}_{u,v}(\mathbf{x}'_{u,v})=\frac{\min_{u\in \Gamma_{in}(v)}\{|h_{u,v}|\}}{h_{u,v}}\mathbf{x}^{\sf lattice}_i(\mathbf{x}'_i)$ to node $v$.
%That is, the average transmit power of the $i$th sender is $\frac{\min_{j\in[1:K]}\{h^2_j\}}{h^2_i}P$.
We can equivalently interpret that $h_{u,v}=1$ and the average power constraint from node $u$ to node $v$ is given as $\min_{u\in \Gamma_{in}(v)}\{h_{u,v}^2\}P$.
From Theorem \ref{thm:compute-forward} (see also Example \ref{ex:compute_forward_equal_ch}), node $v$ can decode $\bigoplus_{u\in \Gamma_{in}(v)}\mathbf{x}'_{u,v}$ reliably for $n$ sufficiently large if
$m_v\leq n{\sf C}^{+}\left(\frac{1}{|\Gamma_{in}(v)|}+\min_{u\in\Gamma_{in}(v)}h_{u,v}^2P\right)(\log q)^{-1}$
and $q$ is an increasing function of $n$ satisfying that $q\to\infty$ as $n\to \infty$.
Therefore, by treating $\mathbf{x}'_{u,v}$ as the channel input from node $u$ to node $v$ and $\bigoplus_{u\in\Gamma_{in}(v)}\mathbf{x}'_{u,v}$ as the channel output of node $v$, for $n$ sufficiently large, we can transform each length-$n$ Gaussian multiple-access component \eqref{eq:in_out} into the following length-$m_v$ modulo-$q$ sum channel:
\begin{equation} \label{eq:in_out_int3}
\mathbf{y}'_{v}=\bigoplus_{u\in \Gamma_{in}(v)}\mathbf{x}'_{u,v},
\end{equation}
where $\mathbf{x}'_{u,v}\in \mathbb{F}_q^{m_{v}}$. 
Here $q$ is  the largest prime number among $[1:n\log n]$ and 
\begin{equation} \label{eq:m_v_int}
m_v=n{\sf C}^{+}\left(\frac{1}{|\Gamma_{in}(v)|}+\min_{u\in\Gamma_{in}(v)}h_{u,v}^2P\right)(\log q)^{-1}.
\end{equation}
We again define $\mathbf{x}'_u=\{\mathbf{x}_{u,v}\}_{v\in\Gamma_{out}(u)}$ and represent the input--output as
\begin{equation} \label{eq:in_out_int4}
\mathbf{y}'_v=\sum_{u\in \Gamma_{in}(v)}\mathbf{H}'_{u,v}\mathbf{x}'_u.
\end{equation}
Here $\mathbf{H}'_{u,v}$ is determined by the original channel condition in \eqref{eq:in_out_int3}.
Then, as the same manner from \eqref{eq:transform_ptp2} to \eqref{eq:compuration_rate}, we can apply multi-letter coding and, from Lemmas \ref{lemma:sum_modulo} and \ref{thm:transform}, the computation rate 
\begin{align}
R&=\frac{k}{\eta n}\nonumber\\
&=\frac{\min_{i\in [1:K]}\min _{\Omega\in \Lambda(\{i\})}\operatorname{rank}\left(H_{\Omega}(\{i\})\right)\log q}{nH(f(\mathbf{S}))}\nonumber\\
&\overset{(a)}{=}\frac{\min_{i\in [1:K]}\min _{\Omega\in \Lambda(\{i\})}\sum_{v\in\Omega^c}\mathbf{1}_{\Gamma_{in}(v)\cap \Omega\neq\emptyset}m_v\log q}{nH(f(\mathbf{S}))}\nonumber\\
&\overset{(b)}{=}\frac{\min_{i\in [1:K]}\min _{\Omega\in \Lambda(\{i\})}\sum_{v\in\Omega^c}\mathbf{1}_{\Gamma_{in}(v)\cap \Omega\neq\emptyset}{\sf C}^{+}\left(\frac{1}{|\Gamma_{in}(v)|}+\min_{u\in\Gamma_{in}(v)}h_{u,v}^2P\right)}{H(f(\mathbf{S}))}
\end{align}
is achievable, where $(a)$ follows since $\operatorname{rank}\left(H_{\Omega}(\{i\})\right)$ assuming the channel \eqref{eq:in_out_int4} is the same as that assuming the channel \eqref{eq:in_out_int3} and $\operatorname{rank}\left(H_{\Omega}(\{i\})\right)=\sum_{v\in\Omega^c}\mathbf{1}_{\Gamma_{in}(v)\cap \Omega\neq\emptyset}m_v$ and $(b)$ follows from \eqref{eq:m_v_int}.
Note that $q$, which is the largest prime number among $[1:n\log n]$, becomes arbitrarily large as $n\to\infty$ and, as a result, satisfies the condition $q>(p-1)^2K$ in Lemma \ref{lemma:sum_modulo} for $n$ sufficiently large.
In conclusion, Theorem \ref{thm:net_multiple_acess} holds.

\subsubsection{Proof of Corollary \ref{co:diff_gap_int}}
Denote  
\begin{equation}
i^*={\arg\min}_{i\in[1:K]}C^+_{\sf mac}(\{i\})
\end{equation}
and
\begin{equation}
\Omega^*={\arg\min} _{\Omega\in \Lambda(\{i^*\})}\sum_{v\in\Omega^c}\mathbf{1}_{\Gamma_{in}(v)\cap \Omega\neq\emptyset}{\sf C}^{+}\left(\frac{1}{|\Gamma_{in}(v)|}+\min_{u\in\Gamma_{in}(v)}h_{u,v}^2P\right).
\end{equation}
Then
\begin{align}
&H(f(\mathbf{S}))(R^{(u)}_{\sf mac}-R^{(l)}_{\sf mac})\nonumber\\
&=\bar{C}_{\sf mac}([1:K])-\min_{i\in [1:K]}C^+_{\sf mac}(\{i\})\nonumber\\
&\overset{(a)}{\leq}\bar{C}_{\sf mac}(\{i^*\})-C^+_{\sf mac}(\{i^*\})+\left(\bar{C}_{\sf mac}([1:K])-\min_{i\in[1:K]}\bar{C}_{\sf mac}(\{i\})\right)\nonumber\\
&\overset{(b)}{\leq}\sum_{v\in \Omega^{*c}}{\sf C}\left(\Big(\sum_{u\in \Gamma_{in}(v), u\in \Omega^*}h_{u,v}\Big)^2 P\right)-\sum_{v\in\Omega^{*c}}\mathbf{1}_{\Gamma_{in}(v)\cap \Omega^*\neq\emptyset}{\sf C}^{+}\left(\frac{1}{|\Gamma_{in}(v)|}+\min_{u\in\Gamma_{in}(v)}h_{u,v}^2P\right)\nonumber\\
&{~~~}+\left(\bar{C}_{\sf mac}([1:K])-\min_{i\in[1:K]}\bar{C}_{\sf mac}(\{i\})\right)\nonumber\\
&\overset{(c)}{\leq} \sum_{v\in \Omega^{*c}}\mathbf{1}_{\Gamma_{in}(v)\cap \Omega^*\neq\emptyset}\left({\sf C}(|V|^2h_{u,v}^2P)-{\sf C}^+\left(h_{u,v}^2 P\right)\right)\nonumber\\
&{~~~}+\left(\bar{C}_{\sf mac}([1:K])-\min_{i\in[1:K]}\bar{C}_{\sf mac}(\{i\})\right)\nonumber\\
&\overset{(d)}{\leq} |V| (\log (|V|^2)+1)+\left(\bar{C}_{\sf mac}([1:K])-\min_{i\in[1:K]}\bar{C}_{\sf mac}(\{i\})\right),
\end{align}
where $(a)$ follows from the definition of $i^*$, $(b)$ follows from the definition of $\Omega^*$, $(c)$ follows from the condition that $h_{u,v}$ are the same for all $u\in \Gamma_{in}(v)$, and $(d)$ follows since ${\sf  C}(x)-{\sf C}^+(x)\leq 1$ for all $x\ge 0$.
Finally, from the condition $\bar{C}_{\sf mac}([1:K])-\min_{i\in[1:K]}\bar{C}_{\sf mac}(\{i\})\leq c_1 |V|\log |V|$, we have \eqref{eq:diff_gap_int}, which completes the proof.

\section{Extensions}

In this section, we apply our computation code to other interesting scenarios for computing over Gaussian networks.

\subsection{Modulo-$p$ Sum Computation}
Consider the modulo-$p$ sum computation over Gaussian networks with orthogonal components. 
In \cite{Nazer:07,Zhan:13}, achievable computation rates have been derived assuming an arbitrarily large source field size, i.e., $p\to\infty$ as the block length $n$ increases.
When the source field size is fixed, it is hard to apply the previous work to find a better computation rate than the separation-based computation.
From our framework, on the other hand, one naive approach is for the fusion center to first compute the corresponding arithmetic sum over networks and then take the modulo-$p$ operation in order to obtain the desired modulo-$p$ sum.
Hence the achievable computation rates for the arithmetic sum in this paper can also be achievable computation rates for the corresponding modulo-$p$ sum.
Obviously, this approach is not optimal but we can easily find examples that it outperforms the separation-based computation.

\subsection{Superposition Approach for Unequal Channel Coefficients}
One drawback of the achievability in Theorem \ref{thm:net_multiple_acess} is when the channel coefficients have different values.
For the single-hop case, i.e., the Gaussian MAC in which the length-$n$ time-extended input--output is given by \eqref{eq:gaussian_mac}, Theorem \ref{thm:net_multiple_acess} provides 
\begin{equation} \label{eq:achi_mac}
R=\frac{{\sf C}^{+}\left(\frac{1}{K}+\min_{i\in[1:K]}h_i^2P\right)}{H(f(\mathbf{S}))}.
\end{equation}

As shown in \eqref{eq:achi_mac}, the achievable computation rate is bounded by the minimum of the channel gains.
In order to achieve computation rates scalable with $P$ from the compute-and-forward framework \cite{Nazer:11}, the transmit power of each lattice should be reduced to let the received power for each lattice be the same at the receiver side, resulting $\min_{i\in[1:K]}h_i^2$ in \eqref{eq:achi_mac}.
An improved computation rate is achievable by the superposition of multiple codes and then allocating residual transmit power to high layer codes.
The following theorem provides an improved computation rate for the two-user case.

\begin{theorem} \label{thm:superposition}
Consider the $2$-user Gaussian MAC with $h^2_2\geq h_1^2$.
Then the computation rate satisfying 
\begin{align} \label{eq:layering}
R&\le\frac{{\sf C}^{+}\left(\frac{1}{2}+h_1^2P\right)}{H(S_1|S_2)},\nonumber\\
R&\le\frac{{\sf C}^{+}\left(\frac{1}{2}+h_1^2P\right)}{H(f(\mathbf{S}))}+\frac{H(f(\mathbf{S}))-H(S_1|S_2)}{H(S_2)}\frac{{\sf C}\left(\frac{(h_2^2-h_1^2)P}{1+2h_1^2P}\right)}{H(f(\mathbf{S}))}
\end{align}
is achievable.
\end{theorem}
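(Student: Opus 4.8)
The plan is to convert the $2$-user Gaussian MAC into two parallel resources by means of a two-layer superposition code at the stronger transmitter (user~2) combined with successive decoding: a \emph{base-layer} modulo-$q$ sum channel shared by both users and a \emph{top-layer} point-to-point bit-pipe out of user~2. First I would split user~2's power so that its base-layer signal arrives with received power $h_1^2P$, matching user~1, while the residual transmit power carries the top layer with received power $(h_2^2-h_1^2)P$; user~1 uses the base layer only. The receiver decodes top-then-base: it recovers the top layer while treating both base-layer signals as additive noise of total power $2h_1^2P$, which succeeds at rate $C_2:={\sf C}\!\left(\frac{(h_2^2-h_1^2)P}{1+2h_1^2P}\right)$; after subtracting it, the two base-layer signals form a symmetric Gaussian MAC with equal received power $h_1^2P$, so by Theorem~\ref{thm:compute-forward} and Example~\ref{ex:compute_forward_equal_ch} compute-and-forward abstracts the base layer into a modulo-$q$ sum channel of rate $C_1:={\sf C}^{+}\!\left(\frac{1}{2}+h_1^2P\right)$. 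Thus over $n$ channel uses the base layer delivers a computation budget of $nC_1$ bits and the top layer delivers $nC_2$ bits describing $\mathbf{s}_2$.

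Next I would combine these two resources through a hybrid source code governed by a time-sharing parameter $\mu\in[0,1]$. Partition the $k$ source realizations into a \emph{helped} sub-block of length $\mu k$ and an \emph{unhelped} sub-block of length $(1-\mu)k$. On the helped sub-block the top layer conveys $\mathbf{s}_2$ losslessly, costing $\mu k\,H(S_2)\le nC_2$ bits. On the unhelped sub-block I apply Lemma~\ref{lemma:sum_modulo} directly over the base-layer modulo-$q$ sum channel, costing $(1-\mu)k\,H(f(\mathbf{S}))$ computation bits. On the helped sub-block the receiver already knows $\mathbf{s}_2$, so I apply the side-information version of Lemma~\ref{lemma:sum_modulo}: the compute-and-forward abstraction together with Csisz\'ar's linear Slepian--Wolf source coding (Lemma~\ref{thm:csiszar}) conditioned on $\mathbf{s}_2$ computes $f$ at the conditional rate $H(f(\mathbf{S})|S_2)=H(S_1|S_2)$ per symbol, the equality holding because for $K=2$ the value $f(s_1,s_2)$ is in one-to-one correspondence with $s_1$ once $s_2$ is fixed (true for both the arithmetic sum and the type function). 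Hence the base-layer load is $(1-\mu)k\,H(f(\mathbf{S}))+\mu k\,H(S_1|S_2)\le nC_1$.

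Finally I would show that the two displayed inequalities are \emph{exactly} the feasibility condition for some $\mu\in[0,1]$ to exist. The top-layer constraint forces $\mu\le nC_2/(kH(S_2))$ and the base-layer constraint forces $\mu\ge \big(kH(f(\mathbf{S}))-nC_1\big)/\big(k(H(f(\mathbf{S}))-H(S_1|S_2))\big)$. Demanding that this lower bound be at most $1$ rearranges to $kH(S_1|S_2)\le nC_1$, i.e. $R\le C_1/H(S_1|S_2)$, the first inequality; demanding that the lower bound be at most the upper bound rearranges, after clearing the positive factor $H(S_2)$, to $kH(f(\mathbf{S}))\le nC_1+\frac{H(f(\mathbf{S}))-H(S_1|S_2)}{H(S_2)}\,nC_2$, i.e. the second inequality. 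Therefore every $R=k/n$ satisfying both bounds admits an achievable choice of $\mu$, and letting $n\to\infty$ drives the error probability to zero.

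The step I expect to require the most care is the side-information computation on the helped sub-block: I must verify that first building the modulo-$q$ sum channel by compute-and-forward and then running linear source coding with the decoder's knowledge of $\mathbf{s}_2$ indeed attains the conditional rate $H(f(\mathbf{S})|S_2)$, which reduces to checking that Csisz\'ar's linear codes achieve the Slepian--Wolf corner point with decoder side information in this computed-sum setting. A secondary technicality is justifying that the dithered base-layer lattice signals may be treated as additive noise of power $2h_1^2P$ while decoding the top layer, which I would handle by the standard dithering arguments underlying Theorem~\ref{thm:compute-forward}.
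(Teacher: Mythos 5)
Your proposal is correct and reproduces exactly the region \eqref{eq:layering}, but your source-coding layer is organized differently from the paper's. The physical layer is the same: identical power split at sender~2, the same successive decoding of the random-code layer at rate ${\sf C}\bigl(\frac{(h_2^2-h_1^2)P}{1+2h_1^2P}\bigr)$ treating the two dithered lattice signals as noise of power $2h_1^2P$, and the same compute-and-forward abstraction of the residual symmetric MAC into a modulo-$q$ sum channel of rate ${\sf C}^{+}\bigl(\frac{1}{2}+h_1^2P\bigr)$ via Theorem~\ref{thm:compute-forward}. The paper, however, then carves a length-$nR_1(\log q)^{-1}$ slice of the modulo-$q$ sum channel into a plain bit pipe for sender~1, delivers \emph{both} sources by ordinary Slepian--Wolf coding over the $(R_1,R_2)$ pipes on a fraction of the source symbols, applies Lemma~\ref{lemma:sum_modulo} on the remaining slice, and obtains \eqref{eq:layering} by Fourier--Motzkin elimination over $(R,R',R'',R_1,R_2)$; you instead keep the entire base channel as a computation channel and, on the helped fraction $\mu$, run a conditional (decoder-side-information) version of the linear computation at rate $H(S_1|S_2)$ per symbol. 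Your $\mu$-feasibility calculation is an explicit parametrization that lands on precisely the paper's two inequalities, in effect operating at the Slepian--Wolf corner point $\bigl(H(S_1|S_2),H(S_2)\bigr)$ --- the only part of the SW region that survives the paper's elimination. What each buys: your route avoids the Fourier--Motzkin bookkeeping and makes the trade-off transparent, at the cost of needing the side-information variant of Csisz\'ar's linear coding, which is not literally Lemma~\ref{thm:csiszar} as stated; this gap closes cleanly, since the receiver, knowing $\mathbf{s}_2$, can subtract its contribution from the received syndrome $\mathbf{H}\mathbf{u}'$ and is left with decoding $g(\mathbf{s}_1)$ from $\mathbf{H}\,g(\mathbf{s}_1)$ with side information $\mathbf{s}_2$, which is the SW corner point achievable by linear codes --- exactly the step you flagged. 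Two minor caveats: your equality $H(f(\mathbf{S})|S_2)=H(S_1|S_2)$ holds for the type function and for any arithmetic sum that depends nontrivially on $S_1$, but only ``$\le$'' holds in degenerate cases (e.g., $a_{l1}=0$ for all $l$); since you budget $H(S_1|S_2)$ bits regardless, achievability of the stated region is unaffected. Also note that the paper's scheme nominally retains the full SW constraints $R'\le R_2/H(S_2|S_1)$ and $R'\le (R_1+R_2)/H(S_1,S_2)$, which your corner-point scheme never invokes; one can check they are slack at the binding vertex, so the two schemes yield the same region.
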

\begin{proof}
%For notational simplicity, assume $h_{\min}=\min\{|h_1|,|h_2|\}$.
%Suppose that the $i$th sender observes $\mathbf{x}'_i\in\mathbb{F}^m_q$.
Let $\mathbf{x}^{\sf lattice}_i$ denote a dither-added transmit signal from lattice codebook of the $i$th sender for  compute-and-forward in \cite{Nazer:11}, which satisfies the power constraint $P$.
Let $\mathbf{x}^{\sf random}$ denote the transmit signal from a capacity-achieving point-to-point Gaussian codebook, which satisfies the power constraint $P$.
Then the first sender transmits $\mathbf{x}_1=\mathbf{x}^{\sf lattice}_1$ and the second sender transmits $\mathbf{x}_2=\frac{h_1}{h_2}\mathbf{x}^{\sf lattice}_2+\frac{\sqrt{h^2_2-h^2_1}}{h_2}\mathbf{x}^{\sf random}$.
Let $R_2$ denote the message rate delivered by $\mathbf{x}^{\sf random}$.
The receiver first decodes the message delivered by $\mathbf{x}^{\sf random}$ that is reliably decodable if $R_2\leq {\sf C}\left(\frac{(h_2^2-h_1^2)P}{1+2h_1^2P}\right)$.  The receiver then subtracts $\mathbf{x}^{\sf random}$ from the received signal. 
Then, from the same argument in the proof of Theorem \ref{thm:net_multiple_acess}, we can construct the modulo-$q$ sum channel in \eqref{eq:modulo_sum_ch} with $m=n{\sf C}^{+}\left(\frac{1}{K}+h_1^2P\right)(\log q)^{-1}$, where $q$ is  set to be the largest prime number among $[1:n\log n]$.
By utilizing the length $nR_1(\log q)^{-1}$ of this modulo-$q$ sum channel, the first sender is able to transmit its message at the rate of $R_1$ if $R_1\leq {\sf C}^{+}\left(\frac{1}{2}+h_1^2P\right)$.
The remaining modulo-$q$ sum channel has the length of $n\left({\sf C}^{+}\left(\frac{1}{2}+h_1^2P\right)-R_1\right)(\log q)^{-1}$.

For the bit-pipe channels with the rates $R_1$ and $R_2$, we apply Slepian--Wolf source coding to deliver two sources separately, then compute the desired function.
For the remaining modulo-$q$ sum channel, we  apply  linear source coding for computation in Lemma \ref{lemma:sum_modulo}.
This approach achieves the compute rate represented by the following rate constraints:
\begin{align}
R&=R'+R'',\nonumber\\
R_1&\leq {\sf C}^{+}\left(\frac{1}{2}+h_1^2P\right),\nonumber\\
R_2&\leq {\sf C}\left(\frac{(h_2^2-h_1^2)P}{1+2h_1^2P}\right),\nonumber\\
R'&\leq \frac{R_1}{H(S_1|S_2)},\nonumber\\
R'&\leq \frac{R_2}{H(S_2|S_1)},\nonumber\\
R'&\leq \frac{R_1+R_2}{H(S_1,S_2)},\nonumber\\
R''&\leq\frac{{\sf C}^{+}\left(\frac{1}{2}+h_1^2P\right)-R_1}{H(f(\mathbf{S}))}.
\end{align}
After Fourier--Motzkin elimination, we have \eqref{eq:layering}, which completes the proof.
%Another achievable computation rate can be represented by the following rate constraints:
%\begin{align}
%R&=\frac{{\sf C}^{+}\left(\frac{1}{2}+h_{\min}^2P\right)-R_1}{H(U_1,\cdots,U_L|V_1,V_2)},\nonumber\\
%R_1&\leq {\sf C}^{+}\left(\frac{1}{2}+h_{\min}^2P\right),\nonumber\\
%R_2&\leq {\sf C}\left(\frac{(h_2^2-h_1^2)P}{1+2h_1^2P}\right),\nonumber\\
%R_1&\geq I(V_1;S_1|V_2)R,\nonumber\\
%R_2&\geq I(V_2;S_2|V_1)R,\nonumber\\
%R_1+R_2&\geq I(V_1,V_2;S_1,S_2)R.
%\end{align}
%Then, we have 
%\begin{align}
%R&\leq \frac{{\sf C}\left(\frac{(h_2^2-h_1^2)P}{1+2h_1^2P}\right)}{I(V_2;S_2|V_1)},\nonumber\\
%R&\leq \frac{{\sf C}^{+}\left(\frac{1}{2}+h_{\min}^2P\right)}{H(U_1,\cdots,U_L|V_1,V_2)+I(V_1;S_1|V_2)},\nonumber\\
%R&\leq \frac{{\sf C}^{+}\left(\frac{1}{2}+h_{\min}^2P\right)}{H(U_1,\cdots,U_L|V_1,V_2)+I(V_1,V_2;S_1,S_2)}+\frac{{\sf C}\left(\frac{(h_2^2-h_1^2)P}{1+2h_1^2P}\right)}{H(U_1,\cdots,U_L|V_1,V_2)+I(V_1,V_2;S_1,S_2)}.
%\end{align}
\end{proof}

\begin{example}[Arithmetic Sum of I.I.D. Binary Sources]
Suppose that $K=2$, $h_1=1$, $h_2\geq 1$, and $S_1,S_2$ are independently and uniformly drawn from $\{0,1\}$.
The receiver wishes to compute $\{f(\mathbf{s}[j])= s_1[j]+s_2[j]\}_{j=1}^k$.
Figure \ref{figs:superposition}  plots the achievable computation rate by the superposition in Theorem \ref{thm:superposition}. As $h^2$ increases, the separation-based computation outperforms the computation scheme in Theorem \ref{thm:net_multiple_acess} and the gap to the cut-set upper bound decreases.
As shown in the figure, the superposition approach can attain both of the computation in Theorem \ref{thm:net_multiple_acess} and the separation-based computation.
\end{example}

\begin{remark}[Multiple Layering for More Than Two Users]
For more than two users, we can apply the same superposition approach by layering multiple lattice codes. We refer to \cite[Section III. F]{Zhan:13} for the detailed lattice code construction and encoding, decoding procedure. 
\end{remark}

\begin{figure}[t!]
\begin{center}
%\footnotesize
%\psfrag{m}[b]{$M$}
\includegraphics[scale=1]{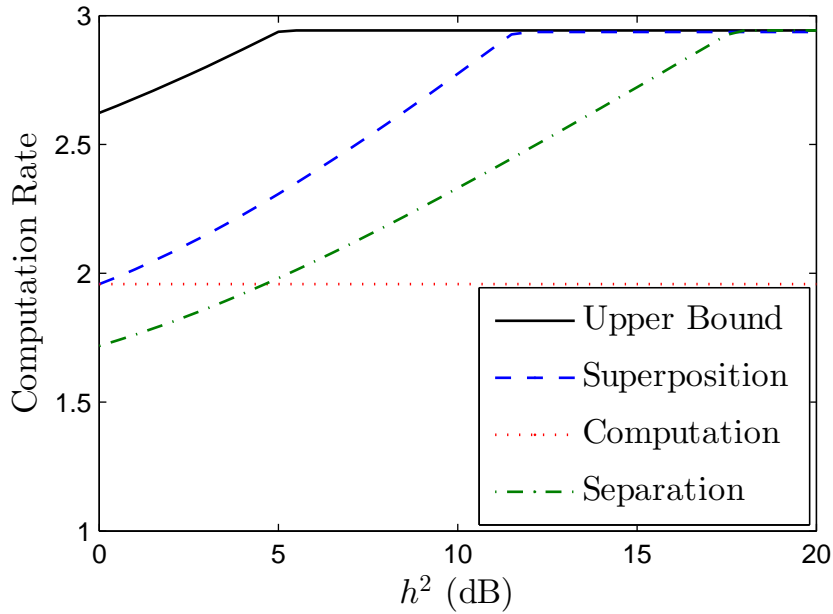}
\end{center}
\vspace{-0.15in}
\caption{Computation of $S_1+S_2$ for the two-user Gaussian MAC with unequal channel gains when $P=15$ dB.}
\label{figs:superposition}
\vspace{-0.1in}
\end{figure}

\subsection{Fading Networks}
Consider fading Gaussian MAC in which channel coefficients vary independently over time.
The length-$n$ time-extended input--output is given by
\begin{equation} 
\label{eq:fading_mac}
\mathbf{y}=\sum_{i=1}^K \mathbf{H}_i \mathbf{x}_i+\mathbf{z},
\end{equation}
where $\mathbf{H}_i=\operatorname{diag}\left({h}_i^{(1)},\cdots, h_i^{(n)}\right)$ and ${h}_i^{(t)}$ denotes the complex channel coefficient at time $t$ from the $i$th sender to the receiver.
We assume that $\{{h}_i^{(t)}\}$ are independently drawn from $\mathcal{CN}(0,1)$ and also independent over time.
We further assume that the elements of $\mathbf{z}$ are independently drawn from $\mathcal{CN}(0,1)$. 
%Each sender should satisfy the average power constraint, i.e., $\frac{1}{n}\|\mathbf{x}_i\|^2\leq P$. 
Global channel state information is assumed to be available at each sender and the receiver.
Since $\{h_i^{(t)}\}$ are i.i.d. over time, we drop the time index hereafter for simplicity.

If we simply apply Theorem \ref{thm:net_multiple_acess} for each time slot, then due to $\min_{i\in[1:K]}|h_i|^2$ in Theorem \ref{thm:net_multiple_acess}, the achievable computation rate for fading will decrease as $K$ increases.\footnote{Although we describe the paper including Theorem \ref{thm:net_multiple_acess} based on the real channel model, the results in the paper can be straightforwardly extended to the complex channel model.} 
We can fix this problem and achieve an approximate computation capacity similar to Corollary \ref{co:diff_gap_int} and Remark \ref{re:tighter_bound} for fading Gaussian MAC.

To achieve this, we modify Theorem \ref{thm:net_multiple_acess} to the fading scenario in a similar approach as given by Goldsmith and Varaiya \cite{Goldsmith:97}. 
At each time slot, the $j$th sender transmits with power $\displaystyle\frac{\min_{i\in[1:K]}|h_i|^2P}{|h_j|^2\E[\min_{i\in[1:K]}|h_i|^2/|h_j|^2]}$, which satisfies the average power constraint.
Then from the same analysis in Theorem \ref{thm:net_multiple_acess}, the computation rate 
\begin{equation}
R=\frac{1}{H(f(\mathbf{S}))}\E\left[\C^+\left(\frac{1}{K}+\frac{\min_{i\in[1:K]}|h_i|^2P}{\E[\min_{i\in[1:K]}|h_i|^2/|h_1|^2]}\right)\right]:=R^{(l)}_{\sf fading}
\end{equation}
is achievable, where ${\sf C}^+(x):=\max\left\{\log (x),0\right\}$.
On the other hand, the cut-set upper bound is given by
\begin{equation} \label{eq:r_u_fading}
R \le \frac{1}{H(f(\mathbf{S}))}\max_{\phi_1,\cdots,\phi_K}\E\left[\C\left(\left(\sum_{i=1}^K|h_i|\phi_i(h_1,\cdots,h_K)\right)^2\right)\right]:=R^{(u)}_{\sf fading},
\end{equation}
where $\phi_i$ denotes the power allocation policy of the $i$th sender that should satisfy $\E[\phi_i^2(h_1,\cdots,h_K)]\le P$ and ${\sf C}(x):=\log (1+x)$.
%We denote $\{\phi_i^*\}$ as the optimum power allocation policies.
The following theorem establishes an approximate computation capacity for i.i.d. Rayleigh fading.

\begin{theorem}[Approximate Computation Capacity for Fading MAC]
Consider fading Gaussian MAC with time-varying channel coefficients.
For i.i.d. Rayleigh fading, 
\begin{align}
R^{(u)}_{\sf fading}-R^{(l)}_{\sf fading}\leq \frac{3\log K+2+\log e}{H(f(\mathbf{S}))}
\end{align}
for any $P$, where $f(\mathbf{S})=(U_1,\cdots,U_L)$ for the arithmetic sum function and $f(\mathbf{S})=(B_0,\cdots,B_{p-1})$ for the type function.
\end{theorem}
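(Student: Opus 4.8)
The plan is to bound the numerator $H(f(\mathbf S))\bigl(R^{(u)}_{\sf fading}-R^{(l)}_{\sf fading}\bigr)$, that is the quantity $D:=\max_{\phi}\E\bigl[\C\bigl((\sum_i|h_i|\phi_i)^2\bigr)\bigr]-\E\bigl[\C^+\bigl(\tfrac1K+\tfrac{\min_i|h_i|^2P}{c}\bigr)\bigr]$, where $c:=\E[\min_i|h_i|^2/|h_1|^2]$, and to show $D\le 3\log K+2+\log e$ for every $P$; dividing by $H(f(\mathbf S))$ then gives the theorem. I would treat the two terms separately and reduce everything to one scalar function of $P$.

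For the cut-set term I would first apply Cauchy--Schwarz, $(\sum_i|h_i|\phi_i)^2\le(\sum_i|h_i|^2)(\sum_i\phi_i^2)$, together with the elementary inequality $\log(1+ab)\le\log(1+a)+\log(1+b)$ for $a,b\ge0$, so that for every admissible policy $\C\bigl((\sum_i|h_i|\phi_i)^2\bigr)\le\C(G)+\C(\Phi)$ with $G:=\sum_i|h_i|^2$ and $\Phi:=\sum_i\phi_i^2$. Taking expectations, using Jensen and the per-user constraint $\E[\Phi]=\sum_i\E[\phi_i^2]\le KP$, gives $\E\bigl[\C((\sum_i|h_i|\phi_i)^2)\bigr]\le\E[\C(G)]+\C(KP)$ uniformly over policies, hence also for the maximizer. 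Since $G$ is $\mathrm{Gamma}(K,1)$ with mean $K$, one more Jensen step yields $\E[\C(G)]\le\log(1+K)\le\log K+\log 2$.

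For the achievable term I would use $\C^+(y)\ge\log y$ and the factorization $\tfrac1K+\tfrac{\min_i|h_i|^2P}{c}=\tfrac1K\bigl(1+\tfrac{K\min_i|h_i|^2P}{c}\bigr)$, together with $c\le1$ (because $\min_i|h_i|^2\le|h_1|^2$), to obtain $\E[\C^+(\cdot)]\ge-\log K+\E[\log(1+K\min_i|h_i|^2P)]$. The key distributional fact for i.i.d. Rayleigh fading is that the $|h_i|^2$ are i.i.d. $\mathrm{Exp}(1)$, so $\min_i|h_i|^2\sim\mathrm{Exp}(K)$ and $W:=K\min_i|h_i|^2\sim\mathrm{Exp}(1)$. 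Combining the two bounds reduces the whole problem to controlling, uniformly in $P$, the single scalar quantity $h(P):=\log(1+KP)-\E_{W}[\log(1+WP)]$, since $D\le 2\log K+\log 2+h(P)$.

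The crux, and the step I expect to be the main obstacle, is bounding $h(P)$ uniformly in $P$: naive estimates behave well as $P\to\infty$ but diverge as $P\to0$ through the clipping in $\C^+$ and through $\log(1+KP)$. I would resolve this by the pointwise comparison $\tfrac{1+KP}{1+WP}\le K/W$, which holds exactly when $W\le K$ (it is equivalent to $W(1+KP)\le K(1+WP)$), whereas on $\{W>K\}$ the ratio is at most $1$ so the integrand $\log\tfrac{1+KP}{1+WP}$ is nonpositive. Writing $h(P)=\E\bigl[\log\tfrac{1+KP}{1+WP}\bigr]$ and splitting on $\{W\le K\}$ and $\{W>K\}$ gives $h(P)\le\log K+\E\bigl[(\log\tfrac1W)\mathbf 1_{W\le1}\bigr]\le\log K+\int_0^1\log\tfrac1w\,dw=\log K+\log e$, the last integral being one nat. Assembling the pieces yields $D\le 3\log K+\log 2+\log e\le 3\log K+2+\log e$ for every $P$, which is the claimed bound.
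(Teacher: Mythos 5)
Your proposal is correct, and while it shares the paper's overall skeleton (bound the numerator $H(f(\mathbf{S}))(R^{(u)}_{\sf fading}-R^{(l)}_{\sf fading})$, factor out $1/K$ on the achievable side via $\C^+\left(\frac1K+x\right)\ge\log(1+Kx)-\log K$, and finish with an integral over the exponential minimum), it handles the two decisive steps by genuinely different means. On the converse side, the paper never decouples gains from powers: it keeps $\left(\sum_i|h_i|\phi_i^*\right)^2$ intact, uses the ratio bound $\log\frac{1+x}{1+y}\le\log\frac{x/P+y/P}{y/P}$ to cancel the $P$-dependence at the outset, then splits via $\log(a+b)\le2\log(\sqrt a+\sqrt b)$, pushes the expectation inside by Jensen, and only then applies Cauchy--Schwarz at the level of expectations, $\E[|h_i|\phi_i^*]\le\sqrt{\E[|h_i|^2]\,\E[(\phi_i^*)^2]}\le\sqrt P$, arriving at $2\log(\sqrt K+K)-\E\left[\log\min_i|h_i|^2\right]$. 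You instead apply Cauchy--Schwarz pointwise, $\left(\sum_i|h_i|\phi_i\right)^2\le G\Phi$, split $\C(G\Phi)\le\C(G)+\C(\Phi)$, and use two Jensen steps; this leaves a residual $\log(1+KP)$ that does not cancel against the achievable term, and you restore uniformity in $P$ through your ratio lemma $\frac{1+KP}{1+WP}\le K/W$ on $\{W\le K\}$ (with the ratio at most $1$ on $\{W>K\}$) --- a device the paper does not need because its divide-by-$P$ trick removed $P$ before any expectations were taken. Both routes terminate in the same scalar fact in disguise: the paper evaluates $\E\left[\log\min_i|h_i|^2\right]\ge-\log K-\log e$ for the $\operatorname{Exp}(K)$ minimum, while you evaluate $\E\left[\left(\log\frac1W\right)\mathbf{1}_{W\le1}\right]\le\int_0^1\log\frac1w\,dw=\log e$ for $W\sim\operatorname{Exp}(1)$. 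What each buys: your decomposition is more modular and elementary (three standard inequalities plus one calculus computation, each verifiable in isolation) and actually yields the marginally sharper constant $3\log K+\log 2+\log e\le 3\log K+2+\log e$; the paper's argument keeps the optimal policy's joint dependence on all gains inside a single logarithm, which forces the heavier Jensen-plus-expectation-level Cauchy--Schwarz machinery but makes the $P$-uniformity automatic rather than something to be recovered. Your early use of $c=\E\left[\min_i|h_i|^2/|h_1|^2\right]\le1$ is the same observation the paper invokes, only deployed at the start rather than the end.
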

\begin{proof}
Let $\{\phi_i^*\}$ denote an optimum power allocation policy maximizing \eqref{eq:r_u_fading}.
Then 
\begin{align} \label{eq:appro_bound1}
&H(f(\mathbf{S}))(R^{(u)}_{\sf fading}-R^{(l)}_{\sf fading}) \nonumber\\
&\overset{(a)}{\le} \E\left[\log\left(\frac{1+\left(\sum_{i=1}^K|h_i|\phi_i^*(h_1,\cdots,h_K)\right)^2}{1+K\frac{\min_{i\in[1:K]}|h_i|^2}{\E[\min_{i\in[1:K]}|h_i|^2/|h_1|^2]}P}\right)\right] +\log K\nonumber\\
%&\le \frac{1}{2}\E\left[\log\left(\frac{1+K\frac{\min_{i\in[1:K]}h_i^2}{\E[\min_{i\in[1:K]}h_i^2/h_1^2]}P+\left(\sum_{i=1}^Kh_i\phi_i^*(h_1,\cdots,h_K)\right)^2}{1+K\frac{\min_{i\in[1:K]}h_i^2}{\E[\min_{i\in[1:K]}h_i^2/h_1^2]}P}\right)\right] +\frac{\log K}{2}\\
&\le\E\left[\log\left(\frac{K\frac{\min_{i\in[1:K]}|h_i|^2}{\E[\min_{i\in[1:K]}|h_i|^2/|h_1|^2]}+\frac{1}{P}\left(\sum_{i=1}^K|h_i|\phi_i^*(h_1,\cdots,h_K)\right)^2}{K\frac{\min_{i\in[1:K]}|h_i|^2}{\E[\min_{i\in[1:K]}|h_i|^2/|h_1|^2]}}\right)\right] +\log K\nonumber\\
&\overset{(b)}{\le} 2\E\left[\log\left(\sqrt{K\frac{\min_{i\in[1:K]}|h_i|^2}{\E[\min_{i\in[1:K]}|h_i|^2/|h_1|^2]}}+\frac{1}{\sqrt{P}}\left(\sum_{i=1}^K|h_i|\phi_i^*(h_1,\cdots,h_K)\right)\right)\right] \nonumber\\
&\hspace{0.5cm}+\log\left(\E\left[\min_{i\in[1:K]}|h_i|^2/|h_1|^2\right]\right)-\E\left[\log\left(\min_{i\in[1:K]}|h_i|^2\right)\right]\nonumber\\
& \overset{(c)}{\le} 2\log\left(\sqrt{K\frac{\E\left[\min_{i\in[1:K]}|h_i|^2\right]}{\E[\min_{i\in[1:K]}|h_i|^2/|h_1|^2]}}+\frac{1}{\sqrt{P}}\left(\sum_{i=1}^K\E\left[|h_i|\phi_i^*(h_1,\cdots,h_K)\right]\right)\right) \nonumber\\
&\hspace{0.5cm}+\log\left(\E\left[\min_{i\in[1:K]}|h_i|^2/|h_1|^2\right]\right)-\E\left[\log\left(\min_{i\in[1:K]}|h_i|^2\right)\right]\nonumber\\
&\overset{(d)}{\le} 2\log\left(\sqrt{K\frac{\E\left[\min_{i\in[1:K]}|h_i|^2\right]}{\E[\min_{i\in[1:K]}|h_i|^2/|h_1|^2]}}+K\right)+\log\left(\E\left[\min_{i\in[1:K]}|h_i|^2/|h_1|^2\right]\right)-\E\left[\log\left(\min_{i\in[1:K]}|h_i|^2\right)\right]\nonumber\\
&= 2\log\left(\sqrt{K\E\left[\min_{i\in[1:K]}|h_i|^2\right]}+K\sqrt{\E\left[\min_{i\in[1:K]}|h_i|^2/|h_1|^2\right]}\right)-\E\left[\log\left(\min_{i\in[1:K]}|h_i|^2\right)\right] \nonumber\\
&\le 2\log(\sqrt{K}+K)-\E\left[\log\left(\min_{i\in[1:K]}|h_i|^2\right)\right].
\end{align}
where $(a)$ follows since ${\sf C}^+\left(\frac{1}{K}+P\right)\geq \frac{1}{2}\log(1+KP)-\frac{1}{2}\log K$, $(b)$ follows since $\log (a+b)\leq 2\log (\sqrt{a}+\sqrt{b})$ for $a\geq0$ and $b\geq0$, $(c)$ follows from Jensen's inequality, and $(d)$ follows since $\E\left[|h_i|\phi_i^*(h_1,\cdots,h_K)\right] \le \sqrt{\E[|h_i|^2]\E\left[(\phi_i^*(h_1,\cdots,h_K))^2\right]} \le \sqrt{P}$.

For Rayleigh fading, $|h_i|^2$ is exponentially distributed. Thus, we have 
\begin{align} \label{eq:appro_bound2}
\E\left[\log\left(\min_{i\in[1:K]}|h_i|^2\right)\right] &= \int_0^\infty \log(x)K\exp(-Kx) \, dx \nonumber\\
&= -\log K + \int_0^\infty \log(u)\exp(-u) \, du \nonumber\\
&\ge -\log K + \int_0^1 \log(u) \, du + \int_1^\infty \log(u)\exp(-u) \, du \nonumber\\
&\ge -\log K -\log e.
\end{align}

Therefore, from \eqref{eq:appro_bound1} and \eqref{eq:appro_bound2}, 
\begin{align}
R^{(u)}_{\sf fading}-R^{(l)}_{\sf fading}\leq \frac{3\log K+2+\log e}{H(f(\mathbf{S}))},
\end{align}
which completes the proof.
\end{proof}

\subsection{Scaling Laws}
One interesting performance metric is to focus on how the computation rate scales as the number of sources $K$ increases.
The work \cite{Giridhar:05} studied scaling laws on the computation rate under collocated collision networks assuming that concurrent transmission from multiple nodes causes collisions and, therefore, is not allowed.
It was shown in \cite{Giridhar:05} that the order of $\frac{1}{K}$ rate scaling law is achievable for the type computation, which is the same scaling law achievable by the separation-based computation.
Whereas, Theorem \ref{thm:net_multiple_acess} provides the order of $\frac{1}{\log K}$ rate scaling law for collocated Gaussian networks.
The gain comes from  a more efficient physical layer abstraction using the compute-and-forward framework and exploiting the superposition property of the abstracted channel  for function computation, which is not allowed for collocated collision networks.

\subsection{Function Multicast}
Throughout the paper, we assume that a single receiver wishes to compute the desired function.
Now consider the function multicasting problem in which multiple receivers wish to compute the same desired function. It has been shown in \cite{Suh:12} that function alignment is essentially required for linear finite field single-hop networks in order to optimally compute the modulo sum function at multiple receivers. 
For the multihop case, if there is a relay node that is connected from all senders and also at the same time connected to all receivers, this relay node can first compute the desired function and then broadcast it to multiple receivers.
For the first task, i.e., function computation at a single relay node, our coding schemes are applicable.  
For the second task, i.e., function broadcast to multiple receivers, quantize-map-and-forward \cite{Avestimehr:11} or noisy network coding \cite{Lim:11} achieves a near-optimal rate.
Although this approach is not applicable for any network topology, we can easily find examples of interest that it achieves a near-optimal computation rate. 
A similar approach has been also proposed in \cite{Kannan:13}, analyzing the rate scaling law under a bit-pipe wired network represented by an undirected graph. 

\section{Conclusion}
In this paper, we studied the function computation over Gaussian networks assuming orthogonal components.
We proposed a novel computation coding that is able to compute multiple weighted arithmetic sums including the type function.
Computing the type function is very powerful since any symmetric function such as the same mean, maximum, minimum, and so on, can be obtained from the type function.
Hence, the proposed computation coding is useful not only for the arithmetic sum computation, but for any symmetric function computation.
The main ingredients of the proposed computation coding are the network transformation via lattice codes and linear network coding and then the computation based on linear Slepian--Wolf source coding for computing.
In many cases, the proposed computation coding outperforms the separation-based computation, especially when the number of sources becomes large.
We established the computation capacity for a class of orthogonal Gaussian networks and an approximate capacity for a class of Gaussian networks with multiple-access.

\section*{Appendix I\\Hybrid Approach} \label{APP:app0}
In this appendix, we prove Theorem  \ref{thm:com_rate_orthogonal_unequal}.
First consider a distributed source coding problem with rate tuple $(R_1,\cdots,R_K)$ to compute the desired function.
The $i$th bit-pipe orthogonal channel with rate $R_i$ can be treated as the orthogonal finite field channel with $\mathbb{F}^{kR_i(\log q)^{-1}}_q$ .
From the same argument in Lemma \ref{lemma:sum_modulo}, setting $q>(p-1)^2K$ and computing the corresponding modulo-$q$ sum function yields the desired function.
We generalize the coding scheme in \cite[Section VI]{Ahlswede:83} to the $K$-user case, (Also see \cite[Theorem III.2]{Huang:12} for the rate constraints for decoding the set of auxiliary random sequences).
Then  any rate tuple $(R_1,\cdots, R_K)$ satisfying
\begin{align} \label{eq:ahlswede}
\sum_{i\in \Sigma}R_i\geq I(\{W_i\}_{i\in\Sigma};\mathbf{S}|\{W_i\}_{i\in[1:K]\setminus\Omega})+|\Sigma|H(f(\mathbf{S})|W_1,\cdots, W_K)
\end{align}
%\begin{align} \label{eq:ahlswede}
%R_1&\ge I(V_1;S_1|V_2)+H(U_1,\cdots,U_L|V_1,V_2),\nonumber\\
%R_2&\ge I(V_2;S_2|V_1)+H(U_1,\cdots,U_L|V_1,V_2),\nonumber\\
%R_1+R_2&\ge I(V_1,V_2;S_1,S_2)+2H(U_1,\cdots,U_L|V_1,V_2)
%\end{align}
for all $\Sigma\subseteq [1:K]$ is a necessary condition for the desired function computation.
By abstracting the $i$th orthogonal Gaussian channel using point-to-point capacity-achieving codes, we have error-free bit-pipe channel with rate ${\sf C}(h_i^2)$.
Hence if a rate tuple $(\frac{{\sf C}(h_i^2)}{R},\cdots,\frac{{\sf C}(h_K^2)}{R})$ is located inside the region \eqref{eq:ahlswede}, the computation rate $R$ is achievable, which provides the rate constraint on $R$ as in \eqref{eq:com_rate_orthogonal_unequal}.
In conclusion, Theorem \ref{thm:com_rate_orthogonal_unequal} holds.

\section*{Appendix II\\Network Transformation} \label{APP:app1}
In this appendix, we prove Lemma  \ref{thm:transform}.

\subsection{Layered Networks} \label{subsec:layered}
In this subsection, we prove Lemma \ref{thm:transform} for the layered case. For layered networks, we can partition the set of nodes into $M$ layers. 
Let $V[j]\subseteq V$ denote the set of nodes at the $j$th layer, where  $j\in[1:M]$.
We assume that $V[1]$ is the set of senders and the node at the $M$th layer is the receiver.
That is, $V[1]=\{t_i\}_{i=1}^K$ and $V[M]=\{d\}$.
The encoding functions are set as follows:
\begin{itemize}
\item (Sender Encoding) Node $v\in V[1]$ transmits $\mathbf{x}_v=\mathbf{F}_v\mathbf{x}''_v$, where $\mathbf{F}_v\in\mathbb{F}_q^{n\alpha_v\times n\tau}$ and $\mathbf{x}''_v\in \mathbb{F}_q^{n \tau}$. We will specify $\tau$ later.
\item (Relay Encoding) Node $v\in\bigcup_{j=2}^{M-1}V[j]$ transmits $\mathbf{x}_v=\mathbf{F}_v\mathbf{y}_v$, where $\mathbf{F}_v\in\mathbb{F}_q^{n\alpha_v\times n\beta_v}$.
\end{itemize} 
Then the receiver generates $\mathbf{y}'_d=\mathbf{F}_d\mathbf{y}_d$, where $\mathbf{F}_d\in\mathbb{F}_q^{n\tau\times n\beta_d}$.

Let $\Gamma^{u}_{in}(v)=\{w:\mbox{there exists a direct path from $u$ to $w$}, w\in\Gamma_{in}(v)\}$.
From the definition, $\Gamma^{u}_{in}(v)\subseteq \Gamma_{in}(v)$.
Then the input--output from $\mathbf{x}'_u$, $u\in V[1]$, to $\mathbf{y}'_d$ assuming $\mathbf{x}'_v=\mathbf{0}$ for all $v\neq u$ is given by   
\begin{align}
&\mathbf{y}'_d|_{\mathbf{x}'_v=\mathbf{0},\forall v\neq u}\nonumber\\
&=\mathbf{F}_d\bigoplus_{v_{M-1}\in \Gamma^{u}_{in}(d)}\mathbf{H}_{v_{M-1},d}\mathbf{x}_{v_{M-1}}\nonumber\\
%&=\bigoplus_{v_{M-1}\in \Gamma^{u}_{in}(d)}\mathbf{F}_{v_{M-1},d}\bigoplus_{v_{M-2}\in \Gamma^{u}_{in}(v_{M-1})}\mathbf{x}'_{v_{M-2},v_{M-1}}\nonumber\\
&=\mathbf{F}_d\bigoplus_{v_{M-1}\in \Gamma^{u}_{in}(d)}\mathbf{H}_{v_{M-1},d}\mathbf{F}_{v_{M-1}}\bigoplus_{v_{M-2}\in \Gamma^{u}_{in}(v_{M-1})}\mathbf{H}_{v_{M-2},v_{M-1}}\mathbf{F}_{v_{M-2}}\cdots\bigoplus_{v_{2}\in \Gamma^{u}_{in}(v_{3})}\mathbf{H}_{v_2,v_3}\mathbf{x}_{v_2}\nonumber\\
&=\mathbf{F}_d\bigoplus_{v_{M-1}\in \Gamma^{u}_{in}(d)}\mathbf{H}_{v_{M-1},d}\mathbf{F}_{v_{M-1}}\bigoplus_{v_{M-2}\in \Gamma^{u}_{in}(v_{M-1})}\mathbf{H}_{v_{M-2},v_{M-1}}\mathbf{F}_{v_{M-2}}\cdots\bigoplus_{v_{2}\in \Gamma^{u}_{in}(v_{3})}\mathbf{H}_{v_2,v_3}\mathbf{F}_{v_2}\mathbf{H}_{u,v_2}\mathbf{F}_u\mathbf{x}''_{u}.
\end{align}
Let us denote
\begin{equation}
\mathbf{H}_u:=\mathbf{F}_d\bigoplus_{v_{M-1}\in \Gamma^{u}_{in}(d)}\cdots \bigoplus_{v_{2}\in \Gamma^{u}_{in}(v_{3})}\mathbf{H}_{v_{M-1},d}\mathbf{F}_{v_{M-1}}\cdots\mathbf{H}_{v_2,v_3}\mathbf{F}_{v_2}\mathbf{H}_{u,v_2}\mathbf{F}_u,
\end{equation}
which is the $n\tau \times n\tau$ dimensional end-to-end channel matrix from $\mathbf{x}''_{u}$ to $\mathbf{y}'_d$.
Then $\mathbf{y}'_d$ can be represented as 
\begin{align} \label{eq:equi_ch}
\mathbf{y}'_d &=\bigoplus_{u\in V[1]}\mathbf{H}_u \mathbf{x}''_u\nonumber\\
&=\bigoplus_{i=1}^K \mathbf{H}_{t_i}\mathbf{x}''_{t_i}.
\end{align}
The following theorem and corollary show that if the size of the end-to-end channel matrix $\mathbf{H}_{t_i}$ is smaller than the corresponding minimum-cut value, then $\mathbf{H}_{t_i}$ becomes a full-rank matrix for all $i\in[1:K]$ with probability approaching one as $n$ increases. 

\begin{theorem}[Avestimehr--Diggavi--Tse \cite{Avestimehr:11}] \label{thm:ADT}
For any $i\in [1:K]$, let $\{\mathbf{x}''_{t_i}(w)\}_{w\in[1:2^{n \tau \log q}]}$ be a set of randomly chosen $2^{n\tau\log q}$ vectors in $\mathbb{F}^n_q$ and $\{\mathbf{y}'_d(w)\}_{w\in[1:2^{n \tau\log q}]}$ be the corresponding set of output vectors, i.e., $\mathbf{y}'_d(w) =\mathbf{H}_{t_i} \mathbf{x}''_{t_i}(w)$.
Suppose that the elements of $\mathbf{F}_v$ are  i.i.d. drawn uniformly from $\mathbb{F}_q$ for all $v\in V$.
Then there exists one-to-one correspondence between $\{\mathbf{x}''_{t_i}(w)\}_{w\in[1:2^{n\tau \log q}]}$ and $\{\mathbf{y}'_d(w)\}_{w\in[1:2^{n\tau\log q}]}$ with probability approaching one as $n$ increases, provided that 
\begin{equation}
\tau\leq \min _{\Omega\in \Lambda(\{i\})}\operatorname{rank}\left(H_{\Omega}(\{i\})\right).
\end{equation}
\end{theorem}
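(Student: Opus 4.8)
The plan is to reduce the asserted one-to-one correspondence to an invertibility statement and then settle that invertibility by an algebraic polynomial-identity argument. Since the index set has size $2^{n\tau\log q}=q^{n\tau}=|\mathbb{F}_q^{n\tau}|$ and the map $\mathbf{x}''_{t_i}\mapsto\mathbf{H}_{t_i}\mathbf{x}''_{t_i}$ is $\mathbb{F}_q$-linear, requiring it to be one-to-one on the entire index set is equivalent to requiring the square $n\tau\times n\tau$ matrix $\mathbf{H}_{t_i}$ to be nonsingular. Thus it suffices to show that, under i.i.d.\ uniform coding coefficients, $\Pr[\det\mathbf{H}_{t_i}\neq 0]\to 1$ as $n$ grows.

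First I would fix $i$, restrict attention to the subgraph reachable from $t_i$, and set $\mathbf{x}''_v=\mathbf{0}$ for all $v\neq t_i$, exactly as in the displayed derivation of $\mathbf{H}_{t_i}$. Each block $\mathbf{H}_{u,v}$ is a fixed (block-diagonal) matrix, while every entry of every coding matrix $\mathbf{F}_v$ is an independent uniform element of $\mathbb{F}_q$. Since $\mathbf{H}_{t_i}$ is built from finitely many additions and products of these matrices across the layers, each of its entries, and hence $\det\mathbf{H}_{t_i}$, is a multivariate polynomial $P$ over $\mathbb{F}_q$ in the entries of $\{\mathbf{F}_v\}$, whose degree is controlled by the number of edges and the number of layers of $G(\{i\})$ (each $\mathbf{F}_v$ contributes degree one per layer it occupies).

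The heart of the argument, and the step I expect to be the main obstacle, is to show that $P$ is not the identically zero polynomial. Here I would invoke the max-flow/min-cut theorem for linear deterministic networks: it is enough to exhibit one coefficient assignment $\{\mathbf{F}_v\}$ for which $\mathbf{H}_{t_i}$ has full rank. In the length-$n$ time-extended network the block-diagonal structure multiplies every cut rank by $n$, so the hypothesis $\tau\le\min_{\Omega\in\Lambda(\{i\})}\operatorname{rank}\!\left(H_{\Omega}(\{i\})\right)$ becomes the statement that every time-extended cut has rank at least $n\tau$. A Menger-type flow argument then routes $n\tau$ independent streams from $t_i$ to $d$ across every cut, which is precisely a choice of coefficients realizing $\operatorname{rank}(\mathbf{H}_{t_i})=n\tau$; hence $P\not\equiv 0$.

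With $P\not\equiv 0$ established, the conclusion follows from the Schwartz--Zippel lemma, which bounds $\Pr[\det\mathbf{H}_{t_i}=0]$ by $\deg(P)$ divided by the size of the coefficient alphabet. To drive this bound to zero I would exploit the time extension by viewing the $n$ coupled time slots as a single scalar symbol over the extension field $\mathbb{F}_{q^n}$ (realizing the per-edge coefficients through the regular representation of $\mathbb{F}_{q^n}$ as $n\times n$ blocks over $\mathbb{F}_q$). In this scalar picture the end-to-end transfer matrix is $\tau\times\tau$, so $\deg(P)$ depends only on the network size and not on $n$, while the effective alphabet has size $q^{n}$. Consequently $\Pr[\det\mathbf{H}_{t_i}=0]\le \deg(P)/q^{n}\to 0$ as $n\to\infty$, which yields the one-to-one correspondence with probability approaching one and completes the layered case of Lemma \ref{thm:transform}.
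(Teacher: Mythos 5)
The overall architecture you propose---reduce the one-to-one correspondence to nonsingularity of the square matrix $\mathbf{H}_{t_i}$, view $\det\mathbf{H}_{t_i}$ as a polynomial $P$ in the coding coefficients, and finish with Schwartz--Zippel over a field growing with $n$---is a legitimate alternative route. But the step you yourself flag as the heart of the argument, showing $P\not\equiv 0$, is where there is a genuine gap. In the paper's linear finite-field model the fixed matrices $h_{u,v}$ encode multiple-access (and, in general, broadcast) interactions, and exhibiting a single assignment of the $\mathbf{F}_v$ that achieves $\operatorname{rank}(\mathbf{H}_{t_i})=n\tau$ at $\tau$ \emph{equal} to $\min_{\Omega\in\Lambda(\{i\})}\operatorname{rank}(H_{\Omega}(\{i\}))$ is precisely the achievability content of the theorem being proved: ``the max-flow/min-cut theorem for linear deterministic networks'' \emph{is} Theorem 4.1 of \cite{Avestimehr:11}, so invoking it makes your argument circular, while a plain Menger-type routing of $n\tau$ disjoint streams only works for essentially wireline instances where each $h_{u,v}$ is an identity-like selector. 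For general transfer matrices, routing is insufficient and the existence of a full-rank witness is a nontrivial algebraic min-cut theorem in its own right (proved independently of random coding, e.g., by the deterministic unicast flow algorithm of Amaudruz and Fragouli, or via Schrijver-style linking systems as in Goemans, Iwata, and Zenklusen). If you cite such a result the step closes; as written, the proof assumes its own core.

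Two further points of comparison. First, your Schwartz--Zippel step silently changes the ensemble: the theorem hypothesizes that the entries of the $n$-extended matrices $\mathbf{F}_v$ are i.i.d.\ uniform over $\mathbb{F}_q$, whereas your $\mathbb{F}_{q^n}$-regular-representation device draws from a small structured sub-ensemble, so even granting the witness you prove a variant of the statement, not the statement itself. Ironically, your variant is the more defensible claim at exact equality $\tau=\min_{\Omega}\operatorname{rank}(H_{\Omega}(\{i\}))$: under the literal i.i.d.\ $\mathbb{F}_q$ ensemble the probability of exact full rank cannot tend to one (already a single uniform square matrix over $\mathbb{F}_q$ is singular with probability bounded away from zero as its size grows), and the cited source proves its Theorem 4.1 by an entirely different mechanism---a layered random-coding distinguishability bound with a union over cuts---at rates below the cut value. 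Second, note that the paper offers no proof of this statement beyond the citation to \cite{Avestimehr:11}, and its downstream uses (Corollary \ref{co:rank_condition} and Lemma \ref{thm:transform}) only require \emph{some} coding ensemble with vanishing failure probability; so your patched argument would serve the paper's purposes, but you should say explicitly that you are substituting the ensemble and why that substitution is harmless for the transformation lemma.
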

\begin{proof}
We refer to \cite[Theorem 4.1]{Avestimehr:11} for the proof.
\end{proof}

\begin{corollary} \label{co:rank_condition}
Suppose that the elements of $\mathbf{F}_v$ are chosen i.i.d. uniformly from $\mathbb{F}_q$ for all $v\in V$.
If $\tau\leq\min_{i\in [1:K]}\min _{\Omega\in \Lambda(\{i\})}\operatorname{rank}\left(H_{\Omega}(\{i\})\right)$, then
\begin{equation} \label{eq:rank_condition}
\operatorname{rank}(\mathbf{H}_{t_i})=n\tau
\end{equation}
for all $i\in [1:K]$ with probability approaching one as $n$ increases.
\end{corollary}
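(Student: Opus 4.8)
The plan is to reduce the joint full-rank statement for all $K$ senders to the single-sender guarantee of Theorem \ref{thm:ADT} by means of a union bound. First I would observe that the hypothesis $\tau\leq\min_{i\in[1:K]}\min_{\Omega\in\Lambda(\{i\})}\operatorname{rank}\left(H_{\Omega}(\{i\})\right)$ implies, for every fixed $i\in[1:K]$, the weaker per-sender bound $\tau\leq\min_{\Omega\in\Lambda(\{i\})}\operatorname{rank}\left(H_{\Omega}(\{i\})\right)$, simply because the left-hand side is a minimum taken over all senders and is therefore no larger than any individual term. Consequently the hypothesis of Theorem \ref{thm:ADT} is met separately for each $i$, and the theorem becomes applicable to every sender.

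Next I would connect the one-to-one correspondence delivered by Theorem \ref{thm:ADT} to the desired rank condition. Since $\mathbf{H}_{t_i}$ is a square $n\tau\times n\tau$ matrix over $\mathbb{F}_q$ and the map $\mathbf{x}''_{t_i}\mapsto\mathbf{H}_{t_i}\mathbf{x}''_{t_i}$ is linear, injectivity on the input space $\mathbb{F}_q^{n\tau}$ is equivalent to $\operatorname{rank}(\mathbf{H}_{t_i})=n\tau$. Because the codebook $\{\mathbf{x}''_{t_i}(w)\}_{w\in[1:2^{n\tau\log q}]}$ in Theorem \ref{thm:ADT} has cardinality $2^{n\tau\log q}=q^{n\tau}=|\mathbb{F}_q^{n\tau}|$, the existence of a one-to-one correspondence between inputs and outputs forces the linear map to be injective on the whole space, so $\mathbf{H}_{t_i}$ must have full rank $n\tau$. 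Thus, writing $A_i:=\{\operatorname{rank}(\mathbf{H}_{t_i})=n\tau\}$, Theorem \ref{thm:ADT} gives $\Pr[A_i^c]\to 0$ as $n$ increases, where the probability is over the common random choice of $\{\mathbf{F}_v\}_{v\in V}$.

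Finally I would combine the $K$ events by a union bound. The same random encoding matrices $\{\mathbf{F}_v\}$ generate all of the end-to-end matrices $\mathbf{H}_{t_1},\ldots,\mathbf{H}_{t_K}$, so the events $A_i$ live in a single probability space and need not be independent; the union bound, however, requires no independence and yields
\begin{equation}
\Pr\!\left[\bigcup_{i=1}^K A_i^c\right]\leq\sum_{i=1}^K\Pr[A_i^c].
\end{equation}
Since $K$ is a fixed constant independent of $n$ and each summand tends to zero by the previous step, the right-hand side tends to zero, whence $\Pr\!\left[\bigcap_{i=1}^K A_i\right]\to 1$ as $n$ increases. This is precisely the assertion that \eqref{eq:rank_condition} holds simultaneously for all $i\in[1:K]$ with probability approaching one.

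The argument is short, so the main obstacle is conceptual rather than computational. The single point demanding care is that the finite union bound does not degrade with $n$: this is guaranteed only because $K$ is fixed and does not scale with the block length, so a finite sum of vanishing failure probabilities still vanishes. The other ingredient, identifying ``one-to-one correspondence over a codebook of size $q^{n\tau}$'' with ``full rank of the square channel matrix $\mathbf{H}_{t_i}$,'' is elementary linear algebra over $\mathbb{F}_q$ and poses no real difficulty.
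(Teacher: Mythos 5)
Your proposal is correct and follows essentially the same route as the paper's own proof: the paper likewise invokes Theorem \ref{thm:ADT} per sender, deduces full rank from the counting argument that a matrix of rank less than $n\tau$ cannot produce $q^{n\tau}$ distinguishable outputs (you phrase this equivalently as injectivity of the linear map on $\mathbb{F}_q^{n\tau}$), and concludes with a union bound over the $K$ fixed senders. The only cosmetic difference is that the paper argues by contradiction where you argue directly; the substance is identical.
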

\begin{proof}
Assume that $\operatorname{rank}(\mathbf{H}_{t_i})<n\tau$.
Then, since the vector space spanned by $\mathbf{H}_{t_i}$ is $\mathbb{F}_q^{\operatorname{rank}(\mathbf{H}_{t_i})}$ space, it contains strictly less than $q^{n\tau}$ distinguishable vectors in $\mathbb{F}_q^{n\tau}$.
On the other hand, Theorem \ref{thm:ADT} shows that it is possible to have $2^{n\tau \log q}=q^{n\tau}$
distinguishable $\mathbf{y}'_d$'s with probability approaching one as $n$ increases, which 
contradicts the assumption.
In conclusion, from the union bound, \eqref{eq:rank_condition} holds for all $i\in [1:K]$ with probability approaching one as $n$ increases, which completes the proof.
\end{proof}

Based on Corollary \ref{co:rank_condition}, we set such that the elements of $\mathbf{F}_v$ are independently and uniformly chosen from $\mathbb{F}_q$ for all $v\in V$ and 
\begin{equation}
\tau=\min_{i\in [1:K]}\min _{\Omega\in \Lambda(\{i\})}\operatorname{rank}\left(H_{\Omega}(\{i\})\right),
\end{equation}
which guarantees the existence of $\mathbf{H}_{t_i}^{-1}$ for all $i\in [1:K]$ with probability approaching one as $n$ increases.
Then by setting $\mathbf{x}''_{t_i}=\mathbf{H}_{t_i}^{-1} \mathbf{x}'_{t_i}$, $\mathbf{x}'_{t_i}\in\mathbb{F}^{n\tau}_q$, for $n$ sufficiently large, we have 
\begin{equation}
\mathbf{y}'_d =\bigoplus_{i=1}^K \mathbf{x}'_{t_i}
\end{equation}
from \eqref{eq:equi_ch}. 
In conclusion, Lemma \ref{thm:transform} holds for layered networks.

\subsection{Arbitrary Networks} \label{subsec:arbitrary}
In this subsection, we prove Lemma \ref{thm:transform} for a general linear finite field network (not necessarily layered).
%Theorem \ref{thm:transform} again holds for arbitrary networks.
%Again, we can covert the network consisting of modulo-sum components as in \eqref{eq:modulo_sum_ch}.
%We still denote the transformed network as $G=(V,E)$ with the associated channel \eqref{eq:modulo_sum_ch} instead of \eqref{eq:in_out}.
We can unfold the network $G$ over time to establish the corresponding layered network.
The underlying approach is similar to that proposed in \cite[Section V. B]{Avestimehr:11}. 
Let
\begin{align}
\bar{C}_{\min}&:=\min_{i\in [1:K]}\min _{\Omega\in \Lambda(\{i\})}\operatorname{rank}\left(H_{\Omega}(\{i\})\right)\log q
\end{align}
Define the $T$ time-steps unfolded network $G_{\sf TU}=(V_{\sf TU}, E_{\sf TU})$ as follows.
\begin{itemize}
\item The network has $T+2$ stages, numbered from $0$ to $T+1$.
\item Stage $0$ has  the senders $t_1[0]$ to $t_K[0]$, which are the senders, and stage $T+1$ has  node $d[T+1]$, which is the receiver.
\item Stage $j$ has all nodes $v\in V$ denoted by $v[j]$, where $j\in[1:T]$. These nodes will act as relay nodes.  
\item There are finite-capacity links with the rate of $T\bar{C}_{\min}$ between
\begin{itemize}
\item $(t_i[0],t_i[1])$ for all $i\in[1:K]$ and $(d[T],d[T+1])$.
\item $(v[j],v[j+1])$ for all $v\in V$ and $j\in[1:T]$.
\end{itemize}
\item Node $v[j]$ is connected to node $w[j+1]$ with the linear finite field channel of the original network $G$ for all $(v,w)\in E, v\neq w$.
\end{itemize}

The length-$n$ time-extended transmit signal of $v[j], j\in[0:T]$, is given by the pair of $\left(\mathbf{x}^{(1)}_{v[j]},\mathbf{x}^{(2)}_{v[j]}\right)$, where $\mathbf{x}^{(1)}_{v[j]}\in \mathbb{F}^{nT\bar{C}_{\min}/\log q}_q$ and $\mathbf{x}^{(2)}_{v[j]}\in \mathbb{F}^{n\alpha_{v[j]}}_q$.
The  length-$n$ time-extended received signal of $v[1]\in \Gamma_{out}(t_i[0])$, $i\in[1:K]$, is given by $\mathbf{y}^{(1)}_{v[1]}=\mathbf{x}^{(1)}_{v[0]}$.
The length-$n$ time-extended received signal of $v[j], j\in[2:T]$, is given by $\left(\mathbf{y}^{(1)}_{v[j]},\mathbf{y}^{(2)}_{v[j]}\right)$, where $\mathbf{y}^{(1)}_{v[j]}=\mathbf{x}^{(1)}_{v[j-1]}$ and $\mathbf{y}^{(2)}_{v[j]}=\bigoplus_{v[j-1]\in \Gamma(v[j])}\mathbf{H}_{v[j-1],v[j]}\mathbf{x}^{(2)}_{v[j-1]}$ (see the input--output relation \eqref{eq:input_output_field2}). 
The length-$n$ time-extended received signal of $v[T+1]$ is given by $\mathbf{y}^{(1)}_{v[T+1]}=\mathbf{x}^{(1)}_{v[T]}$.
For other unspecified received signals, they receive all-zero vectors. 
For a better understanding, Fig. \ref{figs:ex_unfold} illustrates an example of an $T$ time-steps unfolded network.

\begin{figure}[t!]
\begin{center}
%\footnotesize
%\psfrag{m}[b]{$M$}
\includegraphics[scale=1]{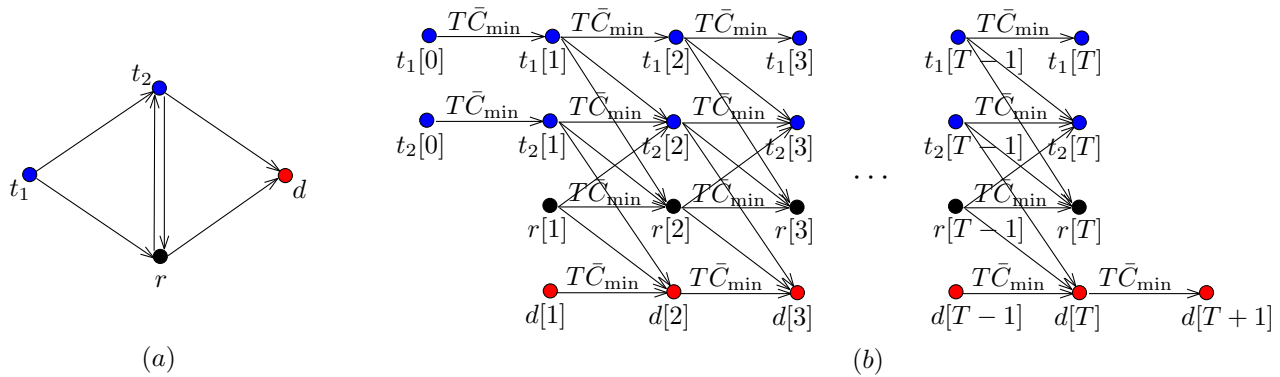}
\end{center}
\vspace{-0.15in}
\caption{Unfolded network example, where $(a)$ is the original network and $(b)$ is the corresponding unfolded network.}
\label{figs:ex_unfold}
\vspace{-0.1in}
\end{figure}

Since this unfolded network $G_{\sf TU}$ is layered, we can apply the same linear coding as in Appendix II. A to $G_{\sf TU}$. Specifically,
\begin{itemize}
\item (Sender Encoding) Node $v[0]$ transmits $\mathbf{x}^{(1)}_{v[0]}=\mathbf{F}^{(1)}_{v[0]}\mathbf{x}''_{v[0]}$ and $\mathbf{x}^{(2)}_{v[0]}=\mathbf{F}^{(2)}_{v[0]}\mathbf{x}''_{v[0]}$, where $\mathbf{F}^{(1)}_{v[0]}\in\mathbb{F}_q^{nT\bar{C}_{\min}/\log q \times n\tau}$, $\mathbf{F}^{(2)}_{v[0]}\in\mathbb{F}_q^{n\alpha_{v[0]} \times n\tau}$,  and $\mathbf{x}''_{v[0]}\in \mathbb{F}_q^{n\tau}$. We will specify $\tau$ later.
\item (Relay Encoding) Node $v[j]$, $j\in[1:T]$, transmits $\mathbf{x}^{(1)}_{v[j]}=\mathbf{F}^{(1)}_{v[j]}\left[{\mathbf{y}^{(1)}_{v[j]}}^T,{\mathbf{y}^{(2)}_{v[j]}}^T\right]^T$ and $\mathbf{x}^{(2)}_{v[j]}=\mathbf{F}^{(2)}_{v[j]}\left[{\mathbf{y}^{(1)}_{v[j]}}^T,{\mathbf{y}^{(2)}_{v[j]}}^T\right]^T$, where $\mathbf{F}^{(1)}_{v[j]}\in\mathbb{F}_q^{nT\bar{C}_{\min}/\log q \times n(T\bar{C}_{\min}/\log q+\beta_{v[j]})}$ and $\mathbf{F}^{(2)}_{v[j]}\in\mathbb{F}_q^{n\alpha_{v[j]} \times n(T\bar{C}_{\min}/\log q+\beta_{v[j]})}$.
\end{itemize} 
Then the receiver generates $\mathbf{y}'_{v[T+1]}=\mathbf{F}^{(1)}_{v[T+1]}\mathbf{y}^{(1)}_{v[T+1]}$, where $\mathbf{F}^{(1)}_{v[T+1]}\in\mathbb{F}_q^{n\tau\times nT\bar{C}_{\min}/\log q}$.

Similar to \eqref{eq:equi_ch}, the input--output from $\{\mathbf{x}''_{v[0]}\}_{v[0]\in V[0]}$ to $\mathbf{y}'_{v[T+1]}$ can be represented as 
\begin{equation} \label{equi_ch_arb}
\mathbf{y}'_{v[T+1]}=\bigoplus_{v[0]\in V[0]}\mathbf{H}_{v[0]} \mathbf{x}''_{v[0]},
\end{equation}
where $V[0]$ denotes the set of nodes at stage $0$, which is the set of senders, and $v[T+1]$ is the node at state $T+1$, which is the receiver.
Here $\mathbf{H}_{v[0]}$ is the end-to-end channel matrix from $\mathbf{x}''_{v[0]}$ to $\mathbf{y}'_{v[T+1]}$.

Now consider the minimum-cut value of $G_{\sf TU}$ with respect to a subset of nodes in $V[0]$.
In the same manner in Section \ref{subset:cut_set}, for $\Sigma\subseteq V[0]$,  we can define $G_{\sf TU}(\Sigma)$ and $\Lambda_{\sf TU}(\Sigma)$.
Then the minimum-cut value is given by
\begin{align} 
\bar{C}_{\sf TU}(\Sigma):=\min_{\Omega\in \Lambda_{\sf TU}(\Sigma)}\operatorname{rank}\left(H^{\sf TU}_{\Omega}(\Sigma)\right)\log q,
\end{align}
where $H^{\sf TU}_{\Omega}(\Sigma)$ denotes the transfer matrix associated with the cut $\Omega\in\Lambda_{\sf TU}(\Sigma)$ on $G_{\sf TU}(\Sigma)$.

Hence, from Theorem \ref{thm:ADT} and Corollary \ref{co:rank_condition}, by setting the elements of $\mathbf{F}^{(1)}_{v[j]}$ and $\mathbf{F}^{(2)}_{v[j]}$ i.i.d. drawn uniformly from $\mathbb{F}_q$, we can guarantee that
\begin{align}
\operatorname{rank}(\mathbf{H}_{v[0]})=n\tau
\end{align}
for all $v[0]\in V[0]$ with probability approaching one as $n$ increases if 
\begin{equation} \label{eq:tau_arb}
\tau\leq \min_{v[0]\in V[0]}\frac{\bar{C}_{\sf TU}(\{v[0]\})}{\log q}.
\end{equation}

For $v[0]\in V[0]$, the minimum-cut value is lower bounded by
\begin{align} \label{eq:min_cut_arb}
\bar{C}_{\sf TU}(\{v[0]\})&=\min_{\Omega\in \Lambda_{\sf TU}(\{v[0]\})}\operatorname{rank}\left(H^{\sf TU}_{\Omega}(\{v[0]\})\right)\log q\nonumber\\
&\geq (T-|V|)\min_{\Omega\in\Lambda(\{i\})}\operatorname{rank} (H_{\Omega}(\{i\}))\log q,
\end{align}
where $t_i=v[0]$. The inequality follows from the same analysis in \cite[Lemma 5.2]{Avestimehr:11}.

From \eqref{eq:tau_arb} and \eqref{eq:min_cut_arb}, we set 
\begin{equation}
\tau= (T-|V|)\min_{i\in[1:K]}\min_{\Omega\in\Lambda(\{i\})}\operatorname{rank} (H_{\Omega}(\{i\}))
\end{equation}
and the elements of $\mathbf{F}^{(1)}_{v[j]}$ and $\mathbf{F}^{(2)}_{v[j]}$ i.i.d. drawn uniformly from $\mathbb{F}_q$. 
This guarantees the existence of $\mathbf{H}_{v[0]}^{-1}$ for all $v[0]\in V[0]$ with probability approaching one as $n$ increases.
Therefore from \eqref{equi_ch_arb}, setting $\mathbf{x}''_{v[0]}=\mathbf{H}^{-1}_{v[0]}\mathbf{x}'_{v[0]}$, $\mathbf{x}'_{v[0]}\in\mathbb{F}^{n\tau}_q$, provides
\begin{equation} \label{eq:final_in_out}
\mathbf{y}^{(1)}_{v[T+1]}=\bigoplus_{v[0]\in V[0]} \mathbf{x}'_{v[0]}.
\end{equation}
%Since $V[0]$ is the set of senders and $v[T+1]$ is the receiver, we can rewrite the above input--output as $\mathbf{y}'_d=\bigoplus_{i=1}^K \mathbf{x}'_{t_i}$, where $\mathbf{x}'_{t_i}\in\mathbb{F}^{n\tau}_q$ for all $i\in[1:K]$.
Finally, since any coding scheme for the $T$ time-steps unfolded network $G_{\sf TU}$ can be performed in the original network $G$ using $nT$ time slots, see the argument in \cite[Lemma 5.1]{Avestimehr:11}, we have 
\begin{equation} \label{eq:final_in_out2}
\mathbf{y}'_d=\bigoplus_{i=1}^K \mathbf{x}'_{t_i},
\end{equation}
where $\mathbf{x}'_{t_i}\in\mathbb{F}^{n\tau}_q$ for all $i\in[1:K]$ using $nT$ time slots for the original network $G$.
Here, we simply rewrite $\{\mathbf{x}'_{t_i}\}_{i\in [1:K]}=\{\mathbf{x}'_{v[0]}\}_{v[0]\in V[0]}$ and $\mathbf{y}'_d=\mathbf{y}^{(1)}_{v[T+1]}$ from \eqref{eq:final_in_out} since $V[0]$ is the set of senders and $v[T+1]$ is the receiver.
Then using $n$ time slots, we  have \eqref{eq:final_in_out2} with $\mathbf{x}'_{t_i}\in\mathbb{F}^{n\tau/T}_q$.
From the fact that $\lim_{T\to\infty}\frac{\tau}{T}\to \min_{i\in[1:K]}\min_{\Omega\in\Lambda(\{i\})}\operatorname{rank} (H_{\Omega}(\{i\}))$, Lemma \ref{thm:transform} holds for any arbitrary networks.

% Generated by IEEEtran.bst, version: 1.12 (2007/01/11)

\end{document}